\renewcommand{\tilde}[1]{\widetilde{#1}}
\newcommand{\emphMath}[1]{\boldsymbol{\mathbf{#1}}}
\def\@endtheorem{\endtrivlist}
\newtheorem{theorem}{Theorem}
\newtheorem{lemma}[theorem]{Lemma}
\newtheorem{proposition}[theorem]{Proposition}
\newtheorem{corollary}[theorem]{Corollary}
\newtheorem{example}{Example}
\newcommand{\ph}{\phantom{0}}
\newcommand{\sep}{\unskip,\ }
\newcommand{\orcidID}[1]{
    \unskip\hspace{0.1em}\href{https://orcid.org/#1}{\raisebox{-0.12\height}{\includegraphics[height=0.8em]{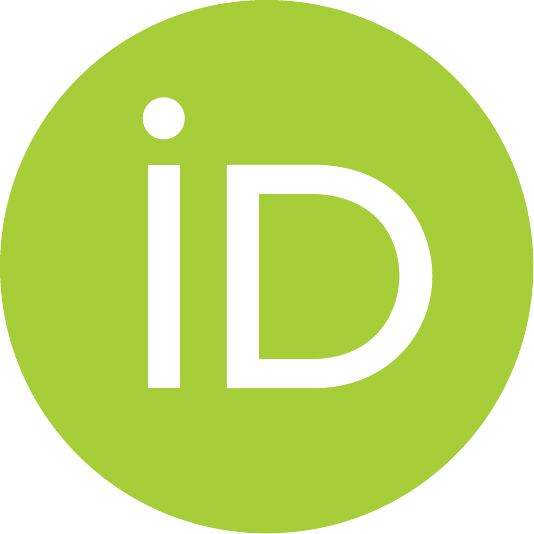}}}
}
\begin{document}
{
    \renewcommand\thefootnote{}
    \setlength{\footmarkwidth}{0em}
    \setlength{\footmarksep}{0em}
    \footnote{This manuscript version is made available under the \href{https://creativecommons.org/licenses/by-nc-nd/4.0/}{CC-BY-NC-ND 4.0 license}. The published version can be found at \url{https://doi.org/10.1016/j.ffa.2020.101742}}
    \addtocounter{footnote}{-1}
}
\begin{center}
  {\Large\fontfamily{ppl}\fontseries{b}\selectfont On nested code pairs from\\\smallskip the Hermitian curve}\\[1.5em]
  {René Bødker Christensen\orcidID{0000-0002-9209-3739}and Olav Geil\orcidID{0000-0002-9666-3399}}
  
  \bigskip{Department of Mathematical Sciences,
      Aalborg University, Denmark.}
  
  {\texttt{\{rene,olav\}@math.aau.dk}}

  \vglue 2em
  {\small\textbf{Abstract}}\\[0.5em]
  \begin{minipage}{0.9\linewidth}
    \rule{\linewidth}{0.5pt}
    \footnotesize Nested code pairs play a crucial role in the construction of ramp secret sharing
    schemes~\cite{kurihara} and in the CSS construction of quantum codes~\cite{kkk}.
    The important parameters are (1) the codimension, (2) the
    relative minimum distance of the codes, and (3) the relative minimum distance of the
    dual set of codes. Given values for two of them, one aims at finding a
    set of nested codes having parameters with these values and with the
    remaining parameter being as large as possible. In this work we study nested codes from the Hermitian
    curve. For not too small codimension, we present
    improved constructions and provide closed formula
    estimates on their performance. For small codimension we show how to choose pairs of
    one-point algebraic geometric codes in such a way that one of the relative minimum distances is
    larger than the corresponding non-relative minimum distance.

    \bigskip\emph{Keywords:} Algebraic geometric code \sep Asymmetric quantum code \sep Hermitian curve
    \sep  ramp secret sharing \sep relative minimum distance

    \medskip\emph{2000 MSC:} 94B27\sep  94A62 \sep 81Q99\\[-5pt]
    \rule{\linewidth}{0.5pt}
  \end{minipage}
\end{center}

\section{Introduction}
In this paper we study improved constructions of nested code pairs
from the Hermitian curve. Here the phrase `improved
    construction' refers to optimizing those parameters important for the corresponding linear ramp secret sharing schemes
as well as stabilizer asymmetric quantum codes. Our work is a natural
continuation of~\cite{8048519}, where improved constructions of nested
code pairs were defined from Cartesian product point sets. The
analysis in the present paper includes a closed formula estimate on the
dimension of order bound improved Hermitian codes, which is of interest
in its own right, i.e.\ also without the above mentioned applications.

A linear ramp secret sharing scheme is a cryptographic method to
encode a secret message in ${\mathbb{F}}_q^\ell$ 
into $n$ shares from ${\mathbb{F}}_q$. These shares are then
distributed
among a group of $n$ parties and only specified subgroups are able to
reconstruct the secret. A secret sharing scheme is characterized by
its privacy number $t$ and its reconstruction number $r$. The first is
defined as the largest number such that no subgroup of this size can
obtain any information on the secret. The second is defined to be the
smallest number such that any subgroup of this size can reconstruct
the entire secret. A linear ramp secret sharing scheme can be
understood as the following coset construction. Consider linear codes $C_2\subset C_1
\subseteq {\mathbb{F}}_q^n$. Let $\{\vec{b}_1, \ldots ,
\vec{b}_{k_2}\}$ be a basis for $C_2$ and extend it to a basis
$\{\vec{b}_1, \ldots , \vec{b}_{k_2}, \vec{b}_{k_2+1}, \ldots ,
\vec{b}_{k_2+\ell}\}$ for $C_1$. Here, of course, $\ell$ is the
codimension of $C_1$ and $C_2$. Choose elements $a_1,\ldots , a_{k_2}$
uniformly and independent at random and encode the secret $\vec{s}=(s_1, \ldots ,
s_\ell)$ as the codeword $\vec{c}=(c_1, \ldots , c_n)=a_1\vec{b}_1 + \cdots
+a_{k_2}\vec{b}_{k_2}+s_1\vec{b}_{k_2+1}+\cdots +s_\ell
\vec{b}_{k_2+\ell}$. Then use $c_1, \ldots , c_n$ as the
shares.
The crucial parameters for the
construction are the codimension of the pair of nested codes and
their relative minimum distances $d(C_1,C_2)$ and $d(C_2^\perp,
C_1^\perp)$. Recall that these are defined as
\begin{align*}
d(C_1,C_2)=\min \{ w_H(\vec{c}) \mid \vec{c} \in C_1 \backslash C_2\}
\end{align*}
and similar for the dual codes.
The following well-known theorem (see for instance~\cite{kurihara}) shows how to determine the privacy
number and the reconstruction number.
\begin{theorem}\label{thesec}  
Given ${\mathbb{F}}_q$-linear codes $C_2 \subset C_1$ of length $n$
and codimension
$\ell$, the corresponding ramp secret sharing scheme encodes secrets
$\vec{s} \in {\mathbb{F}}_q^\ell$ into a set of $n$ shares from
${\mathbb{F}}_q$. The privacy number equals $t=d(C_2^\perp,
C_1^\perp)-1$, and the reconstruction number is $r=n-d(C_1,C_2)+1$.
\end{theorem}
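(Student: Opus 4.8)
The plan is to translate both the reconstruction number and the privacy number into statements about the supports of codewords, and then to read off the two relative minimum distances directly. Throughout, for a set of coordinates $A \subseteq \{1,\ldots,n\}$ I write $\pi_A$ for the projection of $\mathbb{F}_q^n$ onto the coordinates indexed by $A$, and I record the basic observation that the secret $\vec{s}$ is in bijection with the coset $\vec{c}+C_2 \in C_1/C_2$: indeed $\{\vec{b}_1,\ldots,\vec{b}_{k_2}\}$ spans $C_2$, while the remaining basis vectors encode the secret, so knowing $\vec{c}+C_2$ is the same as knowing $\vec{s}$.

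For the reconstruction number I would first show that a group $A$ reconstructs the secret precisely when the shares it observes determine the coset $\vec{c}+C_2$, i.e.\ when no two codewords $\vec{c},\vec{c}' \in C_1$ from distinct cosets of $C_2$ agree on all coordinates in $A$. Writing $\vec{e}=\vec{c}-\vec{c}'$, this says that reconstruction fails for $A$ exactly when some $\vec{e}\in C_1\setminus C_2$ is supported on the complement $A^c$. Such an $\vec{e}$ with support inside a set of size $n-|A|$ exists if and only if $d(C_1,C_2)\le n-|A|$. Hence every group of size $m$ reconstructs if and only if $m\ge n-d(C_1,C_2)+1$, and the smallest such $m$ gives $r=n-d(C_1,C_2)+1$.

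For the privacy number the argument is dual, and this is where I expect the main obstacle. Since $\vec{c}=\vec{v}_{\vec s}+\vec{u}$ with $\vec{v}_{\vec s}$ a fixed representative of the secret's coset and $\vec{u}$ uniform over $C_2$, the shares $\pi_A(\vec{c})$ seen by $A$ are uniformly distributed over the coset $\pi_A(\vec{v}_{\vec s})+\pi_A(C_2)$. Thus $A$ gains no information about $\vec{s}$ if and only if this distribution is the same for every secret, which happens exactly when $\pi_A(\vec{v}_{\vec s})\in\pi_A(C_2)$ for all $\vec{s}$, i.e.\ when $\pi_A(C_1)=\pi_A(C_2)$. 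The delicate step is to convert this equality of projections into a support condition on the dual codes. Here I would invoke the duality between puncturing and shortening, namely that $\pi_A(C)^\perp$, taken inside $\mathbb{F}_q^A$, equals the shortening of $C^\perp$ to $A$; taking complements turns $\pi_A(C_1)=\pi_A(C_2)$ into the statement that the shortenings of $C_1^\perp$ and $C_2^\perp$ to $A$ coincide, which, because a vector supported on $A$ lies in $C_1^\perp$ iff its restriction lies in the shortening, says exactly that every codeword of $C_2^\perp$ supported on $A$ already lies in $C_1^\perp$. Consequently $A$ leaks information if and only if some codeword of $C_2^\perp\setminus C_1^\perp$ is supported on $A$.

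With this characterisation the conclusion mirrors the reconstruction case: a leaking group of size $m$ exists if and only if $d(C_2^\perp,C_1^\perp)\le m$, so the largest $m$ for which no group of that size leaks is $t=d(C_2^\perp,C_1^\perp)-1$. The two stated formulas then follow.
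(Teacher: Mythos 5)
Your proof is correct. Note, however, that the paper offers no proof of this statement at all: it is presented as a well-known result with a pointer to the reference \cite{kurihara}, so there is no in-paper argument to compare against. What you have written is essentially the standard proof from the secret-sharing literature, carried out correctly. The reconstruction half is the easy direction, and you handle the one point that needs care -- the quantification over \emph{all} groups of size $m$, which is what turns ``some $\vec{e}\in C_1\setminus C_2$ supported on $A^c$'' into the clean condition $d(C_1,C_2)\le n-m$ -- without slipping. The privacy half is where a blind attempt usually founders, and your route through the puncturing/shortening duality $\bigl(\pi_A(C)\bigr)^\perp = \{\pi_A(\vec{x}) \mid \vec{x}\in C^\perp,\ \mathrm{supp}(\vec{x})\subseteq A\}$ is exactly the right tool: it converts $\pi_A(C_1)=\pi_A(C_2)$ into ``every word of $C_2^\perp$ supported on $A$ already lies in $C_1^\perp$'', and the final step (two vectors supported on $A$ that agree on $A$ are equal) closes the argument. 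One small point worth making explicit if you write this up: ``gains no information'' should be pinned down as the conditional distribution of the observed shares being independent of the secret (equivalently, zero mutual information for any prior with full support); your identification of that distribution as the uniform distribution on the coset $\pi_A(\vec{v}_{\vec{s}})+\pi_A(C_2)$ is what makes the equivalence precise.
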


A linear $q$-ary asymmetric quantum error-correcting code is a $q^k$
dimensional subspace of the Hilbert space ${\mathbb{C}}^{q^n}$ where
error bases are defined by unitary operators $Z$ and $X$, the first
representing phase-shift errors, and the second representing bit-flip
errors~\cite{q7,steane1996simple,kkk}. In~\cite{aq7} it was identified that in some realistic models
phase-shift errors occur  more frequently than bit-flip errors,
and the asymmetric codes were therefore introduced~\cite{aq7,aq6,ezerman2013css,aq2,MR2683447} to balance
the error correcting ability accordingly. For such codes we write the
set of parameters as $[[n,k,d_z/d_x]]_q$ where $d_z$ is the minimum
distance related to phase-shift errors and $d_x$ is the minimum
distance related to bit-flip errors.
The CSS construction transforms a
pair of nested classical linear codes $C_2 \subset C_1 \subseteq {\mathbb{F}}_q^n$
into an asymmetric quantum code. From~\cite{aq6} we have
\begin{theorem}\label{thm:css}
Consider linear codes $C_2 \subset C_1 \subseteq
{\mathbb{F}}_q^n$. Then the corresponding asymmetric quantum code defined using the
CSS construction has parameters
\begin{align*}
[[n,\ell=\dim C_1-\dim C_2,d_z/d_x]]_q
\end{align*}
where $d_z=d(C_1,C_2)$ and $d_x=d(C_2^\perp, C_1^\perp)$.
\end{theorem}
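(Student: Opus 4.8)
The plan is to verify the three groups of parameters separately---block length, dimension, and the two relative distances---by unwinding the CSS construction at the level of its stabiliser group. First I would fix the error model: on $\mathbb{C}^{q^n}$ let $\omega$ be a primitive $q$-th root of unity, let $\langle\cdot,\cdot\rangle$ denote the standard bilinear form on $\mathbb{F}_q^n$ defining the duals, and let the phase-shift operators act by $Z_b|x\rangle = \omega^{\langle b,x\rangle}|x\rangle$ and the bit-flip operators by $X_a|x\rangle = |x+a\rangle$, for $a,b\in\mathbb{F}_q^n$. The CSS code associated with $C_2\subset C_1$ is then the joint $+1$-eigenspace of the stabiliser $S=\langle Z_a : a\in C_2\rangle\cdot\langle X_b : b\in C_1^\perp\rangle$, which is commutative because $a\in C_2\subseteq C_1$ and $b\in C_1^\perp$ force $\langle a,b\rangle=0$. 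Solving the two eigenvalue constraints---support inside $C_2^\perp$ from the $Z$-generators, and invariance under $C_1^\perp$-shifts from the $X$-generators---shows the code is spanned by the orthonormal coset states $|C_1^\perp|^{-1/2}\sum_{v\in C_1^\perp}|u+v\rangle$ with $u$ ranging over $C_2^\perp$. These are indexed by $C_2^\perp/C_1^\perp$, so the block length is $n$ and the code encodes $\dim C_1-\dim C_2=\ell$ logical symbols (Hilbert-space dimension $q^\ell$), settling the first two parameters.

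For the distances I would invoke the standard characterisation that a stabiliser code fails to detect precisely the errors lying in $N(S)\setminus S$, and---as is appropriate for an asymmetric code---treat the pure phase-type and pure bit-flip-type errors separately, the assignment of the two distances being exactly the split of $S$ into its pure-$Z$ part (indexed by $C_2$) and its pure-$X$ part (indexed by $C_1^\perp$). A phase error $Z_c$ commutes with every $X$-generator iff $\langle c,b\rangle=0$ for all $b\in C_1^\perp$, i.e.\ $c\in C_1$, and it lies in $S$ iff $c\in C_2$; hence the undetectable phase errors are exactly $\{Z_c : c\in C_1\setminus C_2\}$, whose minimum Hamming weight is by definition $d(C_1,C_2)$, giving $d_z=d(C_1,C_2)$. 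Dually, a bit-flip error $X_d$ commutes with every $Z$-generator iff $d\in C_2^\perp$ and is trivial iff $d\in C_1^\perp$, so the undetectable bit-flips are $\{X_d : d\in C_2^\perp\setminus C_1^\perp\}$, of minimum weight $d(C_2^\perp,C_1^\perp)=d_x$.

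The one point requiring care---and where I expect the real content to sit---is the reduction from the full Knill--Laflamme conditions to the two purely classical weight-minimisation problems above. Because the stabiliser factors as a product of a pure-$Z$ part and a pure-$X$ part, a general error $X_aZ_b$ becomes detectable as soon as either its $Z$-component violates a $C_2$-constraint or its $X$-component violates a $C_1^\perp$-constraint; I would make this clean by passing to the Fourier (Hadamard) basis, under which $X$ and $Z$ are interchanged and the nesting $C_1^\perp\subset C_2^\perp$ replaces $C_2\subset C_1$, so that the $d_x$ computation is a verbatim dual of the $d_z$ computation. The remaining step is simply to observe that ``detecting all errors of a given type below weight $d$'' is precisely what the asymmetric parameters $d_z$ and $d_x$ record, so that combining the two minimum-weight bounds yields the stated parameters $[[n,\ell,d_z/d_x]]_q$.
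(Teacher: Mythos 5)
The paper does not prove this theorem at all: it is imported verbatim from the reference \cite{aq6} (Sarvepalli--Klappenecker--R\"otteler), so there is no internal proof to compare against. Your argument is the standard stabiliser-formalism derivation of the CSS parameters, and it is essentially sound: the coset-state description gives length $n$ and dimension $q^{\ell}$ since the states are indexed by $C_2^\perp/C_1^\perp$, and the identification of the undetectable pure-$Z$ errors with $\{Z_c : c\in C_1\setminus C_2\}$ and the undetectable pure-$X$ errors with $\{X_d : d\in C_2^\perp\setminus C_1^\perp\}$ yields exactly $d_z=d(C_1,C_2)$ and $d_x=d(C_2^\perp,C_1^\perp)$; your remark that the asymmetric parameters are \emph{defined} via the two error types separately correctly disposes of the mixed-error issue. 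The one technical point you should repair is the definition of the error basis over a general prime power $q=p^m$: the expression $\omega^{\langle b,x\rangle}$ with $\omega$ a primitive $q$-th root of unity only parses when $q$ is prime, since $\langle b,x\rangle$ lives in $\mathbb{F}_q$ rather than $\mathbb{Z}/q\mathbb{Z}$. For general $q$ one takes $\omega$ a $p$-th root of unity and sets $Z_b|x\rangle=\omega^{\operatorname{tr}_{\mathbb{F}_q/\mathbb{F}_p}(\langle b,x\rangle)}|x\rangle$; the nondegeneracy of the trace form, together with the $\mathbb{F}_q$-linearity of $C_2$ and $C_1^\perp$ (so that the stabiliser contains $Z_{\alpha a}$ and $X_{\alpha b}$ for all $\alpha\in\mathbb{F}_q$), then recovers the clean conditions $x\in C_2^\perp$ and $c\in C_1$ that your argument relies on. With that adjustment the proof is complete and matches the cited result.
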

Quantum codes with $d(C_1,C_2) > d(C_1)$ or $d(C_2^\perp, C_1^\perp) >
d(C_2^\perp)$ are called impure, and they are desirable due to the fact
that one can take advantage of this property in connection with the
error-correction. More precisely, one can tolerate 
$\lfloor (d(C_1,C_2)-1)/2\rfloor$ phase-shift errors and $\lfloor
(d(C_2^\perp , C_1^\perp)-1)/2\rfloor$ bit-flip errors, respectively,
but in the decoding algorithms it is only necessary to correct up to $\lfloor (d(C_1)-1)/2\rfloor$ and
$\lfloor (d(C_2^\perp)-1)/2\rfloor$ errors, respectively. Despite this observation, only few impure codes have been presented in the literature.

With the above two applications in mind, the challenge is to find nested codes $C_2\subset C_1$ such that two of the parameters $\ell$, $d(C_1,C_2)$, $d(C_2^\perp, C_1^\perp)$ attain given prescribed values, and the remaining parameter is as large as possible.
In this paper we analyse two good
constructions from the Hermitian function field. In the first we consider code pairs such that $C_1$ is an
order bound improved primary code~\cite{AG,MR2831617} and such that $C_2$ is the dual of an order
bound improved dual code~\cite{handbook}. Considering in this case the minimum
distances rather than the relative distances is no restriction due to
the optimized choice of codes -- the minimum distances $d(C_1)$ and
$d(C_2^\perp)$ being so good that there is essentially no room for
$d(C_1,C_2) > d(C_1)$ or $d(C_2^\perp , C_1^\perp) > d(C_2^\perp)$ to hold. For this construction to work, the codimension
cannot be very small. For small codimension when $d(C_1)$ and $d(C_2^\perp)$
are far from each other we then show how to choose
ordinary one-point algebraic geometric codes such that one of the relative
distances becomes larger than the corresponding ordinary minimum
distance. In particular, this construction leads to impure
asymmetric quantum codes. 

The paper is organized as follows. In Section~\ref{sec2} we collect
material from the literature on how to determine parameters of primary
and dual codes
coming from the Hermitian curve, and we introduce the order bound improved
codes\footnote{This section also contains a collective treatment of
  the order bounds 
for general primary as well as dual (improved) one-point algebraic geometric
codes which may not be easy to find in the literature.}. In Section~\ref{sec:dimImprovedCodes} we establish closed formula lower bounds
on the dimension of order bound improved Hermitian codes of any
designed minimum distance. We then continue in Section~\ref{sec:inclusion} by
determining the pairs $(\delta_1$, $\delta_2) \in \{1, \ldots , n\}
\times \{1, \ldots , n\}$ for which the order
bound improved primary code $C_1$ of designed distance $\delta_1$ contains $C_2$, the dual of an order bound
improved dual code of designed distance $\delta_2$. This and the information from
Section~\ref{sec:dimImprovedCodes} is then translated into information on improved
nested code pairs of not too small codimension in Section~\ref{sec4.5}. Next, in
Section~\ref{sec:constr-with-relat} we determine parameters of nested one-point
algebraic geometric code pairs of
small codimension for which one of the relative distances is larger than
the non-relative.
Finally, in Section~\ref{sec:comp-with-exist} samples of the given
constructions are compared with known asymmetric quantum codes,
with existence bounds on asymmetric quantum codes, and with
non-existence bounds on linear ramp secret sharing
schemes. Section~\ref{sec:concl} is the conclusion.
\section{Hermitian codes and their
  parameters}\label{sec2}
Given an algebraic function field over a finite field, let $=P_1,
\ldots , P_n,Q$ be rational places. By $H(Q)$ we denote the
Weierstrass semigroup of $Q$, and we write
\begin{equation*}
H^\ast(Q)=\{\lambda \in H(Q) \mid C_{\mathcal{L}}(D,\lambda Q) \neq
C_{\mathcal{L}}(D,(\lambda -1)Q)\}
\end{equation*}
where $D=P_1+ \cdots +P_n$. Recall that the dual code of
$C_{\mathcal{L}}(D,\lambda Q)$ is written $C_\Omega(D,\lambda Q)$. The
order bound~\cite{handbook,MR2588125} then tells us that if
\begin{equation*}
  \vec{c} \in C_\Omega (D,(\lambda-1) Q)\, \backslash \,
  C_\Omega(D, \lambda Q)
\end{equation*}
(which can only happen if $\lambda \in H^\ast(Q)$),
then the Hamming weight of $\vec{c}$ satisfies 
\begin{equation}
w_H(\vec{c}) \geq \mu
(\lambda) \label{eqduord}
\end{equation}
where
\begin{equation*}
  \mu(\lambda) = \# \{ \eta \in H(Q) \mid \lambda-  \eta \in H(Q)\}.
\end{equation*}
The similar bound for the primary case~\cite{normtrace,AG,MR2831617} tells us that if
\begin{equation*}
  \vec{c} \in C_{\mathcal{L}}(D,\lambda Q) \, \backslash \,
  C_{\mathcal{L}}(D, (\lambda-1)Q),
\end{equation*}
then 
\begin{equation}
  w_H(\vec{c}) \geq \sigma (\lambda)\label{eqprord}
\end{equation}
where
\begin{equation*}
  \sigma (\lambda)=\# \{ \eta \in H^\ast (Q) \mid \eta - \lambda \in
H(Q)\}.
\end{equation*}
Besides implying that 
\begin{align}
  d(C_{\mathcal{L}}(D,\lambda Q)) &\geq \min \{ \sigma (\gamma) \mid 0
\leq \gamma \leq \lambda, \gamma \in H^\ast (Q)\}\nonumber\\
  d(C_{\Omega}(D,\lambda Q)) &\geq \min \{ \mu (\gamma) \mid \lambda <
\gamma , \gamma \in H^\ast (Q)\},\label{eq:boundHermitianDual}
\end{align}
which are both as strong as the Goppa bound, it tells us that for
$\epsilon , \lambda  \in H^\ast (Q)$ with $\epsilon < \lambda$ it
holds that 
\begin{equation}
d(C_{\mathcal{L}}(D,\lambda Q), C_{\mathcal{L}}(D, \epsilon
Q))\geq \min \{\sigma(\gamma) \mid \epsilon  < \gamma \leq \lambda, \gamma
\in H^\ast (Q)\},\label{eqrel1}
\end{equation}
and similarly 
\begin{equation}
d(C_\Omega (D,\epsilon Q),C_\Omega (D, \lambda Q)) \geq \min \{\mu
(\gamma) \mid \epsilon < \gamma \leq \lambda, \gamma \in H^\ast
(Q)\}.\label{eqrel2}
\end{equation}
Furthermore, for $i\in H^\ast(Q)$ let $f_i \in {\mathcal{L}}(iQ)
\backslash {\mathcal{L}}((i-1)Q)$. Then we obtain the improved primary
code
\begin{equation*}
  \tilde{E}(\delta)={\mbox{Span}}\{ (f_i(P_1), \ldots , f_i(P_n)) \mid
  \sigma(i) \geq \delta\},
\end{equation*}
which clearly has minimum distance at least $\delta$ and highest
possible dimension for a primary code with that designed distance. Similarly, the
improved dual code
\begin{equation*}
\tilde{C}(\delta)=\big( {\mbox{Span}}\{ (f_i(P_1), \ldots ,
f_i(P_n))\mid \mu(i) < \delta\} \big)^\perp
\end{equation*}
has minimum distance at least $\delta$ and again the highest possible
dimension for a dual code with that designed distance.

Turning to the Hermitian curve $x^{q+1}-y^q-y$ over
${\mathbb{F}}_{q^2}$ where $q$ is a prime power, it is well-known that
the corresponding function field has exactly $q^3+1$ rational places $P_1, \ldots ,
P_{q^3},Q$. Choosing $n=q^3$ one obtains $H(Q)=\langle q,q+1\rangle$ and 
\begin{align}
  H^\ast(Q)=\{ i q+j(q+1) \mid 0 \leq i \leq q^2-1, 0 \leq j \leq
q-1\}.\label{eq:hAstStructure}
\end{align}
In~\cite{stichtenothhermitian} it was shown that 
\begin{equation}
C_{\mathcal{L}}(D,\lambda
Q) = C_\Omega(D,(q^3+q^2-q-2-\lambda) Q) \label{eqdupr}
\end{equation} 
for any $\lambda \in H^\ast (Q)$, and the minimum distance was established for
dimensions up to a certain value. The minimum distance for the remaining dimensions was then settled
in~\cite{yanghermitian}. In the present paper we shall need improved code
constructions, and we will in some cases also
be occupied with the relative distances rather than minimum
distances. To this end we recall material
from~\cite{normtrace} on the functions $\mu$ and $\sigma$ -- stated
there in the more general case of norm-trace curves, but adapted here
to the Hermitian case. 
\begin{proposition}\label{prober}
Consider the Hermitian curve. For $iq+j(q+1) \in H^\ast(Q)$ we have
\begin{equation}
\sigma(iq+j(q+1))=\left\{ \begin{array}{ll}
q^3-iq-j(q+1)& {\mbox{ if }} 0 \leq i < q^2-q\\
(q^2-i)(q-j)& {\mbox{ if }} q^2-q \leq i \leq q^2-1,\label{eqxpression}
\end{array}
\right.
\end{equation}
and $\mu\big( (q^2-1-i)q+(q-1-j)(q+1) \big)=\sigma
\big(iq+j(q+1)\big)$. For each $\lambda \in H^\ast(Q)$ there exists a
word $\vec{c} \in \big( C_{\mathcal{L}}(D, \lambda Q)\backslash
C_{\mathcal{L}}(D,(\lambda-1)Q) \big) \cap \tilde{E}(\sigma(\lambda))$ having
Hamming weight equal to $\sigma(\lambda)$.
\end{proposition}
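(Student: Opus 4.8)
The plan is to handle the three claims in turn: the $\sigma$-formula and the $\mu$–$\sigma$ identity by lattice-point counting, and the existence of a minimal-weight word by an explicit construction. For \eqref{eqxpression} I would first note that, because $H(Q)=\langle q,q+1\rangle$, a non-negative integer $m$ lies in $H(Q)$ exactly when its unique expression $m=sq+t(q+1)$ with $0\le t\le q-1$ has $s\ge0$. Writing $\lambda=iq+j(q+1)$ and a trial element $\eta=aq+b(q+1)$ with $(a,b)$ ranging over the box \eqref{eq:hAstStructure}, and reducing $(a-i)q+(b-j)(q+1)$ into that normal form via $q(q+1)=(q+1)q$, gives the criterion: for $b\ge j$ one has $\eta-\lambda\in H(Q)$ iff $a\ge i$, and for $b<j$ iff $a\ge i+q+1$. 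Counting admissible $a$ for each $b$ produces $\sigma(\lambda)=(q-j)(q^2-i)+j\max\{0,q^2-q-1-i\}$, and the sign of $q^2-q-1-i$ separates the two cases of \eqref{eqxpression}.

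For $\mu(\lambda^\ast)=\sigma(\lambda)$ with $\lambda^\ast=(q^2-1-i)q+(q-1-j)(q+1)$, the cleanest route is the reflection $\eta\mapsto\Lambda-\eta$, where $\Lambda=(q^2-1)q+(q-1)(q+1)$ is the largest element of $H^\ast(Q)$ and $\lambda^\ast=\Lambda-\lambda$. Since $\lambda^\ast-(\Lambda-\eta)=\eta-\lambda$, this map sends $\{\eta\in H^\ast(Q)\mid\eta-\lambda\in H(Q)\}$ into $\{\eta'\in H(Q)\mid\lambda^\ast-\eta'\in H(Q)\}$; as $\Lambda-\eta$ again lies in the box (hence in $H(Q)$) and the reflection is an involution, it is a bijection, and the two sets are exactly those counted by $\sigma(\lambda)$ and $\mu(\lambda^\ast)$. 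Equivalently, one repeats the count of the first paragraph for $\mu$ and observes that the closed form is identical.

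For the last claim, group the $q^3$ affine points into $q^2$ columns by their $x$-coordinate, each column holding $q$ points. Given $\lambda=iq+j(q+1)$, fix a nonzero $c$ in the image of $\beta\mapsto\beta^q+\beta$, choose $j$ distinct $\beta_1,\dots,\beta_j$ with $\beta_l^q+\beta_l=c$, and set $f=\prod_{\alpha\in A}(x-\alpha)\cdot\prod_{l=1}^{j}(y-\beta_l)$. The second product has pole order $j(q+1)$ and vanishes simply at $j$ points in each of the $q+1$ columns with $\alpha^{q+1}=c$; the leading monomial of $f$ is $x^iy^j$, so $f\in\mathcal{L}(\lambda Q)\setminus\mathcal{L}((\lambda-1)Q)$ for any $A$ with $|A|=i$. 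When $i<q^2-q$ one can take $A$ among the $q^2-q-1$ columns with $\alpha^{q+1}\ne c$, so the zeros of the two products are disjoint and $f$ has $qi+(q+1)j=\lambda$ simple affine zeros, of weight $n-\lambda=\sigma(\lambda)$. When $q^2-q\le i\le q^2-1$ the set $A$ must cover all columns with $\alpha^{q+1}\ne c$ together with $m=i-(q^2-q-1)$ columns where $\alpha^{q+1}=c$; then the $y$-factors add $(q^2-i)j$ further distinct zeros, giving weight $q^3-\bigl(qi+(q^2-i)j\bigr)=(q^2-i)(q-j)=\sigma(\lambda)$.

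It remains to place $\ev(f)$ inside $\tilde E(\sigma(\lambda))$. Expanding $f$ in the basis $\{x^sy^t:s\ge0,\,0\le t\le q-1\}$, every monomial has $s\le i$ and $t\le j$, so it suffices to know that $\sigma$ is non-increasing for the coordinatewise order, i.e.\ $s\le i$ and $t\le j$ force $\sigma(sq+t(q+1))\ge\sigma(\lambda)$; this is immediate from \eqref{eqxpression} within either regime, and across the boundary the regime-$1$ values are bounded below by $(q+1)(q-j)>\sigma(\lambda)$. Hence each $\ev(x^sy^t)$, and therefore $\ev(f)$, lies in $\tilde E(\sigma(\lambda))$. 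I expect the main obstacle to be exactly this third part: arranging the zero pattern of $f$ to realize the prescribed number of distinct affine zeros in both regimes — the regime change being precisely the point at which $A$ can no longer avoid the columns cut out by the $y$-factors — and then certifying membership in $\tilde E(\sigma(\lambda))$ through the monotonicity of $\sigma$; the first two claims are routine once the $H(Q)$-membership criterion and the reflection are in hand.
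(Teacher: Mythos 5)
Your argument is correct in substance but takes a genuinely different, and considerably more self-contained, route than the paper. The paper obtains \eqref{eqxpression} only as a pair of lower bounds --- $\sigma(\lambda)\ge q^3-\lambda$ from a general numerical-semigroup counting lemma, and $\sigma(\lambda)\ge(q^2-i)(q-j)$ from the sub-box $\{a\ge i,\ b\ge j\}$ --- and then outsources both the matching upper bounds and the existence of a word of weight $\sigma(\lambda)$ inside $\tilde{E}(\sigma(\lambda))$ to a citation of Lemma~4 of the norm-trace paper, while the $\mu$--$\sigma$ identity is dismissed as a consequence of $H^\ast(Q)$ being a box. You instead compute $\sigma$ exactly via the normal-form criterion for membership in $\langle q,q+1\rangle$ (your count $(q-j)(q^2-i)+j\max\{0,q^2-q-1-i\}$ does simplify to the two cases of \eqref{eqxpression}), and you make the cited existence result explicit by exhibiting $f=\prod_{\alpha\in A}(x-\alpha)\prod_{l}(y-\beta_l)$; the zero counts in both regimes and the membership of $\ev(f)$ in $\tilde{E}(\sigma(\lambda))$ via monotonicity of $\sigma$ under the coordinatewise order all check out. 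This buys a proof readable without the external references, at the cost of length.

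One step you should tighten: the reflection argument for $\mu(\lambda^\ast)=\sigma(\lambda)$. Since $\mu(\lambda^\ast)$ counts $\eta'$ ranging over all of $H(Q)$, not over the box $H^\ast(Q)$, the involution $\eta\mapsto\Lambda-\eta$ only shows that the $\sigma$-set injects into the $\mu$-set; surjectivity additionally requires that no $\eta'\in H(Q)\setminus H^\ast(Q)$ with $\eta'\le\lambda^\ast$ has $\lambda^\ast-\eta'\in H(Q)$, and such $\eta'$ do exist below $\Lambda$ (for $q=2$ one has $8\in H(Q)\setminus H^\ast(Q)$ while $\Lambda=9$). The needed claim is true: writing $\eta'=uq+v$ with $0\le v<q$, membership in $H(Q)$ but not in the box forces $u\ge q^2+v$, while $\lambda^\ast=u'q+v'$ in the box has $u'\le q^2-1+v'$, and comparing normal forms shows $\lambda^\ast-\eta'\notin H(Q)$ in either of the cases $v'\ge v$ and $v'<v$. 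But this is an argument, not a formality, and your fallback of ``repeating the count for $\mu$'' faces exactly the same issue because the defining set of $\mu$ is not confined to the box. Everything else stands.
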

\begin{proof}
Given a numerical semigroup $\Lambda$ with finitely many gaps and
an element $\lambda \in \Lambda$,  we know
from~\cite[Lem.\ 5.15]{handbook} that $\# \big( \Lambda
\backslash (\lambda + \Lambda) \big)=\lambda$. As $\# H^\ast (Q)=q^3$
we therefore obtain $\sigma(iq+j(q+1)) \geq q^3-(iq+j(q+1))$. On the other hand, it is
clear that $\sigma(iq+j(q+1)) \geq (q^2-i)(q-j)$ by the definition of $\sigma$. Taking the
maximum between these two expressions, we obtain the right hand side
of~(\ref{eqxpression}). That
these estimates on $\sigma$ are the true values and that the last part
of the proposition holds true both follow as a consequence of~\cite[Lem.\
4]{normtrace}. The details of applying ~\cite[Lem.\ 4]{normtrace} are
left for the reader. 
Finally, the relation between $\mu$ and
$\sigma$ is a consequence of $H^\ast (Q)$ being a box in the
parameters $i$ and $j$, see \eqref{eq:hAstStructure}. 
\end{proof}
In \ref{app:addit-results-sigma} we list a series of lemmas which
all follow as corollaries to Proposition~\ref{prober} and which will
be needed in Sections~\ref{sec:dimImprovedCodes} and
\ref{sec:inclusion}.

Throughout the rest of the paper we restrict to considering codes
derived from the Hermitian curve, and we always assume the length to be
$n=q^3$. From Proposition~\ref{prober} we see that the bound~(\ref{eqrel1})
 on the relative distance of  $C_{\mathcal{L}}(D,\epsilon Q)
\subset C_{\mathcal{L}}(D, \lambda Q)$ is sharp. A similar remark then
holds for the bound~(\ref{eqrel2})
on the dual codes due to~(\ref{eqdupr}). Finally, we observe from~\cite[Sec.\ 4]{normtrace}
that 
\begin{equation}
\tilde{E}(\delta)=\tilde{C}(\delta)\label{eqimpens}
\end{equation}
holds. Proposition~\ref{prober} therefore
not only gives us the true value of the minimum distance of the
improved primary
codes (without loss of generality we may assume $\delta = \sigma(\lambda)$ for some $\lambda \in H^\ast
(Q)$), but also does it for the improved dual codes.

We conclude the section with some information on the cases where the improved
primary codes coincide with one-point algebraic geometric codes.
\begin{corollary}\label{cor:higherDim}
  For $\delta > q^2-q$ we have $\tilde{E}(\delta)=C_{\mathcal{L}}\big(D,(q^3-\delta)Q\big)$, but $C_{\mathcal{L}}\big(D,(q^3-(q^2-q))Q\big)$ is strictly contained in $\tilde{E}(q^2-q)$.
\end{corollary}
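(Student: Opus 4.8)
The plan is to reduce the statement to a purely combinatorial comparison of two index sets. Both $\tilde{E}(\delta)$ and $C_{\mathcal{L}}(D,(q^3-\delta)Q)$ are spanned by subfamilies of the vectors $(f_i(P_1),\dots,f_i(P_n))$ with $i\in H^\ast(Q)$, and these vectors are linearly independent: by the very definition of $H^\ast(Q)$ each such $i$ strictly increases the dimension. Hence two codes of this form coincide (respectively, one is contained in the other) precisely when the corresponding index sets are equal (respectively, contained). The improved code uses the index set $S_\delta=\{i\in H^\ast(Q)\mid \sigma(i)\geq\delta\}$, whereas the one-point code uses $T_\delta=\{i\in H^\ast(Q)\mid i\leq q^3-\delta\}$, so everything comes down to comparing $S_\delta$ with $T_\delta$.

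The key computation is to compare $\sigma(i)$ with $q^3-i$ using the closed form of Proposition~\ref{prober}. Writing $i=aq+b(q+1)$ with $0\leq a\leq q^2-1$ and $0\leq b\leq q-1$, the first branch gives $\sigma(i)=q^3-i$ outright. In the second branch ($q^2-q\leq a$), a short expansion of $(q^2-a)(q-b)$ yields the identity $\sigma(i)-(q^3-i)=b\,(a-q^2+q+1)$; since $a-q^2+q+1\geq 1$ throughout the second branch, this difference is nonnegative and vanishes exactly when $b=0$. Thus $\sigma(i)=q^3-i$ whenever $b=0$ (in either branch), while the only indices with $\sigma(i)>q^3-i$ are those in the second branch with $b\geq 1$. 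For those indices I would record two facts: first, $\sigma(i)=(q^2-a)(q-b)\leq q(q-1)=q^2-q$, with equality only at $(a,b)=(q^2-q,1)$, that is at $i_0=q^3-q^2+q+1$; second, any such index satisfies $i\geq i_0=q^3-q^2+q+1$.

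With these facts both parts follow quickly. For $\delta>q^2-q$, every anomalous index (second branch, $b\geq 1$) has $\sigma(i)\leq q^2-q<\delta$ and hence lies outside $S_\delta$; it likewise satisfies $i\geq q^3-q^2+q+1>q^3-\delta$ and so lies outside $T_\delta$. For all remaining indices $\sigma(i)=q^3-i$, so $\sigma(i)\geq\delta$ is equivalent to $i\leq q^3-\delta$; therefore $S_\delta=T_\delta$ and the two codes coincide. For the boundary value $\delta=q^2-q$ one exhibits the single index $i_0=q^3-q^2+q+1\in H^\ast(Q)$: it has $\sigma(i_0)=q^2-q=\delta$, so $i_0\in S_\delta$, but $i_0=(q^3-\delta)+1>q^3-\delta$, so $i_0\notin T_\delta$. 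Since at the same time every index with $i\leq q^3-\delta$ still has $\sigma(i)=q^3-i\geq q^2-q$ (the anomalous indices all exceed $q^3-\delta$), we conclude $T_\delta\subsetneq S_\delta$, i.e.\ $C_{\mathcal{L}}\big(D,(q^3-(q^2-q))Q\big)$ is strictly contained in $\tilde{E}(q^2-q)$, the extra dimension being spanned by $\big(f_{i_0}(P_1),\dots,f_{i_0}(P_n)\big)$.

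The main obstacle, and the reason the threshold $q^2-q$ appears at all, is that $\sigma$ is not monotone in $i$, so a priori $S_\delta$ need not be a down-set of the form $\{i\leq\lambda\}$ and need not match any one-point code. The identity $\sigma(i)-(q^3-i)=b\,(a-q^2+q+1)$ is exactly what localises the failure of monotonicity to the corner $b\geq 1$ of the second branch and caps the resulting discrepancy at $q^2-q$; once the discrepancy is known to be bounded by $q^2-q$, the dichotomy at $\delta=q^2-q$ is forced.
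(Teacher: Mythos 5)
Your proof is correct and follows the route the paper intends: the corollary is stated without an explicit proof as a consequence of Proposition~\ref{prober}, and your reduction to the index sets $S_\delta$ and $T_\delta$ together with the identity $\sigma(i)-(q^3-i)=b\,(a-q^2+q+1)$ is exactly the computation behind it (your characterisation of the ``anomalous'' indices is precisely Lemma~\ref{corstronger} of the appendix, and $i_0=q^3-q^2+q+1$ is the right witness for the strict inclusion at $\delta=q^2-q$). No gaps.
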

This corollary implies that the dimension of $\tilde{E}(\delta)$ can be determined from the usual one-point Hermitian codes whenever $\delta>q^2-q$. For later reference we state these dimensions in terms of $\delta$.
\begin{proposition}\label{prop:dimUsualHermitian}
  Denote by $g=q(q-1)/2$ the genus of Hermitian function field. If $q^2-q<\delta<q^3-2g+2$, then the dimension of $\tilde{E}(\delta)$ is given by $q^3-g+1-\delta$.
  If $q^3-2g+2\leq\delta\leq q^3$, we have
  \begin{equation*}
    \dim\tilde{E}(\delta)=\sum_{s=0}^{a+b}(s+1)-\max\{a,0\}
  \end{equation*}
  where $q^3-\delta=aq+b(q+1)$ for $-q<a<q$ and $0\leq b<q$.
\end{proposition}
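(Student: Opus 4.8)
The plan is to reduce everything to the dimension of an ordinary one-point code. By Corollary~\ref{cor:higherDim}, the hypothesis $\delta>q^2-q$ gives $\tilde{E}(\delta)=C_{\mathcal{L}}(D,mQ)$ with $m=q^3-\delta$. Since $m<q^3=n$, the evaluation map $\mathcal{L}(mQ)\to{\mathbb{F}}_{q^2}^n$ is injective: a function vanishing at all of $P_1,\dots,P_n$ would lie in $\mathcal{L}(mQ-D)$, which is trivial because $\deg(mQ-D)<0$. Hence $\dim\tilde{E}(\delta)=\dim\mathcal{L}(mQ)$, which equals the number of non-gaps of $H(Q)$ that are at most $m$. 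Thus the whole statement becomes a computation of $\#\{h\in H(Q)\mid h\le m\}$ in two regimes.

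For $q^2-q<\delta<q^3-2g+2$ the parameter $m$ lies in the range $2g-2<m<q^3-q^2+q$. In particular $m\ge 2g-1$, so the Riemann--Roch theorem yields $\dim\mathcal{L}(mQ)=m-g+1$ at once, and substituting $m=q^3-\delta$ gives the first formula $q^3-g+1-\delta$.

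For $q^3-2g+2\le\delta\le q^3$ we instead have $0\le m\le 2g-2=q^2-q-2$, where Riemann--Roch is inconclusive and I would count non-gaps directly. Writing $m=aq+b(q+1)=(a+b)q+b$ with $0\le b<q$ makes $b$ the remainder and $a+b$ the quotient of $m$ upon division by $q$, so the representation exists and is unique, and $m\le q^2-q-2$ forces $-q<a<q$. Every non-gap has a unique expression $iq+j(q+1)$ with $i\ge 0$ and $0\le j<q$ (reduce $j$ modulo $q$), and for non-gaps at most $m<q^2$ one gets $i<q$ and $j<q-1$, so the box bounds from~\eqref{eq:hAstStructure} are not binding. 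Hence $\#\{h\in H(Q)\mid h\le m\}$ equals the number of lattice points $(i,j)$ with $i,j\ge 0$ and $iq+j(q+1)\le m$. Using $iq+j(q+1)=(i+j)q+j$ I would group these points by $s=i+j$: the inequality becomes $sq+j\le(a+b)q+b$ with $0\le j\le s$, so every $s<a+b$ contributes all $s+1$ values of $j$, every $s>a+b$ contributes none, and the layer $s=a+b$ contributes $\min\{a+b,b\}+1$ values.

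Summing the full layers gives $\sum_{s=0}^{a+b-1}(s+1)$, and adding the boundary contribution $\min\{a+b,b\}+1$ rewrites the total as $\sum_{s=0}^{a+b}(s+1)-\bigl((a+b)-\min\{a+b,b\}\bigr)=\sum_{s=0}^{a+b}(s+1)-\max\{a,0\}$, which is the claimed expression. The only real obstacle is this boundary layer $s=a+b$: for $s<a+b$ the bound $(a+b-s)q+b\ge q>s$ makes $j\le s$ the active constraint, while for $s>a+b$ the bound is negative so the layer is empty; at $s=a+b$ the count is $b+1$ when $a\ge 0$ (giving the correction $-a$) and the full $a+b+1$ when $a<0$ (giving no correction). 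Keeping this sign distinction straight is the entire content, and the remaining steps are routine arithmetic-series sums.
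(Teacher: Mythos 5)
Your argument is correct and follows essentially the same route as the paper: reduce to the one-point code via Corollary~\ref{cor:higherDim}, apply Riemann--Roch in the first regime, and in the second regime count semigroup elements at most $q^3-\delta$ by grouping them according to the quotient $s$ upon division by $q$, with the $\max\{a,0\}$ correction coming from the boundary layer $s=a+b$. The only cosmetic difference is that you count non-gaps of $H(Q)$ via injectivity of the evaluation map while the paper counts elements of $H^\ast(Q)$ directly, which coincide in the relevant range.
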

\begin{proof}
  First, note that in both cases Corollary~\ref{cor:higherDim} implies the equality $\tilde{E}(\delta)=C_\mathcal{L}(D,(q^3-\delta)Q)$.
  For the first case recall from \cite[Cor.\ II.2.3]{stichtenoth} that the code $C_\mathcal{L}(D,\lambda G)$ has dimension $\lambda+1-g$ whenever $2g-2<\lambda<n$.
  By the assumptions on $\delta$, we have $q^3-\delta>q^3-(q^3-2g+2)=2g-2$, meaning that $C_\mathcal{L}(D,(q^3-\delta)Q)$ has dimension $q^3-\delta+1-g$. By the observation in the beginning of the proof, the same holds true for $\tilde{E}(\delta)$.

  To prove the second case, observe that the dimension of $C_\mathcal{L}(d,(q^3-\delta)Q)$ is exactly the number of elements $\lambda$ in $H^\ast(Q)$ with $\lambda\leq q^3-\delta$. By \eqref{eq:hAstStructure} such elements have the form $\lambda=iq+j(q+1)$, but equivalently we can write $\lambda=i'q+j$ where $i'=i+j$. By the division algorithm this representation is unique for $0\leq j<q$. For $i'q+j$ to satisfy the requirements of \eqref{eq:hAstStructure}, we also require $i'-j=i\geq 0$. That is, $H^\ast(Q)$ contains the integers whose quotients modulo $q$ are at least their remainders modulo $q$.

  Writing $q^3-\delta=(a+b)q+b$ with $0\leq b<q$ using the division algorithm, the number of elements in $H^\ast(Q)$ less than $(a+b+1)q$ is given by $\sum_{s=0}^{a+b}(s+1)$.
  If $b\geq a+b$, which happens only if $a\leq 0$, this number is also
  the number of elements with value at most $q^3-\delta$. Otherwise,
  the count includes $b-(a+b)=a$ elements of $H^\ast(Q)$ that are
  greater than $q^3-\delta$. Hence, subtracting $\max\{a,0\}$ gives
  the desired count in both cases. 
\end{proof}
It remains to establish information on the dimension of $\tilde{E}(\delta)$ for
$\delta \leq q^2-q$ since the improved codes differ from the usual Hermitian codes in this case. This subject is treated in the next section.
\section{The dimension of improved codes}\label{sec:dimImprovedCodes}
As explained in the previous section, the dimension of
$\tilde{E}(\delta)$ can be determined from well-known methods as
long as $\delta>q^2-q$. In this section we present closed formula
lower bounds on the dimension in the remaining cases. We start with an
important lemma.
\begin{lemma}\label{lem:integerPointsIntegral}
  Let $\delta \leq q^2$. The number of integer points 
  \begin{align*}
    (x,y)\in \{q^2-q, \ldots ,q^2-1\} \times \{0, \ldots , q-1\}
  \end{align*}
  with $(q^2-x)(q-y) \geq \delta$ is at least
  \begin{equation}
    q^2-\left\lfloor\delta+\delta\ln ( q^2/\delta)\right\rfloor. \label{eqsnabel1}
  \end{equation}
  If $\delta < q$, then the number of integer points is at least
  \begin{equation}
    q^2-\left\lfloor\delta+\delta \ln (\delta)\right\rfloor, \label{eqsnabel2}
  \end{equation}
  which is stronger than (\ref{eqsnabel1}).
\end{lemma}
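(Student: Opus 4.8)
The plan is to recast the count by the substitution $u=q^2-x$ and $v=q-y$, which maps the box $\{q^2-q,\dots,q^2-1\}\times\{0,\dots,q-1\}$ bijectively onto $\{1,\dots,q\}\times\{1,\dots,q\}$ and turns the condition $(q^2-x)(q-y)\ge\delta$ into the clean hyperbolic inequality $uv\ge\delta$. Since the full box has $q^2$ points, it suffices to bound from above the number $N$ of lattice points in $\{1,\dots,q\}^2$ with $uv<\delta$; the quantity to be estimated is then $q^2-N$, and because $N$ is an integer any real upper bound may be replaced by its floor. For fixed $u$ the admissible $v$ are exactly those with $1\le v<\delta/u$, so their number is at most $\min(q,\delta/u)$, giving $N\le\sum_{u=1}^{q}\min(q,\delta/u)$.

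The heart of the argument is to compare this sum with an integral. Writing $g(t)=\min(q,\delta/t)$, the function $g$ is non-increasing on $(0,q]$, so $g(u)\le\int_{u-1}^{u}g(t)\,dt$ for each $u$, and summing over $u=1,\dots,q$ yields $N\le\int_{0}^{q}g(t)\,dt$. Splitting the integral at $t=\delta/q$, where the two branches of $g$ meet (this is where the hypothesis $\delta\le q^2$ enters, ensuring $\delta/q\le q$), gives $\int_0^{\delta/q}q\,dt+\int_{\delta/q}^{q}(\delta/t)\,dt=\delta+\delta\ln(q^2/\delta)$, which is precisely~(\ref{eqsnabel1}). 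I expect the main obstacle to be exactly this matching: the tempting simpler move of splitting the \emph{sum} at the integer $\lfloor\delta/q\rfloor$ produces a bound involving $\ln(q/\lfloor\delta/q\rfloor)$, which runs in the wrong direction and overshoots the target. Using the continuous envelope $\min(q,\delta/t)$ with the split at the non-integer point $\delta/q$ is what makes the constant come out sharp.

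For the refined bound when $\delta<q$, the plan is to observe that $uv<\delta$ together with $u,v\ge1$ forces $u<\delta<q$ and $v<\delta<q$, so the truncation at $q$ is never active and $N$ simply counts lattice points strictly under the hyperbola with no boundary effect from the box. Bounding the $u=1$ contribution crudely by $\lceil\delta\rceil-1\le\delta$ and the remaining terms by $\sum_{u=2}^{\lceil\delta\rceil-1}\delta/u\le\int_{1}^{\delta}(\delta/t)\,dt=\delta\ln\delta$ then gives $N\le\delta+\delta\ln\delta$, which is~(\ref{eqsnabel2}). Finally, the hypothesis $\delta<q$ is equivalent to $\delta<q^2/\delta$, hence to $\ln\delta<\ln(q^2/\delta)$, so~(\ref{eqsnabel2}) is indeed the stronger estimate; this last comparison is immediate once both bounds are in hand.
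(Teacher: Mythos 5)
Your proof is correct, and although it lands on exactly the same quantity $\int_0^q\min(q,\delta/t)\,dt=\delta+\delta\ln(q^2/\delta)$ as the paper, it gets there by a genuinely different organization. The paper bounds the number of qualifying lattice points from below by the two-dimensional volume of $\{(x,y)\in[q^2-q,q^2]\times[0,q]\mid(q^2-x)(q-y)\geq\delta\}$ -- a step it asserts without comment (it holds because the region is a down-set in the coordinates $(q^2-x,q-y)$, so rounding any of its points down to the integer lattice produces a qualifying point and each fibre has area at most $1$) -- and for $\delta<q$ it splits that region into an L-shaped piece of volume $q^2-\delta^2$ plus a small corner piece near $(q^2,q)$. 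You instead pass to the complement and count column by column: after the substitution $(u,v)=(q^2-x,q-y)$ you bound the number of bad points by $\sum_{u=1}^q\min(q,\delta/u)$ and dominate this sum by the integral via monotonicity of $t\mapsto\min(q,\delta/t)$; for $\delta<q$ you observe the truncation at $q$ is never active and estimate a harmonic sum directly. What your route buys is that the lattice-versus-integral comparison is made fully explicit where the paper leaves it implicit, and the refined bound comes out of one harmonic-sum estimate rather than a two-region volume decomposition; what the paper's route buys is a picture that generalizes immediately to the higher-dimensional Cartesian-product setting it cites. Your closing comparison via $\delta<q\Leftrightarrow\ln\delta<\ln(q^2/\delta)$ is exactly what is needed. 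One shared caveat: both arguments implicitly use $\delta\geq1$ for \eqref{eqsnabel2}, since for $0<\delta<1$ the quantity $\delta+\delta\ln\delta$ is negative and its floor can be $-1$; this is harmless here because $\delta$ is a designed distance.
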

\begin{proof}
  The number of integer points in the given Cartesian product is at
  least that of the volume of
  \begin{align*}
    \{(x,y) \in [q^2-q,q^2] \times [0,q] \mid (q^2-x)(q-y) \geq\delta\},
  \end{align*}
  which equals
  \begin{align*}
    &\int_{q^2-q}^{q^2-\frac{\delta}{q}} \int_{0}^{q-\frac{\delta}{q^2-x}}
       \, dy\, dx \\
    =&q(q^2-\frac{\delta}{q}-q^2+q)+\int_{q^2-\frac{\delta}{q}}^{q^2-q}\frac{\delta}{q^2-x}
        \, dx \\
    =&q^2-\delta -\delta [ \ln
        (z)]^q_{\frac{\delta}{q}}=q^2-\delta-\delta \ln (q^2/\delta),
  \end{align*}
  where we used the substitution $z=q^2-x$. Since the number of integer points is integral, we obtain the bound $\lceil q^2-\delta-\delta\ln(q^2/\delta)\rceil$, which is the same as \eqref{eqsnabel1}.
  
  If $\delta < q$, then the number of integer points is at least the combined volumes of
  \begin{align*}
    \{(x,y) \in [q^2-q,q^2]\times [0,q] \mid x \leq q^2-\delta {\mbox{
    or }} y \leq q-\delta \}
  \end{align*}
  and
  \begin{align*}
    \{(x,y) \in [q^2-\delta, q^2] \times [q-\delta,q] \mid (q^2-x)(q-y)
    \geq \delta\}.
  \end{align*}
The first mentioned volume equals $q^2-\delta^2$. The latter volume is
  \begin{align*}
    &\int_{q^2-\delta}^{q^2-1}\int_{q-\delta}^{q-\frac{\delta}{q^2-x}} \ dy
       \, dx \\
    =&\int_{q^2-\delta}^{q^2-1}\bigg(\delta -\frac{\delta}{q^2-x}\bigg)\,
        dx\\
    =&\delta (\delta-1)-\delta[\ln (z)]_1^\delta=\delta(\delta-1)-\delta
        \ln (\delta).
  \end{align*}
  Adding up the two volumes, we obtain \eqref{eqsnabel2} by applying the ceiling function as above.  
\end{proof}
The dimension of the improved codes of designed distance at most $q^2-q$ is covered by the following two
propositions. Recall from~(\ref{eqimpens}) that the equality $\tilde{C}(\delta)=\tilde{E}(\delta)$ holds for
codes defined from the Hermitian function field. Hence, the stated
formulas for primary codes also hold for the dual codes.
\begin{proposition}\label{prop:dimTricky}
  Given $q < \delta \leq q^2-q$ write 
  \begin{align*}
    q^3-\delta = q^3-q^2+aq+b(q+1)
  \end{align*}
  where $-q < a < q$ and $0 \leq b < q$. \\
  If $0 < a$, then 
  \begin{align*}
    \dim (\tilde{E}(\delta)) \geq q^3-\delta - g+1 -
    \sum_{s=0}^{a+b}(s+1) + a + q^2-\left\lfloor\delta+\delta\ln ( q^2/\delta)\right\rfloor.
  \end{align*}
  If $a\leq 0$, then 
  \begin{align*}
    \dim (\tilde{E}(\delta)) \geq q^3-\delta - g+1 -
    \sum_{s=0}^{a+b}(s+1) + q^2-\left\lfloor\delta+\delta\ln ( q^2/\delta)\right\rfloor.
  \end{align*}
\end{proposition}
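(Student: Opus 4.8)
The plan is to compute $\dim\tilde{E}(\delta)$ as the number of semigroup elements $\lambda\in H^\ast(Q)$ with $\sigma(\lambda)\ge\delta$, and to split this count according to the two branches of the formula for $\sigma$ in Proposition~\ref{prober}. Writing $\lambda=iq+j(q+1)$, the elements with $q^2-q\le i\le q^2-1$ form the ``product'' region, where $\sigma(\lambda)=(q^2-i)(q-j)$, while those with $0\le i<q^2-q$ form the ``linear'' region, where $\sigma(\lambda)=q^3-\lambda$ and hence $\sigma(\lambda)\ge\delta$ is equivalent to $\lambda\le q^3-\delta$. Since the evaluation vectors attached to the elements of $H^\ast(Q)$ are linearly independent, $\dim\tilde{E}(\delta)$ is exactly the sum of the two regional counts, and it suffices to bound each below.

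For the product region I would invoke Lemma~\ref{lem:integerPointsIntegral} directly (with $x=i$, $y=j$, and noting $\delta\le q^2-q<q^2$): the number of such $\lambda$ with $(q^2-i)(q-j)\ge\delta$ is at least $q^2-\lfloor\delta+\delta\ln(q^2/\delta)\rfloor$, which is precisely the last term in both displayed bounds.

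For the linear region I would count $\#\{\lambda\in H^\ast(Q):\lambda\le q^3-\delta,\ 0\le i<q^2-q\}$ by subtracting from the total one-point count. The total $\#\{\lambda\in H^\ast(Q):\lambda\le q^3-\delta\}$ equals $\dim C_{\mathcal{L}}(D,(q^3-\delta)Q)$; since $q<\delta\le q^2-q$ gives $2g-2<q^3-\delta<n$, Riemann--Roch (as in Proposition~\ref{prop:dimUsualHermitian}) yields the value $q^3-\delta-g+1$, matching the initial part of the bound. Thus the linear-region count is $q^3-\delta-g+1-N$, where $N:=\#\{\lambda\in H^\ast(Q):\lambda\le q^3-\delta,\ q^2-q\le i\le q^2-1\}$ is a pure lattice-point count over the box. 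Substituting $i=q^2-q+i'$ and using $q^3-\delta=q^3-q^2+aq+b(q+1)$, the condition becomes $i'q+j(q+1)\le aq+b(q+1)$ with $(i',j)\in\{0,\dots,q-1\}^2$, so I must count the lattice points of the box lying below this line.

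The main work---and the source of the two cases---is evaluating $N$ exactly. Writing $i'q+j(q+1)=(i'+j)q+j$ and $aq+b(q+1)=(a+b)q+b$, and noting that $q<\delta\le q^2-q$ forces $1\le a+b\le q-2$, the pair $(i',j)$ lies below the line precisely when $i'+j<a+b$, or $i'+j=a+b$ together with $j\le b$. For each $s=i'+j\le a+b\le q-1$ there are exactly $s+1$ admissible pairs, so the full rows $s=0,\dots,a+b-1$ contribute $\sum_{s=0}^{a+b-1}(s+1)$. The delicate part is the boundary row $s=a+b$: here I count pairs with $j\le b$ and $0\le i'=a+b-j\le q-1$, and whether all of $j\in\{0,\dots,b\}$ survive depends on the sign of $a$. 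When $a>0$ every such $j$ is admissible, giving $b+1$ extra points and $N=\sum_{s=0}^{a+b}(s+1)-a$; when $a\le 0$ the constraint $i'\ge 0$ caps $j$ at $a+b$, giving $a+b+1$ extra points and $N=\sum_{s=0}^{a+b}(s+1)$. Substituting these two values of $N$ into $q^3-\delta-g+1-N+\big(q^2-\lfloor\delta+\delta\ln(q^2/\delta)\rfloor\big)$ yields exactly the two claimed inequalities. I expect the boundary-row bookkeeping (tracking the $i'\ge 0$ versus $j\le b$ constraints) to be the only genuinely error-prone step.
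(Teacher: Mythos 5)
Your proposal is correct and follows essentially the same route as the paper's proof: the paper counts all $\epsilon\le q^3-\delta$ via the Riemann--Roch count $q^3-\delta-g+1$, adds the box count from Lemma~\ref{lem:integerPointsIntegral}, and subtracts the doubly-counted lattice points, which is just an inclusion--exclusion rearrangement of your disjoint-region decomposition. Your explicit evaluation of $N$ (including the sign-of-$a$ case split on the boundary row $s=a+b$) matches the correction term the paper obtains by reference to the argument in Proposition~\ref{prop:dimUsualHermitian}.
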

\begin{proof}
  Let $g=q(q-1)/2$ be the number of gaps in
  $H(Q)$, i.e.\ the genus of the function field. As is well-known, for $2g \leq \lambda < q^3-1$ the number of $\epsilon \in H^\ast(Q)$ with $\epsilon \leq \lambda$ equals $\lambda-g+1$. Therefore, by choosing
  $\lambda=q^3-\delta$ the restriction on $\delta$ as given in the
  proposition implies that there are exactly $q^3-\delta-g+1$ elements
  $\epsilon \in H^\ast(Q)$ with $\epsilon \leq q^3-\delta$. These elements then
  satisfy $\sigma(\epsilon) \geq \delta$ by Lemma~\ref{corstronger}.
  From~(\ref{eqxpression}) we see that the additional elements in $H^\ast(Q)$ with
  $\sigma(\epsilon) \geq \delta$ must belong to \begin{equation}
  \{iq+j(q+1) \mid q^2-q \leq i \leq q^2-1, \,  0 \leq j\leq q-1
    \}. \label{eqdimondius}
 \end{equation}
  Lemma~\ref{lem:integerPointsIntegral} gives an estimate on the total number of elements $\epsilon$ in~(\ref{eqdimondius}) with $\sigma(\epsilon) \geq \delta$. Adding this number to
  $q^3-\delta-g+1$, we have counted the elements $\epsilon$
  in~(\ref{eqdimondius}) with $\epsilon \leq q^3-\delta$ twice. By using similar arguments as in the proof of Proposition \ref{prop:dimUsualHermitian}, the number of such elements equals
  $\sum_{s=0}^{a+b}(s+1)-a$ when $0 \leq a < q$, and it equals
  $\sum_{s=0}^{a+b}(s+1)$ when $-q < a < 0$. This proves the proposition.
\end{proof}
\begin{proposition}\label{prop:dimNotTricky}
  Given $1\leq\delta\leq q$ the dimension of the code $\tilde{E}(\delta)$ satisfies
  \begin{align*}
    \dim(\tilde{E}(\delta))\geq q^3-\left\lfloor\delta+\delta\ln(\delta)\right\rfloor.
  \end{align*}
\end{proposition}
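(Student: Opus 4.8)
The plan is to turn the statement into a counting problem and then apply Lemma~\ref{lem:integerPointsIntegral}. By construction $\tilde{E}(\delta)$ is spanned by the evaluation vectors $(f_i(P_1),\ldots,f_i(P_n))$ with $i \in H^\ast(Q)$ and $\sigma(i) \geq \delta$. The full family $\{(f_i(P_1),\ldots,f_i(P_n)) \mid i \in H^\ast(Q)\}$ is linearly independent---indeed it is a basis of ${\mathbb{F}}_{q^2}^n$, which is precisely the defining property of $H^\ast(Q)$---so every subfamily is linearly independent as well. Hence $\dim\tilde{E}(\delta)$ equals $\#\{i \in H^\ast(Q) \mid \sigma(i) \geq \delta\}$, and it suffices to bound this cardinality from below by $q^3 - \lfloor\delta+\delta\ln\delta\rfloor$.

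First I would split $H^\ast(Q)$ according to the two cases of \eqref{eqxpression} in Proposition~\ref{prober}. For the elements with $0 \leq i < q^2 - q$ we have $\sigma(iq+j(q+1)) = q^3 - iq - j(q+1)$, and maximising $iq + j(q+1)$ over this range (attained at $i = q^2-q-1$, $j = q-1$, giving the value $q^3-q-1$) shows that $\sigma \geq q+1$ throughout. Since $\delta \leq q < q+1$, every one of these $(q^2-q)q = q^3 - q^2$ elements satisfies $\sigma \geq \delta$ and therefore contributes to the count.

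It then remains to count the elements with $q^2-q \leq i \leq q^2-1$, where $\sigma(iq+j(q+1)) = (q^2-i)(q-j)$. Setting $(x,y) = (i,j)$, the condition $\sigma \geq \delta$ becomes exactly $(q^2-x)(q-y) \geq \delta$ on the box $\{q^2-q,\ldots,q^2-1\}\times\{0,\ldots,q-1\}$, so Lemma~\ref{lem:integerPointsIntegral} applies verbatim. For $\delta < q$ the stronger bound \eqref{eqsnabel2} yields at least $q^2 - \lfloor\delta+\delta\ln\delta\rfloor$ such elements, and adding the $q^3 - q^2$ elements of the first region gives the claimed bound $q^3 - \lfloor\delta+\delta\ln\delta\rfloor$.

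The only delicate point is the boundary case $\delta = q$, for which \eqref{eqsnabel2} is not available. Here I would instead invoke the weaker estimate \eqref{eqsnabel1}: since $\ln(q^2/\delta) = \ln(q^2/q) = \ln q = \ln\delta$ when $\delta = q$, the bound \eqref{eqsnabel1} already reads $q^2 - \lfloor\delta+\delta\ln\delta\rfloor$, so the same conclusion follows. I expect this edge-case matching, together with the linear-independence remark that converts the dimension into a count, to be the only steps requiring genuine care; the region split and the maximisation of $\sigma$ on the first region are routine.
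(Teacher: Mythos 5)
Your proof is correct and follows essentially the same route as the paper: both reduce the dimension to counting elements of $H^\ast(Q)$ with $\sigma \geq \delta$ and apply Lemma~\ref{lem:integerPointsIntegral} to the corner box $\{q^2-q,\ldots,q^2-1\}\times\{0,\ldots,q-1\}$, the only cosmetic difference being that the paper counts the complement (elements with $\sigma<\delta$, confined to that box by Lemma~\ref{lem:smallSigma}) while you count the good elements region by region. Your explicit treatment of the boundary case $\delta=q$ via \eqref{eqsnabel1} and the identity $\ln(q^2/q)=\ln q$ is a detail the paper's proof glosses over, and it is handled correctly.
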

\begin{proof}
  By Lemma~\ref{lem:smallSigma} the elements $\lambda \in H^\ast (Q)$ which do not satisfy
  $\sigma(\lambda)\geq\delta$ must belong to $\{iq+j(q+1) \mid
  q^2-\delta\leq i\leq q^2, q-\delta\leq j < q\}$. 
  The number of elements in this set having $\sigma(\lambda)<\delta$ is bounded above by $\lfloor\delta+\delta\ln(\delta)\rfloor$ by Lemma \ref{lem:integerPointsIntegral}. Since the total number of monomials in $H^\ast(Q)$ is $q^3$, the result follows.
\end{proof}

\section{Inclusion of improved codes}\label{sec:inclusion}
As already mentioned our first construction of improved nested code pairs consists of choosing $\tilde{C}(\delta_2)$ and $\tilde{E}(\delta_1)$ such that $\tilde{C}(\delta_2)^\perp\subset\tilde{E}(\delta_1)$. To treat this construction we therefore need a clear picture of the pairs $(\delta_1,\delta_2)$ of minimum distances that imply this inclusion. We establish this in the present section.
As it turns out, the formulas for $\sigma$ and $\mu$ given in Proposition \ref{prober} mean that several cases must be considered, and each case is presented as a separate proposition.

In the following, quantifiers on $\lambda,\varepsilon$ are considered on the domain $H^\ast(Q)$.
Given $\delta_1\in\sigma(H^\ast(Q))$, define $\delta_2^{\max}$ to be the maximal value of $\delta_2$ such that $\tilde{C}(\delta_2)^\perp\subseteq\tilde{E}(\delta_1)$ holds. This inclusion is equivalent to
\begin{align}\label{eq:logicInclusion}
  \forall\lambda\colon \big[(\sigma(\lambda)<\delta_1)\rightarrow(\mu(\lambda)\geq\delta_2)\big].
\end{align}
We first observe that if we can find a $\lambda_1\in H^\ast(Q)$ such that
\begin{align}\label{eq:inclusionLowCondition}
  \big[\forall\varepsilon>\lambda_1\colon\:\mu(\varepsilon)\geq\mu(\lambda_1)\big] \wedge
  \big[\forall\varepsilon<\lambda_1\colon\: \sigma(\varepsilon)\geq\delta_1\big]
\end{align}
is true, then \eqref{eq:logicInclusion} is also true whenever $\delta_2\leq\mu(\lambda_1)$. In particular, we therefore have
\begin{align}\label{eq:deltaMaxLower}
  \delta_2^{\max}\geq\mu(\lambda_1).
\end{align}
On the other hand, we immediately see from \eqref{eq:logicInclusion} that a $\lambda_2\in H^\ast(Q)$ with
\begin{align}\label{eq:inclusionUppCondition}
  \sigma(\lambda_2)<\delta_1
\end{align}
implies the bound
\begin{align}\label{eq:deltaMaxUpper}
  \delta_2^{\max}\leq\mu(\lambda_2).
\end{align}
In the proofs of each of the following propositions, our strategy 
therefore is to determine $\lambda_1$ and $\lambda_2$ satisfying \eqref{eq:inclusionLowCondition} and \eqref{eq:inclusionUppCondition}, respectively, while also satisfying $\mu(\lambda_1)=\mu(\lambda_2)$. From \eqref{eq:deltaMaxLower} and \eqref{eq:deltaMaxUpper} it then follows that $\delta_2^{\max}=\mu(\lambda_1)$. Note, however, that $\lambda_1$ and $\lambda_2$ need not be distinct. If $\lambda_1=\lambda_2$, we shall simply use $\lambda$.

With this strategy in mind, the following lemmas will prove very useful.
\begin{lemma}\label{lem:sigmaOrderedPartial}
  Let $\lambda=iq+j(q+1)\in H^\ast(Q)$, meaning that $0\leq i<q^2$ and $0\leq j<q$. In addition, assume that $i\leq q^2-q$, $i=q^2-1$, or $j=0$. If $\varepsilon\in H^\ast(Q)$ satisfies $\varepsilon<\lambda$, then $\sigma(\varepsilon)\geq\sigma(\lambda)$.
\end{lemma}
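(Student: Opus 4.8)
The plan is to work directly from the explicit description of $\sigma$ in Proposition~\ref{prober}, exploiting the observation (visible in its proof) that for every $\mu=i'q+j'(q+1)\in H^\ast(Q)$ one has
\[
  \sigma(\mu)=\max\bigl\{\,q^3-\mu,\ (q^2-i')(q-j')\,\bigr\},
\]
so that $\sigma(\mu)\geq q^3-\mu$ always holds. Since $\varepsilon<\lambda$ forces $q^3-\varepsilon>q^3-\lambda$, this settles the lemma immediately whenever $\sigma(\lambda)=q^3-\lambda$: indeed then $\sigma(\varepsilon)\geq q^3-\varepsilon>q^3-\lambda=\sigma(\lambda)$. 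A one-line computation of $q^3-\lambda-(q^2-i)(q-j)=j\,(q^2-q-1-i)$ shows that this favourable situation occurs exactly when $i\leq q^2-q-1$ or $j=0$. Hence the hypotheses $i\leq q^2-q$ and $j=0$ are almost entirely disposed of, and only the two boundary rows $i=q^2-q$ and $i=q^2-1$ with $1\leq j\leq q-1$ remain; in both of these $\lambda$ lies in the high regime and $\sigma(\lambda)=(q^2-i)(q-j)$.

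For these two rows I would split the candidate $\varepsilon<\lambda$ according to its own regime. When $\varepsilon=i'q+j'(q+1)$ also lies in the high regime, I substitute $a=q^2-i'$ and $b=q-j'$, so that $a,b\in\{1,\dots,q\}$ and $\sigma(\varepsilon)=ab$. Writing $V=(a+b)q+b$ one checks $\varepsilon+V=q^3+q^2+q$, so $\varepsilon<\lambda$ becomes the single inequality $(a+b)q+b>(a_\lambda+b_\lambda)q+b_\lambda$, where $(a_\lambda,b_\lambda)=(1,q-j)$ for the row $i=q^2-1$ and $(a_\lambda,b_\lambda)=(q,q-j)$ for the row $i=q^2-q$. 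Comparing the coefficient of $q$ with the remainder pins down $a+b$ relative to $a_\lambda+b_\lambda$. For the top row this yields $a+b\geq(q-j)+2$, whence $ab\geq a+b-1\geq(q-j)+1>\sigma(\lambda)$ using $(a-1)(b-1)\geq0$; for the bottom row a short case analysis on $a+b$ versus $q+(q-j)$, together with the concavity of a product under a fixed-sum constraint with factors at most $q$, gives $ab\geq q(q-j)=\sigma(\lambda)$.

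When $\varepsilon$ lies in the low regime ($i'\leq q^2-q-1$) we again use $\sigma(\varepsilon)=q^3-\varepsilon$ and must show $q^3-\varepsilon\geq\sigma(\lambda)$. For the row $i=q^2-1$ this is painless: every low-regime element is at most $q^3-q-1$, so $\sigma(\varepsilon)\geq q+1>q-j=\sigma(\lambda)$. The genuinely delicate point, which I expect to be the main obstacle, is the row $i=q^2-q$, where $\sigma(\lambda)=q(q-j)$ and the crude bound $\sigma(\varepsilon)\geq q+1$ is far too weak; here the inequality $\varepsilon<\lambda$ must be used in an essential way. The plan is to argue from the gap structure of $H^\ast(Q)$: with $\lambda=q^3-q^2+j(q+1)$ it suffices to show that no low-regime element of $H^\ast(Q)$ lies in the short interval $(q^3-q^2+qj,\ \lambda)$. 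Any integer there has the form $q^3-q^2+qj+r$ with $1\leq r\leq j-1$; reduction modulo $q$ forces its $j'$-coordinate to equal $r$, hence its $i'$-coordinate to equal $q^2-q+j-r\geq q^2-q+1$, which is in the high regime, a contradiction. Consequently every low-regime $\varepsilon<\lambda$ satisfies $\varepsilon\leq q^3-q^2+qj$, so $\sigma(\varepsilon)=q^3-\varepsilon\geq q(q-j)=\sigma(\lambda)$, completing the argument.
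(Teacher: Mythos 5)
Your proof is correct, but it takes a genuinely different route from the paper's. The paper disposes of the cases $i<q^2-q$ and $j=0$ via Lemma~\ref{lem1}, and handles the two boundary rows $i=q^2-q$ and $i=q^2-1$ by combining the appendix lemmas on $\sigma$ (Lemmas~\ref{lem:sequence}, \ref{lem:symmetricCorner} and \ref{lem:improvedCorner}): for $i=q^2-q$ it identifies $\sigma(\lambda)$ with $\sigma((i+j)q)$ and walks along the decreasing sequence, and for $i=q^2-1$ it uses the crude bound $\sigma(\varepsilon)\geq q+1$ for low-regime $\varepsilon$ exactly as you do. You instead work directly from the identity $\sigma(\lambda)=\max\{q^3-\lambda,(q^2-i)(q-j)\}$ implicit in the proof of Proposition~\ref{prober}, which kills every case where $\sigma(\lambda)=q^3-\lambda$ in one line, and then you resolve the two remaining rows by explicit arithmetic: the substitution $a=q^2-i'$, $b=q-j'$ with $\varepsilon+(a+b)q+b=q^3+q^2+q$ turns the comparison $\varepsilon<\lambda$ into a clean inequality on $(a+b,b)$, and the minimum of $ab$ under a fixed sum with factors bounded by $q$ gives the required lower bound; your residue-class argument showing that no low-regime element of $H^\ast(Q)$ falls in the gap $(q^3-q^2+qj,\lambda)$ is exactly the delicate point the paper hides inside Lemmas~\ref{lem:symmetricCorner} and \ref{lem:improvedCorner}, and you have verified it correctly ($j'=r$ forces $i'=q^2-q+j-r\geq q^2-q+1$). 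The trade-off is that the paper's appendix lemmas are reused in several other proofs, so its version of this proof is shorter in context, while yours is self-contained given only Proposition~\ref{prober} and arguably more transparent about where the inequality is tight (namely at $\varepsilon$ with $\sigma(\varepsilon)=q(q-j)$ in the row $i=q^2-q$). One cosmetic remark: avoid using $\mu$ as the name of a generic semigroup element, since the paper reserves $\mu$ for the order-bound function on the dual side.
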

\begin{proof}
  For both the cases $i< q^2-q$ and $j=0$, the claim follows by Lemma~\ref{lem1}. If $i=q^2-q$, Lemma~\ref{lem:symmetricCorner} implies that $\sigma(\lambda)=\sigma((i+j)q)$, and any $\varepsilon\in H^\ast(Q)$ satisfying $\lambda>\varepsilon>(i+j)q$ has $\sigma(\varepsilon)\geq\sigma((i+j)q)$ by Lemma~\ref{lem:improvedCorner}. This also holds true for $\varepsilon<(i+j)q$ by the first part of the proof.
  
  Finally, for $i=q^2-1$ consider $\varepsilon=i'q+j'(q+1)\in H^\ast(Q)$ with $\varepsilon<\lambda$. If $q^2-q\leq i' \leq q^2-1$, the claim follows by Lemmas~\ref{lem:sequence} and \ref{lem:improvedCorner}. Otherwise, $\varepsilon$ is at most $(q^2-q-1)q+(q-1)(q+1)=q^3-q-1$, meaning that $\sigma(\varepsilon)\geq q+1$ by Lemma~\ref{lem1}. The proof follows by noting that $\sigma(\lambda)\leq q$.
\end{proof}
\begin{lemma}\label{lem:muOrderedPartial}
  Let $\lambda=iq+j(q+1)\in H^\ast(Q)$, meaning that $0\leq i<q^2$ and $0\leq j<q$. In addition, assume that $i\geq q-1$ or $j=0$. If $\varepsilon\in H^\ast(Q)$ satisfies $\varepsilon>\lambda$, then $\mu(\varepsilon)\geq\mu(\lambda)$.
\end{lemma}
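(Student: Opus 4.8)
The plan is to trade the statement about $\mu$ for one about $\sigma$ by means of the reflection symmetry recorded in Proposition~\ref{prober}, and then to invoke the companion Lemma~\ref{lem:sigmaOrderedPartial}. For $\lambda = iq+j(q+1) \in H^\ast(Q)$ put $\lambda^\ast = (q^2-1-i)q+(q-1-j)(q+1)$, which again lies in $H^\ast(Q)$ by \eqref{eq:hAstStructure}. Proposition~\ref{prober} gives $\mu(\lambda) = \sigma(\lambda^\ast)$, and since $\lambda + \lambda^\ast = q^3+q^2-q-1$ is constant, the involution $\lambda \mapsto \lambda^\ast$ is order-reversing on $H^\ast(Q)$. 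Hence for $\varepsilon \in H^\ast(Q)$ the condition $\varepsilon > \lambda$ is equivalent to $\varepsilon^\ast < \lambda^\ast$, and the desired inequality $\mu(\varepsilon) \geq \mu(\lambda)$ is equivalent to $\sigma(\varepsilon^\ast) \geq \sigma(\lambda^\ast)$ holding for every $\varepsilon^\ast < \lambda^\ast$.

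First I would dispose of the case $i \geq q-1$. Then $q^2-1-i \leq q^2-q$, so $\lambda^\ast$ satisfies the first hypothesis of Lemma~\ref{lem:sigmaOrderedPartial}; applying that lemma to $\lambda^\ast$ gives $\sigma(\varepsilon^\ast) \geq \sigma(\lambda^\ast)$ for all $\varepsilon^\ast < \lambda^\ast$, which is exactly the claim by the equivalence above. Under the standing hypothesis ``$i \geq q-1$ or $j=0$'', this leaves only the configurations with $j = 0$ and $0 \leq i \leq q-2$.

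This remaining case is genuinely separate, because reflection sends such a $\lambda$ to the index $(q^2-1-i,\,q-1)$, which satisfies none of the three hypotheses of Lemma~\ref{lem:sigmaOrderedPartial}; here I would argue directly. Substituting \eqref{eqxpression} into the relation of Proposition~\ref{prober} yields the closed form $\mu(iq+j(q+1)) = (i+1)(j+1)$ for $0 \leq i \leq q-1$ and $\mu(iq+j(q+1)) = iq+j(q+1)-(q^2-q-1)$ for $q \leq i \leq q^2-1$; in particular $\mu(\lambda) = i+1$ when $j=0$. It then suffices to show that every $\varepsilon = i'q+j'(q+1)$ with $\mu(\varepsilon) \leq i$ satisfies $\varepsilon < iq = \lambda$, since contraposition forces $\mu(\varepsilon) \geq i+1$ whenever $\varepsilon > \lambda$ (the subcase $i=0$ being the trivial $\lambda=0$). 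Such an $\varepsilon$ must have $i' \leq q-1$, as the linear range gives $\mu(\varepsilon) \geq q+1 > i$; then $(i'+1)(j'+1) \leq i$ forces $i'+j' \leq i-1$ and hence $j' \leq i-1$, so that $\varepsilon = (i'+j')q+j' \leq (i-1)(q+1) < iq$.

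I expect this final case to be the main obstacle: the reflection argument that clears $i \geq q-1$ fails precisely on the bottom edge $j=0$ of the product range, forcing a direct count with the explicit values of $\mu$. The delicate points are to verify that $(i'+1)(j'+1) \leq i$ does imply $i'+j' \leq i-1$ and that the resulting bound $\varepsilon \leq (i-1)(q+1)$ remains strictly below $iq$ over the whole range $0 \leq i \leq q-2$.
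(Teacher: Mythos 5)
Your proof is correct, and while it shares the paper's basic skeleton (reduce $\mu$ to $\sigma$ via the reflection $\lambda\mapsto\lambda^\ast=(q^2-1-i)q+(q-1-j)(q+1)$, then treat the bottom edge $j=0$ separately), the execution differs in both halves. For $i\geq q-1$ the paper splits into two sub-cases: $i>q-1$ is handled by the equality $\mu(\lambda)=n-\lambda'$ coming from the translated Lemma~\ref{corstronger}, and $i=q-1$ requires the $\mu$-versions of Lemmas~\ref{lem:symmetricCorner} and \ref{lem:improvedCorner}; you instead observe that the involution is order-reversing (since $\lambda+\lambda^\ast$ is constant) and simply transport the already-proved Lemma~\ref{lem:sigmaOrderedPartial} across it, noting that $i\geq q-1$ puts $\lambda^\ast$ in the range $i^\ast\leq q^2-q$ covered by that lemma's first hypothesis. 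This merges the two sub-cases and avoids re-deriving any monotonicity facts for $\mu$ -- a genuine streamlining. For $j=0$, $i\leq q-2$ the paper defers to ``the same procedure as the last part of the proof of Lemma~\ref{lem:sigmaOrderedPartial}'', i.e.\ the appendix lemmas on sequences and corners; you replace this with the explicit closed form $\mu(i'q+j'(q+1))=(i'+1)(j'+1)$ for $i'\leq q-1$ and $\mu=\varepsilon-(q^2-q-1)\geq q+1$ for $i'\geq q$, plus the elementary chain $(i'+1)(j'+1)\leq i\Rightarrow i'+j'\leq i-1\Rightarrow\varepsilon\leq(i-1)(q+1)<iq$, all of which I have checked. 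Your version is more self-contained and computational where the paper leans on its appendix; the paper's version keeps all case analyses uniform in style across the two companion lemmas. Both are valid.
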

\begin{proof}
  This proof is similar to the one of Lemma~\ref{lem:sigmaOrderedPartial}. Defining the notation $\lambda'=(q^2-1-i)q+(q-1-j)(q+1)$, the translation of Lemma~\ref{corstronger} into information on $\mu$ gives $\mu(\lambda)=n-\lambda'$ whenever
  \begin{equation}\label{eq:muOrderCondition}
    i>q-1\text{ or }j=q-1.
  \end{equation}
  Additionally, if $\varepsilon\in H^\ast(Q)$ with $\varepsilon>\lambda$, then $\varepsilon'<\lambda'$ where the $\varepsilon'$ is defined in the same way as $\lambda'$. This implies that $\mu(\lambda)<\mu(\varepsilon)$ when $\lambda=iq+j(q+1)$ satisfies \eqref{eq:muOrderCondition}. This immediately proves the claim for $i>q-1$.

  For $i=q-1$ the $\mu$-equivalent of Lemma~\ref{lem:symmetricCorner} gives $\mu(\lambda)=\mu((i-(q-1-j))+(q-1)(q+1))$, and any elements between have greater $\mu$-value by the translation of Lemma~\ref{lem:improvedCorner}. The remaining elements greater than $\lambda$ are covered by the first part of the proof.

  The last part of the proof is $j=0$ and $i<q-1$, which follows the same procedure as the last part of the proof of Lemma \ref{lem:sigmaOrderedPartial}.
\end{proof}

\begin{proposition}\label{prop:inclusionRightCorner}
  Let $2\leq\delta_1\leq q$. Then $\tilde{C}(\delta_2)^\perp\subseteq\tilde{E}(\delta_1)$ if and only if $\delta_2\leq q^3-(\delta_1-2)(q+1)$.
\end{proposition}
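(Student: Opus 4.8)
The plan is to follow the strategy laid out after \eqref{eq:logicInclusion}: exhibit a single element $\lambda\in H^\ast(Q)$ that simultaneously plays the role of $\lambda_1$ in \eqref{eq:inclusionLowCondition} and of $\lambda_2$ in \eqref{eq:inclusionUppCondition}, so that \eqref{eq:deltaMaxLower} and \eqref{eq:deltaMaxUpper} pin down $\delta_2^{\max}=\mu(\lambda)$. Equivalently, one reads off from \eqref{eq:logicInclusion} that the inclusion holds precisely when $\delta_2\leq\min\{\mu(\varepsilon)\mid\sigma(\varepsilon)<\delta_1\}$, so the whole task reduces to minimising $\mu$ over the set $S=\{\varepsilon\in H^\ast(Q)\mid\sigma(\varepsilon)<\delta_1\}$. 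Since $\delta_1\leq q$, formula \eqref{eqxpression} shows that every $\varepsilon=iq+j(q+1)\in S$ must lie in the right strip $q^2-q\leq i\leq q^2-1$ (for $i<q^2-q$ one has $\sigma(\varepsilon)\geq q+1>\delta_1$), and there $\sigma(\varepsilon)=(q^2-i)(q-j)$. My candidate is
\[
  \lambda=(q^2-1)q+(q-\delta_1+1)(q+1),
\]
i.e.\ $i=q^2-1$ and $j=q-\delta_1+1$, which is admissible exactly because $2\leq\delta_1\leq q$.

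First I would verify the three easy facts about $\lambda$. From \eqref{eqxpression}, $\sigma(\lambda)=(q^2-i)(q-j)=1\cdot(\delta_1-1)=\delta_1-1<\delta_1$, so \eqref{eq:inclusionUppCondition} holds. Using the relation $\mu\big((q^2-1-i)q+(q-1-j)(q+1)\big)=\sigma(iq+j(q+1))$ from Proposition~\ref{prober}, I compute $\mu(\lambda)=\sigma\big((\delta_1-2)(q+1)\big)=q^3-(\delta_1-2)(q+1)$, which is the claimed value. Finally, since $i=q^2-1\geq q-1$, Lemma~\ref{lem:muOrderedPartial} gives $\mu(\varepsilon)\geq\mu(\lambda)$ for every $\varepsilon>\lambda$; this is the first conjunct of \eqref{eq:inclusionLowCondition}.

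The main obstacle is the second conjunct of \eqref{eq:inclusionLowCondition}: every $\varepsilon<\lambda$ must satisfy $\sigma(\varepsilon)\geq\delta_1$, i.e.\ $\lambda$ must be the \emph{smallest} element of $S$. Note that Lemma~\ref{lem:sigmaOrderedPartial} (applicable since $i=q^2-1$) only yields $\sigma(\varepsilon)\geq\sigma(\lambda)=\delta_1-1$, which falls one short, so a sharper count is needed. I would argue this by a direct optimisation: writing $a=q^2-i$ and $b=q-j$ (so $1\leq a,b\leq q$ and $\sigma=ab$), a short computation gives $\varepsilon=q^3+q^2+q-\big(aq+b(q+1)\big)$, so minimising the value of $\varepsilon\in S$ is the same as maximising $aq+b(q+1)$ subject to $ab\leq\delta_1-1$. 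From $(a-1)(b-1)\geq 0$ one gets $a+b\leq ab+1\leq\delta_1$, and a case split on whether $a+b=\delta_1$ or $a+b\leq\delta_1-1$ shows the maximum is attained uniquely at $(a,b)=(1,\delta_1-1)$, which is exactly $\lambda$; hence no element of $S$ is smaller than $\lambda$, and the second conjunct holds. With both conjuncts of \eqref{eq:inclusionLowCondition} established and $\lambda$ serving as both $\lambda_1$ and $\lambda_2$, the bounds \eqref{eq:deltaMaxLower} and \eqref{eq:deltaMaxUpper} give $\delta_2^{\max}=\mu(\lambda)=q^3-(\delta_1-2)(q+1)$, and the monotonicity of the inclusion in $\delta_2$ turns this into the stated `if and only if'.
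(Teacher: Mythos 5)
Your proof is correct, and it selects exactly the same witness $\lambda=(q^2-1)q+(q-\delta_1+1)(q+1)$ as the paper, with the same verification of \eqref{eq:inclusionUppCondition} and the same appeal to Lemma~\ref{lem:muOrderedPartial} for the first conjunct of \eqref{eq:inclusionLowCondition}. The one place where you genuinely diverge is the minimality step. The paper handles it by applying Lemma~\ref{lem:sigmaOrderedPartial} not to $\lambda$ but to $\lambda'=\lambda-(q+1)$, which has $\sigma(\lambda')=\delta_1$, and then invokes Lemmas~\ref{lem:sequence} and~\ref{lem:improvedCorner} to dispose of the elements of $H^\ast(Q)$ strictly between $\lambda'$ and $\lambda$ (these are $\lambda'+s$ for $1\le s\le\delta_1-1$, with $\sigma=(1+s)(\delta_1-s)\ge\delta_1$). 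You instead observe, correctly, that Lemma~\ref{lem:sigmaOrderedPartial} applied directly to $\lambda$ only yields $\sigma(\varepsilon)\ge\delta_1-1$, and you close the gap by a self-contained optimisation: in the strip $q^2-q\le i\le q^2-1$ one has $\sigma=ab$ with $a=q^2-i$, $b=q-j$, and $\varepsilon=q^3+q^2+q-\bigl(aq+b(q+1)\bigr)$, so the smallest $\varepsilon$ with $ab\le\delta_1-1$ is found by maximising $aq+b(q+1)$; your argument via $(a-1)(b-1)\ge0$ correctly pins the maximiser to $(a,b)=(1,\delta_1-1)$, i.e.\ to $\lambda$ itself. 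This buys you independence from Lemmas~\ref{lem:sequence} and~\ref{lem:improvedCorner} and makes the minimality of $\lambda$ in $\{\varepsilon\mid\sigma(\varepsilon)<\delta_1\}$ completely explicit, at the cost of a small amount of hand computation; the paper's route is shorter on the page but leans on the appendix machinery. Both are valid.
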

\begin{proof}
  Let $\lambda'=(q^2-1)q+(q-\delta_1)(q+1)$. We have $\sigma(\lambda')=\delta_1$ by \eqref{eqxpression}, and Lemma \ref{lem:sigmaOrderedPartial} implies that $\sigma(\varepsilon)\geq\delta_1$ for all $\varepsilon<\lambda'$. Additionally, Lemmas \ref{lem:sequence} and \ref{lem:improvedCorner} yield that $\lambda=\lambda'+q+1$ is the smallest element in $H^\ast(Q)$ with a strictly smaller $\sigma$-value.
  Combining this with Lemma \ref{lem:muOrderedPartial} applied to $\lambda$ reveals that $\lambda$ satisfies \eqref{eq:inclusionLowCondition}. However, \eqref{eq:inclusionUppCondition} is satisfied as well since $\sigma(\lambda)<\delta_1$. Thus, $\delta_2^{\max}=\mu(\lambda)=q^3-(\delta_1-2)(q+1)$.
\end{proof}

\begin{proposition}\label{prop:inclusionRightMixed}
  Let $q<\delta_1\leq q^2-q$. Then $\tilde{C}(\delta_2)^\perp\subseteq\tilde{E}(\delta_1)$ if and only if
  \begin{align*}
      \delta_2\leq\begin{cases}
        q^3-q^2+q-\delta_1+2 &\text{if }0\leq b\leq a\\
        q^3-q^2-a(q+1) &\text{if }b>a
      \end{cases}
  \end{align*}
  where $\delta_1-(q+1)=aq+b$ for $a\geq 0$ and $0\leq b<q$.
\end{proposition}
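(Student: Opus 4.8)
The plan is to follow the template of Proposition~\ref{prop:inclusionRightCorner}: by the strategy set up around \eqref{eq:logicInclusion}, \eqref{eq:deltaMaxLower}, and \eqref{eq:deltaMaxUpper}, it suffices to compute $\delta_2^{\max}=\min\{\mu(\lambda)\mid\sigma(\lambda)<\delta_1\}$, and I expect this minimum to be witnessed by a single critical element $\lambda$ playing the role of both $\lambda_1$ and $\lambda_2$. Concretely, I would take $\lambda$ to be the \emph{smallest} element of $H^\ast(Q)$ with $\sigma(\lambda)<\delta_1$. Then \eqref{eq:inclusionUppCondition} holds by definition, and the second conjunct of \eqref{eq:inclusionLowCondition}, that $\sigma(\varepsilon)\geq\delta_1$ for all $\varepsilon<\lambda$, holds for free because $\lambda$ is the smallest violator. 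The first conjunct, $\mu(\varepsilon)\geq\mu(\lambda)$ for all $\varepsilon>\lambda$, I would obtain from Lemma~\ref{lem:muOrderedPartial} after checking that the coordinates $(i,j)$ of $\lambda$ satisfy $i\geq q-1$ or $j=0$. Granting this, \eqref{eq:deltaMaxLower} and \eqref{eq:deltaMaxUpper} force $\delta_2^{\max}=\mu(\lambda)$.

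I would compute $\mu(\lambda)$ uniformly through the reflection duality of Proposition~\ref{prober}. Writing $\lambda=iq+j(q+1)$ and $\lambda'=(q^2-1-i)q+(q-1-j)(q+1)$, one has $\mu(\lambda)=\sigma(\lambda')$ and $\lambda+\lambda'=q^3+q^2-q-1$. Since the critical $\lambda$ will have large $i$, its reflection $\lambda'$ lands in the linear range $i'<q^2-q$ of \eqref{eqxpression}, where $\sigma(\lambda')=q^3-\lambda'$; substituting gives the clean identity $\mu(\lambda)=\lambda-(q^2-q-1)$. Thus the whole problem collapses to \emph{locating} $\lambda$, after which $\delta_2^{\max}=\lambda-(q^2-q-1)$ yields the two stated closed forms directly.

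To locate $\lambda$ I would compare the threshold $q^3-\delta_1=q^3-(a+1)q-(b+1)$ with the linear/corner interface $i=q^2-q$, and this is exactly where the case split originates. Using the division-algorithm bookkeeping from Proposition~\ref{prop:dimUsualHermitian}, the integer $q^3-\delta_1$ lies in the linear region precisely when $b\leq a$; in that case its immediate successor $\lambda=q^3-\delta_1+1$ is again linear (or, in the boundary case $b=0$, a corner element with $j=0$) and is the smallest element with $\sigma<\delta_1$, giving $\delta_2^{\max}=q^3-q^2+q-\delta_1+2$. When $b>a$ the threshold sits in the corner region, and the next $b-a-1$ integers above it turn out to be corner elements whose product value $(q^2-i)(q-j)$ is still $\geq\delta_1$; the first genuine violator is $\lambda=q^3-\delta_1+(b-a)$, which is linear with $i=q^2-q-1$, giving $\delta_2^{\max}=q^3-q^2-a(q+1)$. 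In each case I would confirm $\sigma(\lambda)<\delta_1$ from \eqref{eqxpression} and verify the coordinate hypothesis of Lemma~\ref{lem:muOrderedPartial}.

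The main obstacle is the minimality claim in the location step: proving that \emph{all} elements below the proposed $\lambda$ satisfy $\sigma\geq\delta_1$, in particular the corner elements, and among them the $b-a-1$ skipped corner elements when $b>a$. The linear part is immediate from $\sigma=q^3-\lambda$, but the corner part is where $\sigma$ is non-monotone in the integer order, so I expect to need the ordering results already used in Proposition~\ref{prop:inclusionRightCorner}, namely Lemmas~\ref{lem:sigmaOrderedPartial}, \ref{lem:sequence} and~\ref{lem:improvedCorner}, to control $\sigma$ along the corner edge up to the threshold. The remaining work is routine but case-sensitive: treating the boundary values $b=0$ and $b=a$, and checking $i\geq q-1$ (or $j=0$) for small $a$ so that Lemma~\ref{lem:muOrderedPartial} applies.
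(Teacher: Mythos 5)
Your proposal is correct and follows essentially the same route as the paper's proof: in each case you single out the same critical element $\lambda$ (the smallest element of $H^\ast(Q)$ with $\sigma(\lambda)<\delta_1$, namely $q^3-\delta_1+1$ for $b\leq a$ and $q^3-(a+1)(q+1)$ for $b>a$), verify \eqref{eq:inclusionUppCondition} directly and \eqref{eq:inclusionLowCondition} via Lemmas~\ref{lem:sigmaOrderedPartial} and \ref{lem:muOrderedPartial}, and conclude $\delta_2^{\max}=\mu(\lambda)$. The only differences are cosmetic: you merge the paper's $b=0$ and $0<b\leq a$ subcases and compute $\mu(\lambda)$ uniformly through the reflection identity $\mu(\lambda)=\lambda-(q^2-q-1)$ rather than case by case, which is a modest streamlining.
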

\begin{proof}
  First, observe that $aq+b\leq (q-3)q+(q-1)$, meaning that $a$ is at most $q-3$.
  
  Assume that $b=0$. Then $\lambda=(q^2-1-a)q$ has $\sigma(\lambda)=\delta_1-1$, meaning that it satisfies \eqref{eq:inclusionUppCondition}. Note that $\mu(\lambda)=q^3-q^2-aq+1$, which can be rewritten to obtain the claimed expression.
  
  If $0<b\leq a$, we can use $\lambda=(q^2-q-1-a+(b-1))q+(q-1-(b-1))(q+1)$, which satisfies \eqref{eq:inclusionUppCondition} since $\sigma(\lambda)=\delta_1-1$. Here, we see that
  \begin{align*}
    \mu(\lambda)=q^3-q^2-aq-b+1=q^3-q^2+q-\delta_1+2.
  \end{align*}
  Finally, for $b>a$ we can let $\lambda=(q^2-q-1)q+(q-1-a)(q+1)$ with $\sigma(\lambda)=(a+1)q+a+1<\delta_1$. Again, $\lambda$ satisfies \eqref{eq:inclusionUppCondition}. Calculating the value of $\mu$ gives $\mu(\lambda)=q^3-q^2-a(q+1)$.
  
  In all three of the above situations, the element immediately preceding $\lambda$ in $H^\ast(Q)$ is given by $\lambda'=\lambda-1$, and the reader  may verify that $\sigma(\lambda')\geq\delta_1$. In each case applying Lemma \ref{lem:sigmaOrderedPartial} to $\lambda'$ implies that $\sigma(\varepsilon)\geq\delta_1$ for all $\varepsilon<\lambda$. Lemma \ref{lem:muOrderedPartial} applied to $\lambda$ then shows that $\lambda$ satisfies \eqref{eq:inclusionLowCondition} as well. In conclusion, the specified values of $\lambda$ satisfy both \eqref{eq:inclusionLowCondition} and \eqref{eq:inclusionUppCondition}, and computing each value of $\mu(\lambda)$ gives the expression in the proposition.
\end{proof}

\begin{proposition}\label{prop:inclusionMiddle}
  Let $q^2-q<\delta_1\leq q^3-2q^2+2q$. Then $\tilde{C}(\delta_2)^\perp\subseteq\tilde{E}(\delta_1)$ if and only if $\delta_2\leq q^3-q^2+q+2-\delta_1$.
\end{proposition}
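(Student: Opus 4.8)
The plan is to reuse the single-witness strategy of Propositions~\ref{prop:inclusionRightCorner}--\ref{prop:inclusionRightMixed}. Since $\delta_2^{\max}$ is by definition the largest $\delta_2$ for which \eqref{eq:logicInclusion} holds, and \eqref{eq:logicInclusion} says precisely that $\delta_2 \leq \mu(\lambda)$ for every $\lambda$ with $\sigma(\lambda) < \delta_1$, the quantity to compute is $\delta_2^{\max} = \min\{\mu(\lambda) : \sigma(\lambda) < \delta_1\}$. By \eqref{eq:deltaMaxLower} and \eqref{eq:deltaMaxUpper} it suffices to produce one element $\lambda \in H^\ast(Q)$ that both satisfies the upper condition \eqref{eq:inclusionUppCondition}, $\sigma(\lambda) < \delta_1$, and the lower condition \eqref{eq:inclusionLowCondition}; then $\delta_2^{\max} = \mu(\lambda)$, and all that remains is to read off $\mu(\lambda) = q^3 - q^2 + q + 2 - \delta_1$. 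The heuristic behind the choice of $\lambda$ is that for $\delta_1$ in this range the minimiser lies in the non-corner part of $H^\ast(Q)$ immediately past the threshold $q^3 - \delta_1$.

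Concretely I would put $\lambda_0 = q^3 - \delta_1$ and take the witness to be its successor $\lambda = \lambda_0 + 1 = q^3 - \delta_1 + 1$. First I record the form of $\mu$ there: writing $\lambda = iq + j(q+1)$, the reflection identity $\mu(\lambda) = \sigma\big((q^2-1-i)q + (q-1-j)(q+1)\big)$ of Proposition~\ref{prober} puts the reflected index into the non-corner range as soon as $i \geq q$, whence \eqref{eqxpression} gives $\mu(\lambda) = \lambda - (q^2 - q - 1)$ (the same value persisting in the boundary case $i = q-1$, $j = q-1$). Substituting $\lambda = q^3 - \delta_1 + 1$ then produces exactly $q^3 - q^2 + q + 2 - \delta_1$. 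I then verify three structural facts: (i) $\lambda_0 \in H^\ast(Q)$ with $\sigma(\lambda_0) = \delta_1$ and $i$-coordinate at most $q^2 - q$; (ii) $\lambda = \lambda_0 + 1 \in H^\ast(Q)$ with $\sigma(\lambda) = \delta_1 - 1 < \delta_1$; and (iii) the $i$-coordinate of $\lambda$ is at least $q - 1$. Fact (iii) is immediate from $\lambda \geq 2q^2 - 2q + 1$, since writing $\lambda = Aq + B$ with $0 \leq B < q$ forces $A \geq 2q - 2$ and hence $i = A - B \geq q - 1$.

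Given these facts the two inequalities close up quickly. The upper bound $\delta_2^{\max} \leq \mu(\lambda)$ is fact (ii) read through \eqref{eq:inclusionUppCondition}--\eqref{eq:deltaMaxUpper}. For the lower bound I check \eqref{eq:inclusionLowCondition}: its $\sigma$-conjunct, $\sigma(\varepsilon) \geq \delta_1$ for all $\varepsilon < \lambda$, follows by applying Lemma~\ref{lem:sigmaOrderedPartial} to the immediate predecessor $\lambda_0$, whose hypothesis holds by (i) and whose $\sigma$-value is $\delta_1$; its $\mu$-conjunct, $\mu(\varepsilon) \geq \mu(\lambda)$ for all $\varepsilon > \lambda$, follows by applying Lemma~\ref{lem:muOrderedPartial} to $\lambda$, whose hypothesis holds by (iii). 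Hence $\delta_2^{\max} = \mu(\lambda) = q^3 - q^2 + q + 2 - \delta_1$, which is the assertion.

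I expect the main obstacle to be proving facts (i) and (ii) uniformly over the whole range, as this is where one must actually locate $\lambda_0$ and $\lambda_0 + 1$ inside $H^\ast(Q)$. This uses the quotient-at-least-remainder description of $H^\ast(Q)$ from the proof of Proposition~\ref{prop:dimUsualHermitian} together with a short carry analysis according to whether the remainder of $\lambda_0$ modulo $q$ equals $q - 1$ or not; the $i \geq 1$ estimate coming from $\lambda_0 \geq 2q^2 - 2q$ is what guarantees that $\lambda_0 + 1$ remains in $H^\ast(Q)$. The one delicate point is the corner-boundary value $\delta_1 = q^2$, where $\lambda_0 = q^3 - q^2$ has $i$-coordinate exactly $q^2 - q$ and is a corner element: here I would check directly that \eqref{eqxpression} still returns $\sigma(\lambda_0) = \delta_1$ and that passing to $\lambda_0 + 1$ re-enters the non-corner region with $\sigma(\lambda) = \delta_1 - 1$.
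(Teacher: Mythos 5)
Your proposal is correct and follows essentially the same route as the paper's proof: the same witness $\lambda=q^3-\delta_1+1$, with Lemma~\ref{lem:sigmaOrderedPartial} applied to its predecessor $q^3-\delta_1$ and Lemma~\ref{lem:muOrderedPartial} applied to $\lambda$ itself, yielding $\delta_2^{\max}=\mu(\lambda)=q^3-q^2+q+2-\delta_1$. The extra verifications you flag (membership of $\lambda_0$ and $\lambda_0+1$ in $H^\ast(Q)$, the $i$-coordinate bounds, and the boundary cases) are handled correctly and merely make explicit what the paper leaves to the reader.
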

\begin{proof}
  Set $\lambda'=n-\delta_1$ and observe that $\lambda'\geq 2q^2-2q=4g$ where $g$ is the genus of the Hermitian function field. Thus, $\lambda'$ is a non-gap in $H^\ast(Q)$, and $\sigma(\lambda')=\delta_1$ by \eqref{eqxpression}. Lemma \ref{lem:sigmaOrderedPartial} implies that any smaller element of $H^\ast(Q)$ has $\sigma$-value at least $\delta_1$. We see, however, that $\lambda=\lambda'+1$ has $\sigma(\lambda)=\delta_1-1$, and it must be the smallest such value. At the same time it meets the requirements of Lemma \ref{lem:muOrderedPartial}, implying that \eqref{eq:inclusionLowCondition} is fulfilled. As already noted $\lambda$ satisfies \eqref{eq:inclusionUppCondition} as well, meaning that $\delta_2^{\max}=\mu(\lambda)=q^3-q^2+q+2-\delta_1$. 
\end{proof}

\begin{proposition}\label{prop:inclusionLeftMixed}
  Let $q^3-2q^2+2q<\delta_1\leq q^3-q^2$. Then $\tilde{C}(\delta_2)^\perp\subseteq\tilde{E}(\delta_1)$ if and only if
  \begin{align*}
    \delta_2\leq\begin{cases}
      (a+1)q+b+2 &\text{if }b<a\\
      (a+2)q &\text{if }a\leq b<q-1\\
      (a+2)q+1 &\text{if }b=q-1
    \end{cases}
  \end{align*}
  where $q^3-q^2-\delta_1=aq+b$ for $a\geq 0$ and $0\leq b<q$.
\end{proposition}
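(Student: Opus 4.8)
The plan is to apply the general strategy stated just before Proposition~\ref{prop:inclusionRightCorner}: produce $\lambda_1$ meeting \eqref{eq:inclusionLowCondition} and $\lambda_2$ meeting \eqref{eq:inclusionUppCondition} with $\mu(\lambda_1)=\mu(\lambda_2)$, so that $\delta_2^{\max}=\mu(\lambda_1)$ by \eqref{eq:deltaMaxLower} and \eqref{eq:deltaMaxUpper}. Writing $\delta_1=q^3-q^2-aq-b$, the element $\lambda'=q^3-\delta_1=q^2+aq+b=(q+a-b)q+b(q+1)$ lies in $H^\ast(Q)$; the hypothesis on $\delta_1$ gives $0\le aq+b<q^2-2q$, hence $a\le q-3$ and in particular $q\ge 3$. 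Since the first coordinate $q+a-b$ is below $q^2-q$, \eqref{eqxpression} yields $\sigma(\lambda')=\delta_1$, and as $\lambda'$ satisfies the hypotheses of Lemma~\ref{lem:sigmaOrderedPartial}, every $\varepsilon\le\lambda'$ has $\sigma(\varepsilon)\ge\delta_1$. This settles the $\sigma$-half of \eqref{eq:inclusionLowCondition} for any candidate $\lambda_1\le\lambda'+1$, so the work reduces to locating witnesses with the correct $\mu$-value.

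For the $\mu$-values I would first record two closed forms coming from the translation $\mu(\lambda)=\sigma(\bar\lambda)$ of Proposition~\ref{prober}, where $\bar\lambda=q^3+q^2-q-1-\lambda$ is the complement of $\lambda=iq+j(q+1)$. A one-line check of which branch of \eqref{eqxpression} applies to $\bar\lambda$ shows that $\mu(\lambda)=(i+1)(j+1)$ when $i\le q-1$, and $\mu(\lambda)=\lambda-q^2+q+1$ when $i\ge q$. Together with Lemma~\ref{lem:muOrderedPartial} --- available whenever $i\ge q-1$ or $j=0$ --- these identities let me both evaluate $\mu$ at a candidate and verify the $\mu$-half of \eqref{eq:inclusionLowCondition} with essentially no computation.

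I then split according to the position of $b$ relative to $a$, which is exactly what governs the coordinates of the successor $\lambda'+1$ in $H^\ast(Q)$. When $b<a$ the successor is $\lambda'+1=(q+a-b-1)q+(b+1)(q+1)$ with first coordinate $q+a-b-1\ge q$, and when $b=q-1$ it is $\lambda'+1=(q+a+1)q$ with $j=0$; in both subcases $\lambda'+1$ has first coordinate at least $q-1$, so Lemma~\ref{lem:muOrderedPartial} closes the $\mu$-half of \eqref{eq:inclusionLowCondition}, while $\sigma(\lambda'+1)=\delta_1-1$ gives \eqref{eq:inclusionUppCondition}. Taking $\lambda_1=\lambda_2=\lambda'+1$ and evaluating the second $\mu$-identity produces $\mu=(a+1)q+b+2$ and $\mu=(a+2)q+1$ respectively, matching the first and third lines of the claim.

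The main obstacle is the middle range $a\le b<q-1$, where the successor $\lambda'+1$ has first coordinate $q+a-b-1<q-1$; here Lemma~\ref{lem:muOrderedPartial} no longer applies to it, and it is not the $\mu$-minimiser. The remedy is to decouple the two witnesses. For the upper bound I would take $\lambda_2=(a+1)q+(q-1)(q+1)$, which exceeds $\lambda'$ precisely because $b<q-1$, so that $\sigma(\lambda_2)<\delta_1$, and whose first coordinate $a+1\le q-1$ gives $\mu(\lambda_2)=(a+2)q$ from the first identity. For the lower bound the decisive choice is to force the first coordinate to equal $q-1$: take $\lambda_1=(q-1)q+(a+1)(q+1)$, which then meets the hypothesis of Lemma~\ref{lem:muOrderedPartial} and also has $\mu(\lambda_1)=q\cdot(a+2)=(a+2)q$. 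It remains only to note $\lambda_1=q^2+aq+a+1\le\lambda'+1$, equivalent to $a\le b$, which is the defining inequality of this case and delivers the $\sigma$-half of \eqref{eq:inclusionLowCondition}. With $\mu(\lambda_1)=\mu(\lambda_2)=(a+2)q$ the value $\delta_2^{\max}=(a+2)q$ follows. Beyond spotting that $\lambda_1$ should be pinned at first coordinate $q-1$, the only care needed throughout is consistent bookkeeping of whether each relevant first coordinate lies below or at/above $q^2-q$, i.e.\ which branch of \eqref{eqxpression} is in force.
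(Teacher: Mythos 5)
Your proof is correct and follows essentially the same route as the paper: the same witnesses $\lambda'+1$ for $b<a$ and $b=q-1$, and the same decoupled pair $\lambda_1=(q-1)q+(a+1)(q+1)$, $\lambda_2=(a+1)q+(q-1)(q+1)$ in the middle range, verified via Lemmas~\ref{lem:sigmaOrderedPartial} and \ref{lem:muOrderedPartial}. The only differences are cosmetic — you merge the paper's separate $a=b$ case into the range $a\leq b<q-1$ (which works, since there $\lambda_1=\lambda'+1$) and you make the closed forms for $\mu$ explicit where the paper just computes them.
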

\begin{proof}
  First note that $aq+b\leq (q-2)q$, implying that $a$ is at most $q-2$.
  Assume that $b<a$ and let $\lambda=(q+a-(b+1))q+(b+1)(q+1)$. This element satisfies the requirements of Lemma \ref{lem:muOrderedPartial}, and $\lambda-1$ satisfies the requirements of Lemma \ref{lem:sigmaOrderedPartial}. This means that $\lambda$ fulfils \eqref{eq:inclusionLowCondition}. Simultaneously, \eqref{eq:inclusionUppCondition} is met since $\sigma(\lambda)=\delta_1-1$. Thus, $\delta_2^{\max}=\mu(\lambda)=(a+1)q+b+2$.

  Let $a=b$ and $\lambda=(q-1)q+(a+1)(q+1)$. Applying Lemmas \ref{lem:sigmaOrderedPartial} and \ref{lem:muOrderedPartial} to $\lambda-1$ and $\lambda$ as above, we see that $\lambda$ satisfies \eqref{eq:inclusionLowCondition}. It also meets \eqref{eq:inclusionUppCondition} since $\sigma(\lambda)=\delta_1-1$. Subsequently, $\delta_2^{\max}=\mu(\lambda)=(a+2)q$.
  
  Now, consider $a< b<q-1$ and let $\lambda_1=(q-1)q+(a+1)(q+1)$ and $\lambda_2=(a+1)q+(q-1)(q+1)$. We can apply both Lemmas \ref{lem:sigmaOrderedPartial} and \ref{lem:muOrderedPartial} to $\lambda_1$ to obtain that it satisfies \eqref{eq:inclusionLowCondition}. On the other hand, $\sigma(\lambda_2)<\delta_1$ implies that \eqref{eq:inclusionUppCondition} is fulfilled. In addition, $\mu(\lambda_1)=\mu(\lambda_2)$, which gives $\delta_2^{\max}=\mu(\lambda_1)=(a+2)q$.
  
  The remaining part is $b=q-1$. If this happens, note that $\lambda=(q+a+1)q$ has $\sigma(\lambda)=\delta_1-1$, whereas $\lambda-1=(a+1)q+(q-1)(q+1)$ has $\sigma(\lambda-1)=\delta_1$. By the same arguments as in the first part of the proof, we obtain that $\delta_2^{\max}=\mu(\lambda)=(a+2)q+1$.
\end{proof}

\begin{proposition}\label{prop:inclusionLeftCorner}
  Let $q^3-q^2\leq\delta_1\leq q^3$. Then $\tilde{C}(\delta_2)^\perp\subseteq\tilde{E}(\delta_1)$ if and only if
  \begin{align*}
    \delta_2\leq\begin{cases}
      a+1 &\text{if }b< a\\
      a+2 &\text{if }b\geq a
    \end{cases}
  \end{align*}
  where $q^3-\delta_1=aq+b$ for $a\geq 0$ and $0\leq b<q$.
\end{proposition}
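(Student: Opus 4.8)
The plan is to run the same $\lambda_1,\lambda_2$ template used in the previous propositions: reduce the inclusion \eqref{eq:logicInclusion} to computing $\delta_2^{\max}=\min\{\mu(\lambda)\mid\lambda\in H^\ast(Q),\ \sigma(\lambda)<\delta_1\}$, bounding this value from below through \eqref{eq:inclusionLowCondition}/\eqref{eq:deltaMaxLower} and from above through \eqref{eq:inclusionUppCondition}/\eqref{eq:deltaMaxUpper}. First I would fix the regime. Since $q^3-\delta_1=aq+b\le q^2$, every element $\lambda=i'q+j'(q+1)$ that can realise the minimum sits in the left corner where $i'$ is small, so by \eqref{eqxpression} and Proposition~\ref{prober} one has $\sigma(\lambda)=q^3-\lambda$ and $\mu(\lambda)=(i'+1)(j'+1)$. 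Hence $\sigma(\lambda)<\delta_1$ is merely $\lambda>aq+b$, and minimising $\mu$ means locating, just above the threshold $aq+b$, an element of $H^\ast(Q)$ with smallest product $(i'+1)(j'+1)$. The natural candidates lie on the two coordinate axes: $aq,(a+1)q$ with $j'=0$, and $a(q+1),(a+1)(q+1)$ with $i'=0$.

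For the case $b\ge a$ I would take the single element $\lambda=(a+1)q$. The upper bound is immediate: $\sigma(\lambda)=q^3-(a+1)q<q^3-aq-b=\delta_1$ since $b<q$, so \eqref{eq:deltaMaxUpper} gives $\delta_2^{\max}\le\mu(\lambda)=a+2$. For the matching lower bound I would check \eqref{eq:inclusionLowCondition}: as $j'=0$, Lemma~\ref{lem:muOrderedPartial} yields $\mu(\varepsilon)\ge\mu(\lambda)$ for all $\varepsilon>\lambda$, while every $\varepsilon\in H^\ast(Q)$ with $\varepsilon<(a+1)q$ has quotient at most $a$, hence $\varepsilon\le a(q+1)\le aq+b$ (using $b\ge a$), so $\sigma(\varepsilon)=q^3-\varepsilon\ge\delta_1$. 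Both conjuncts hold, \eqref{eq:deltaMaxLower} gives $\delta_2^{\max}\ge a+2$, and therefore $\delta_2^{\max}=a+2$.

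The case $b<a$ is where the main obstacle appears. Here the element realising the minimum of $\mu$ above the threshold is the off-axis point $a(q+1)$ (with $i'=0,\ j'=a$), and $\sigma(a(q+1))=q^3-aq-a<\delta_1$ precisely because $b<a$; this yields the upper bound $\delta_2^{\max}\le\mu(a(q+1))=a+1$ via \eqref{eq:deltaMaxUpper}. However, this same element cannot drive the lower bound, since with $i'=0$ and $j'=a\ne 0$ it meets neither hypothesis of Lemma~\ref{lem:muOrderedPartial}, so I cannot certify that all larger elements have $\mu\ge a+1$. The remedy is to separate the two roles: keep $\lambda_2=a(q+1)$ only for the upper bound and choose the \emph{distinct} axis element $\lambda_1=aq$ for the lower bound, observing $\mu(aq)=(a+1)(0+1)=a+1=\mu(a(q+1))$. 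Because $\lambda_1=aq$ has $j'=0$, Lemma~\ref{lem:muOrderedPartial} controls the larger elements and Lemma~\ref{lem:sigmaOrderedPartial} gives $\sigma(\varepsilon)\ge\sigma(aq)=q^3-aq\ge\delta_1$ for all $\varepsilon<aq$; hence $\lambda_1$ satisfies \eqref{eq:inclusionLowCondition} and $\delta_2^{\max}\ge a+1$. Matching the bounds gives $\delta_2^{\max}=a+1$.

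Finally I would dispose of the single boundary value $\delta_1=q^3-q^2$, i.e.\ $a=q,\ b=0$, where the off-axis partner $a(q+1)$ violates $0\le j'<q$; this is the right endpoint already treated by Proposition~\ref{prop:inclusionLeftMixed} and may be excluded here, and throughout the remaining range $a\le q-1$, so all chosen elements are legitimate members of $H^\ast(Q)$. The step I expect to be most delicate is exactly this matching of two \emph{different} semigroup elements with a common $\mu$-value when $b<a$: the minimiser of $\mu$ is forced off the axis, yet the monotonicity Lemmas~\ref{lem:sigmaOrderedPartial} and~\ref{lem:muOrderedPartial} apply only on the axis $j'=0$, so the entire argument rests on the identity $(a+1)\cdot 1=1\cdot(a+1)$ linking $aq$ and $a(q+1)$.
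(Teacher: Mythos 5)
Your proof is correct and follows the same template as the paper's, built around the same three semigroup elements $aq$, $a(q+1)$ and $(a+1)q$; the treatment of the case $b\geq a$ coincides with the paper's almost verbatim. In the case $b<a$, however, your version is actually the repaired form of what the paper writes: the paper's text asserts that $a(q+1)$ "meets the assumptions of Lemmas~\ref{lem:sigmaOrderedPartial} and \ref{lem:muOrderedPartial}" and hence satisfies \eqref{eq:inclusionLowCondition}, and that $\sigma(aq)<\delta_1$. Both assertions are wrong as stated: $a(q+1)=0\cdot q+a(q+1)$ has $i=0$ and $j=a\neq 0$, so Lemma~\ref{lem:muOrderedPartial} does not apply to it (and for $b\leq a-2$ it genuinely violates \eqref{eq:inclusionLowCondition}, since the elements $aq+b+1,\ldots,aq+a-1$ of $H^\ast(Q)$ lie below it and have $\sigma<\delta_1$), while $\sigma(aq)=q^3-aq=\delta_1+b\geq\delta_1$. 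Your assignment --- $a(q+1)$ only as the witness for \eqref{eq:inclusionUppCondition}, and the axis element $aq$ (with $j=0$, hence amenable to both lemmas, and with the matching value $\mu(aq)=a+1=\mu(a(q+1))$) as the witness for \eqref{eq:inclusionLowCondition} --- is what the paper's argument must have intended. Your exclusion of the endpoint $\delta_1=q^3-q^2$ is also a genuine catch: there $a=q$, so $a(q+1)$ no longer has the form $0\cdot q+j(q+1)$ with $j<q$, and the stated bound $a+1=q+1$ disagrees with the correct value $2q$ delivered by Proposition~\ref{prop:inclusionLeftMixed} at that point (e.g.\ $q=4$, $\delta_1=48$ gives $\delta_2^{\max}=8$, not $5$); restricting to $q^3-q^2<\delta_1$, where $a\leq q-1$, is the right fix.
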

\begin{proof}
  Assume first that $b<a$, and set $\lambda_1=aq$ and $\lambda_2=a(q+1)$. The latter meets the assumptions of Lemmas \ref{lem:sigmaOrderedPartial} and \ref{lem:muOrderedPartial}, meaning that $\lambda_2$ satisfies \eqref{eq:inclusionLowCondition}. Observe that $\sigma(\lambda_1)<\delta_1$ and $\mu(\lambda_1)=\mu(\lambda_2)$. From this we see that $\delta_2^{\max}=\mu(\lambda_1)=a+1$.
  
  Otherwise, if $b\geq a$, let $\lambda=(a+1)q$, which satisfies \eqref{eq:inclusionUppCondition} by the observation that $\sigma(\lambda)<\delta_1$. The element of $H^\ast(Q)$ immediately preceding $\lambda$ is $\lambda'=a(q+1)$, which has $\sigma(\lambda')\geq\delta_1$. As in the previous proofs, applying Lemmas \ref{lem:sigmaOrderedPartial} and \ref{lem:muOrderedPartial} to $\lambda'$ and $\lambda$, respectively, shows that $\lambda$ fulfils \eqref{eq:inclusionLowCondition}. Hence, $\delta_2^{\max}=\mu(\lambda)=a+2$.  
\end{proof}
It is worth noting that $\tilde{C}(\delta_2)^\perp\subseteq\tilde{E}(\delta_1)$ if and only if every $\lambda\in H^\ast(Q)$ with $\sigma(\lambda)<\delta_1$ also satisfies $\mu(\lambda)\geq\delta_2$. By Proposition \ref{prober} this may be rewritten as $\mu(\lambda)<\delta_1$ implying $\sigma(\lambda)\geq\delta_2$. Hence, the inclusion of codes is symmetric in the sense that $\tilde{C}(\delta_2)^\perp\subseteq\tilde{E}(\delta_1)$ if and only if $\tilde{C}(\delta_1)^\perp\subseteq\tilde{E}(\delta_2)$.

One could expect that this symmetry would show up in Propositions \ref{prop:inclusionRightCorner}--\ref{prop:inclusionLeftCorner} as well. However, this is not the case since the propositions describe the \emph{maximal} value of $\delta_2$ such that $\tilde{C}(\delta_2)^\perp\subseteq\tilde{E}(\delta_1)$ for a given value of $\delta_1$. Although this implies that $\tilde{C}(\delta_1)^\perp\subseteq\tilde{E}(\delta_2)$, there may be a $\delta'>\delta_1$ such that $\tilde{C}(\delta')^\perp\subseteq\tilde{E}(\delta_2)$ as shown in Example \ref{ex:notSymmetric} below.

\begin{example}\label{ex:notSymmetric}
  Let $q=4$ and set $\delta_1=6$. Then considering the values of
  $\sigma$ and $\mu$ as given in Table \ref{tab1} of the Appendix reveals that the greatest possible value of $\delta_2$ such that $\tilde{C}(\delta_2)^\perp\subseteq\tilde{E}(\delta_1)$ is given by $\delta_2=48$.
  By the observations above we know that this implies $\tilde{C}(6)^\perp\subseteq\tilde{E}(48)$ as well. However, inspecting the tables again will reveal that the $\tilde{C}(8)^\perp$ is also a subset of $\tilde{E}(48)$. Notice that both of these observations agree with the formulas in Propositions \ref{prop:inclusionRightMixed} and \ref{prop:inclusionLeftMixed}.
\end{example}

\section{Improved nested codes of not too small codimension}\label{sec4.5}
Based on our findings in Sections~\ref{sec:dimImprovedCodes} and \ref{sec:inclusion}, we are now able to describe the parameters of our first construction of nested code pairs, namely the one where the codimension is not too small.
If $\delta_1,\delta_2\in H^\ast(Q)$ satisfy the conditions in one of
the Propositions
\ref{prop:inclusionRightCorner}--\ref{prop:inclusionLeftCorner}, it
follows that
$\tilde{C}(\delta_2)^\perp\subseteq\tilde{E}(\delta_1)$. By the bounds
\eqref{eqrel1} and \eqref{eqrel2} and the observation following
Proposition \ref{prober}, the relative distance of this code pair is
exactly $d(\tilde{E}(\delta_1))=\delta_1$, and the relative distance of its dual is $d(\tilde{C}(\delta_2)^\perp)=\delta_2$. 

For each possible pair of designed distances described in Propositions~\ref{prop:inclusionRightCorner}--\ref{prop:inclusionLeftCorner}, we can combine the dimensions of the usual Hermitian codes with the dimension bounds of Propositions \ref{prop:dimTricky} and \ref{prop:dimNotTricky}. This gives bounds on the codimension, $\ell$, of $\tilde{C}(\delta_2)^\perp\subseteq\tilde{E}(\delta_1)$.
\begin{proposition}\label{prop:codimension1}
  Let $\delta_1,\delta_2\in H^\ast(Q)$, and $\delta_1\leq q$. Further, let $\delta_2$ satisfy the conditions of Proposition \ref{prop:inclusionRightCorner}, meaning that $\tilde{C}(\delta_2)^\perp\subseteq\tilde{E}(\delta_1)$. Denote their codimension by $\ell$.

  If $\delta_2\leq q$, then
  \begin{equation*}
    \ell\geq q^3-\lfloor\delta_1+\delta_1\ln(\delta_1)\rfloor-\lfloor\delta_2+\delta_2\ln(\delta_2)\rfloor.
  \end{equation*}
  If $q<\delta_2\leq q^2-q$, then
  \begin{align*}
    \ell\geq q^3+q^2-g+1-\lfloor\delta_1+\delta_1&\ln(\delta_1)\rfloor -\delta_2-\sum_{s=0}^{a+b}(s+1)\\
    &-\lfloor\delta_2+\delta_2\ln\left(q^2/\delta_2\right)\rfloor+\max\{a,0\}
  \end{align*}
  where $a$ and $b$ are as in Proposition \ref{prop:dimTricky} applied to $\delta_2$.

  If $q^2-q<\delta_2< q^3-2g+2$, then
  \begin{equation*}
    \ell\geq q^3-g+1-\lfloor\delta_1+\delta_1\ln(\delta_1)\rfloor-\delta_2.
  \end{equation*}
  Finally, if $q^3-2g+2\leq\delta_2$, we have
  \begin{equation*}
    \ell\geq \sum_{s=0}^{a+b}(s+1)-\lfloor\delta_1+\delta_1\ln(\delta_1)\rfloor-\max\{a,0\}
  \end{equation*}
  where $q^3-\delta_2=aq+b(q+1)$ for $-q<a<q$ and $0\leq b<q$.
\end{proposition}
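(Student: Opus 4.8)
The plan is to reduce the codimension to a sum of two dimensions and then substitute the dimension bounds already established. First I would observe that the codimension is $\ell = \dim\tilde{E}(\delta_1) - \dim\tilde{C}(\delta_2)^\perp$. Using $\dim\tilde{C}(\delta_2)^\perp = n - \dim\tilde{C}(\delta_2)$ together with the identity $\tilde{E}(\delta) = \tilde{C}(\delta)$ from \eqref{eqimpens}, this gives $\dim\tilde{C}(\delta_2)^\perp = q^3 - \dim\tilde{E}(\delta_2)$, and hence
\begin{align*}
  \ell = \dim\tilde{E}(\delta_1) + \dim\tilde{E}(\delta_2) - q^3.
\end{align*}
This single identity underlies all four cases; the only remaining task is to insert the appropriate lower bound for each of the two dimensions and cancel the common $q^3$ term.

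Next I would bound $\dim\tilde{E}(\delta_1)$. Since the hypothesis includes $\delta_1 \leq q$, Proposition~\ref{prop:dimNotTricky} applies and yields $\dim\tilde{E}(\delta_1) \geq q^3 - \lfloor\delta_1 + \delta_1\ln(\delta_1)\rfloor$ in every case. The four cases of the statement then correspond precisely to the four ranges in which $\delta_2$ may lie, and in each I would apply the matching dimension result to $\dim\tilde{E}(\delta_2)$: for $\delta_2 \leq q$ use Proposition~\ref{prop:dimNotTricky}; for $q < \delta_2 \leq q^2-q$ use Proposition~\ref{prop:dimTricky}; and for $\delta_2 > q^2-q$ use Proposition~\ref{prop:dimUsualHermitian}, whose first and second cases (according to whether $\delta_2 < q^3-2g+2$ or $\delta_2 \geq q^3-2g+2$) produce the third and fourth cases of the present proposition, respectively.

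Substituting each such bound into the displayed identity reproduces the four listed inequalities after routine simplification. I do not expect a genuine obstacle here: once the reduction $\ell = \dim\tilde{E}(\delta_1) + \dim\tilde{E}(\delta_2) - q^3$ is in place, the argument is purely a matter of bookkeeping across the ranges of $\delta_2$, with the parameters $a$ and $b$ taken exactly as in the proposition being invoked. The only point requiring a little care is the second case, where one must verify that the two sub-cases of Proposition~\ref{prop:dimTricky} (namely $0 < a$ and $a \leq 0$) combine into the single stated expression through the term $+\max\{a,0\}$.
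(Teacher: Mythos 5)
Your proposal is correct and follows essentially the same route as the paper: the authors likewise write $\ell$ as $\dim\tilde{E}(\delta_1)$ minus $\dim\tilde{C}(\delta_2)^\perp=q^3-\dim\tilde{E}(\delta_2)$ and then insert the lower bounds from Propositions~\ref{prop:dimNotTricky}, \ref{prop:dimTricky}, and \ref{prop:dimUsualHermitian} in the respective ranges of $\delta_2$. Your remark about the two sub-cases of Proposition~\ref{prop:dimTricky} merging via the $\max\{a,0\}$ term is the only bookkeeping point, and it checks out.
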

\begin{proof}
  By Proposition~\ref{prop:dimNotTricky} we have $\dim\tilde{E}(\delta_1)\geq q^3-\lfloor\delta_1+\delta_1\ln(\delta_1)\rfloor$. In each case, we can obtain a bound on the codimension $\ell$ by subtracting an upper bound on $\tilde{C}(\delta_2)^\perp=q^3-\dim\tilde{E}(\delta_2)$. In turn, such a bound can be obtained via a lower bound on $\dim\tilde{E}(\delta_2)$.
  
  In the case $\delta_2\leq q$ the dimension of $\tilde{E}(\delta_2)$ can again be bounded by Proposition~\ref{prop:dimNotTricky}.
  In the second case the bound on $\dim\tilde{E}(\delta_2)$ follows by Proposition~\ref{prop:dimTricky}. Proposition~\ref{prop:dimUsualHermitian} delivers the bounds in the two final cases.
\end{proof}

\begin{proposition}\label{prop:codimension2}
  Let $\delta_1,\delta_2\in H^\ast(Q)$ and $q<\delta_1\leq q^2-q$. Further, let $\delta_2$ satisfy the conditions of Proposition~\ref{prop:inclusionRightCorner}, meaning that $\tilde{C}(\delta_2)^\perp\subseteq\tilde{E}(\delta_1)$. Denote their codimension by $\ell$ and let $a_1,b_1$ be as in Proposition~\ref{prop:dimTricky} applied to $\delta_1$.

  If $\delta_2\leq q$, then
  \begin{align*}
    \ell\geq q^3+q^2-g+1-\delta_1-\sum_{s=0}^{a_1+b_1}(&s+1)+\max\{a_1,0\}\\
    &-\lfloor\delta_1+\delta_1\ln(q^2/\delta_1)\rfloor-\lfloor\delta_2+\delta_2\ln(\delta_2)\rfloor.
  \end{align*}
  If $q<\delta_2\leq q^2-q$, then
  \begin{align*}
    \ell\geq &q^3+2q^2-2g+2-(\delta_1+\delta_2)-\sum_{s=0}^{a_1+b_1}(s+1)+\max\{a_1,0\}\\
    &-\sum_{s=0}^{a_2+b_2}(s+1)+\max\{a_2,0\}-\lfloor\delta_1+\delta_1\ln(q^2/\delta_1)\rfloor-\lfloor\delta_2+\delta_2\ln(q^2/\delta_2)\rfloor
  \end{align*}
  where $a_2$ and $b_2$ are as in Proposition~\ref{prop:dimTricky} applied to $\delta_2$.
  
  Finally, if $q^2-q<\delta_2$, then
  \begin{equation*}
    \ell\geq q^3+q^2-2g+2-(\delta_1+\delta_2)-\sum_{s=0}^{a_1+b_1}(s+1)-\lfloor\delta_1+\delta_1\ln(q^2/\delta_1)\rfloor+\max\{a_1,0\}.
  \end{equation*}
\end{proposition}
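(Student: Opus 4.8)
The plan is to mirror the proof of Proposition~\ref{prop:codimension1}, the only change being that the designed distance $\delta_1$ now lies in the range $q<\delta_1\leq q^2-q$, so the dimension of $\tilde{E}(\delta_1)$ is estimated through Proposition~\ref{prop:dimTricky} rather than Proposition~\ref{prop:dimNotTricky}. The starting point is the codimension identity. Since \eqref{eqimpens} gives $\tilde{C}(\delta_2)=\tilde{E}(\delta_2)$, we have $\dim\tilde{C}(\delta_2)^\perp=q^3-\dim\tilde{E}(\delta_2)$, and hence
\[
  \ell=\dim\tilde{E}(\delta_1)-\dim\tilde{C}(\delta_2)^\perp=\dim\tilde{E}(\delta_1)+\dim\tilde{E}(\delta_2)-q^3.
\]
Thus every bound on $\ell$ is obtained by adding a lower bound for $\dim\tilde{E}(\delta_1)$ to a lower bound for $\dim\tilde{E}(\delta_2)$ and subtracting $q^3$.

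For the first summand, because $q<\delta_1\leq q^2-q$, I would apply Proposition~\ref{prop:dimTricky} to $\delta_1$, with $a_1,b_1$ as stated there and the two cases $a_1>0$ and $a_1\leq0$ merged into a single expression through $\max\{a_1,0\}$. This produces one fixed lower bound for $\dim\tilde{E}(\delta_1)$ that reappears in all three displayed formulas. It then remains to bound $\dim\tilde{E}(\delta_2)$, and the value of $\delta_2$ dictates which earlier dimension result is invoked. I would split into exactly the three cases of the statement: for $\delta_2\leq q$ use Proposition~\ref{prop:dimNotTricky}; for $q<\delta_2\leq q^2-q$ use Proposition~\ref{prop:dimTricky} again (now with $a_2,b_2$); and for $q^2-q<\delta_2$ use Proposition~\ref{prop:dimUsualHermitian}. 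Substituting each bound into the codimension identity and collecting terms reproduces the three displayed inequalities; in the middle case the two applications of Proposition~\ref{prop:dimTricky} combine their $-g$ contributions into $-2g+2$ and contribute the two separate sums together with the two logarithmic floors.

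The one point requiring genuine care is the third case, since Proposition~\ref{prop:dimUsualHermitian} has two subcases and only the first of them—valid for $q^2-q<\delta_2<q^3-2g+2$ and giving $\dim\tilde{E}(\delta_2)=q^3-g+1-\delta_2$—produces the stated formula. I would therefore verify that the hypotheses force $\delta_2<q^3-2g+2$. As $\tilde{C}(\delta_2)^\perp\subseteq\tilde{E}(\delta_1)$ holds with $q<\delta_1\leq q^2-q$, the admissible values of $\delta_2$ are bounded above by Proposition~\ref{prop:inclusionRightMixed}: in both of its cases the maximum is at most $q^3-q^2+q-\delta_1+2$, which is strictly below $q^3-q^2+2$ because $\delta_1>q$. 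Using $2g=q^2-q$, the quantity $q^3-q^2+2$ equals $q^3-2g-q+2<q^3-2g+2$, so the larger subcase of Proposition~\ref{prop:dimUsualHermitian} never occurs and the simpler dimension formula applies throughout the range $q^2-q<\delta_2$. With this observation secured, the three computations are routine; the main obstacle is purely bookkeeping, namely correctly tracking the $-g$ versus $-2g+2$ terms, the $\max\{a_i,0\}$ corrections, and the two distinct kinds of logarithmic floor across the case split.
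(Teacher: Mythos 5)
Your proposal is essentially identical to the paper's proof: bound $\dim\tilde{E}(\delta_1)$ via Proposition~\ref{prop:dimTricky}, then bound $\dim\tilde{E}(\delta_2)$ by Proposition~\ref{prop:dimNotTricky}, \ref{prop:dimTricky}, or \ref{prop:dimUsualHermitian} according to the three ranges, and the paper likewise invokes Proposition~\ref{prop:inclusionRightMixed} to justify using the first part of Proposition~\ref{prop:dimUsualHermitian} in the last case. One tiny slip in that verification: the second case of Proposition~\ref{prop:inclusionRightMixed} gives $q^3-q^2-a(q+1)$, which exceeds $q^3-q^2+q-\delta_1+2$ whenever $b>a+1$, so your claim that both cases are bounded by the latter is not quite right --- but since $a\geq 0$ that case is at most $q^3-q^2<q^3-2g+2$ anyway, and the conclusion stands.
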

\begin{proof}
  We use the same strategy as in the proof of Proposition~\ref{prop:codimension1}. A bound for the dimension of $\tilde{E}(\delta_1)$ can be found in Proposition~\ref{prop:dimTricky}. For $\delta_2\leq q$ the bound on $\dim\tilde{E}(\delta_2)$ comes from Proposition~\ref{prop:dimNotTricky}, and in the case $q<\delta_2\leq q^2-q$ it comes from Proposition~\ref{prop:dimTricky}. In the final case the bound follows from Proposition~\ref{prop:dimUsualHermitian}, where we note that $\delta_2\leq q^3-2g+2$ by Proposition~\ref{prop:inclusionRightMixed} and the assumption on $\delta_1$.
\end{proof}

\begin{proposition}\label{prop:codimension:3}
  Let $\delta_1,\delta_2\in H^\ast(Q)$ and $q^2-q<\delta_1<q^3-2g+2$. Further, let $\delta_2$ satisfy the conditions of Proposition~\ref{prop:inclusionRightCorner}, meaning that $\tilde{C}(\delta_2)^\perp\subseteq\tilde{E}(\delta_1)$. Denote their codimension by $\ell$.

  If $\delta_2\leq q$, we have
  \begin{equation*}
    \ell\geq q^3-g+1-\delta_1-\lfloor\delta_2+\delta_2\ln(\delta_2)\rfloor.
  \end{equation*}
  If $q<\delta_2\leq q^2-q$, then
  \begin{equation*}
    \ell\geq q^3+q^2-2g+2-(\delta_1+\delta_2)-\sum_{s=0}^{a+b}(s+1)-\lfloor\delta_2+\delta_2\ln(q^2/\delta_2)\rfloor+\max\{a,0\}
  \end{equation*}
  where $a$ and $b$ are as in Proposition~\ref{prop:dimTricky} applied to $\delta_2$.

  Finally, for $q^2-q<\delta_2$ we have
  \begin{equation*}
    \ell= q^3-\delta_1-\delta_2-2g+2.
  \end{equation*}
\end{proposition}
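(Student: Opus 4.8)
The plan is to mirror the strategy of the proofs of Propositions~\ref{prop:codimension1} and~\ref{prop:codimension2}: rewrite the codimension as a sum of two dimensions and then substitute the relevant dimension formulas. Concretely, since $\tilde{C}(\delta_2)=\tilde{E}(\delta_2)$ by~\eqref{eqimpens}, the dual code satisfies $\dim\tilde{C}(\delta_2)^\perp=q^3-\dim\tilde{E}(\delta_2)$, so the codimension of the pair $\tilde{C}(\delta_2)^\perp\subseteq\tilde{E}(\delta_1)$ is
\begin{align*}
  \ell=\dim\tilde{E}(\delta_1)-\dim\tilde{C}(\delta_2)^\perp=\dim\tilde{E}(\delta_1)+\dim\tilde{E}(\delta_2)-q^3.
\end{align*}
First I would pin down the contribution of $\delta_1$: because $q^2-q<\delta_1<q^3-2g+2$, the first case of Proposition~\ref{prop:dimUsualHermitian} applies and delivers the \emph{exact} value $\dim\tilde{E}(\delta_1)=q^3-g+1-\delta_1$.

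It then remains to insert the appropriate bound (or exact value) for $\dim\tilde{E}(\delta_2)$ in each of the three ranges of $\delta_2$. For $\delta_2\leq q$ I would invoke Proposition~\ref{prop:dimNotTricky}; for $q<\delta_2\leq q^2-q$ I would invoke Proposition~\ref{prop:dimTricky}, observing that its two sub-cases $0<a$ and $a\leq 0$ merge into a single expression via the term $\max\{a,0\}$; and for $q^2-q<\delta_2$ I would again use the first case of Proposition~\ref{prop:dimUsualHermitian}. In every range, substituting the chosen bound together with the exact value $q^3-g+1-\delta_1$ into the displayed identity for $\ell$ and simplifying produces exactly the three asserted formulas. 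In particular, in the third range the two exact dimensions combine to give the stated \emph{equality} $\ell=q^3-\delta_1-\delta_2-2g+2$ rather than a mere inequality, since there $\dim\tilde{E}(\delta_2)=q^3-g+1-\delta_2$ exactly.

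The one point that needs care---and the main obstacle---is justifying that in the third range one truly lies in the first case of Proposition~\ref{prop:dimUsualHermitian}, i.e.\ that $\delta_2<q^3-2g+2$, so that the exact dimension is available. I would verify this from the inclusion conditions: since $q^2-q<\delta_1<q^3-2g+2$, the admissible $\delta_2$ are bounded above by the value $\delta_2^{\max}$ furnished by whichever of Propositions~\ref{prop:inclusionMiddle}, \ref{prop:inclusionLeftMixed}, or~\ref{prop:inclusionLeftCorner} governs the given $\delta_1$. A short check shows $\delta_2^{\max}<q^3-2g+2$ in each regime; for example, when $q^2-q<\delta_1\leq q^3-2q^2+2q$ Proposition~\ref{prop:inclusionMiddle} gives $\delta_2^{\max}=q^3-q^2+q+2-\delta_1<q^3-2q^2+2q+2<q^3-2g+2$, and the two corner/mixed regimes force $\delta_2^{\max}$ to be even smaller. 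Combined with the hypothesis $q^2-q<\delta_2$, this confirms that the middle range of Proposition~\ref{prop:dimUsualHermitian} is the operative one, after which the remaining computation is routine.
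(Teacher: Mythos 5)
Your proposal is correct and follows essentially the same route as the paper: exact dimension of $\tilde{E}(\delta_1)$ from the first case of Proposition~\ref{prop:dimUsualHermitian}, then Propositions~\ref{prop:dimNotTricky}, \ref{prop:dimTricky}, and \ref{prop:dimUsualHermitian} respectively for the three ranges of $\delta_2$, with the inclusion propositions guaranteeing $\delta_2<q^3-2g+2$ in the last case. The paper justifies that last point only via Proposition~\ref{prop:inclusionMiddle}, so your slightly more careful check across the other regimes is a harmless refinement, not a different argument.
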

\begin{proof}
  Again, the the strategy is the same as in the proof of Proposition~\ref{prop:codimension1}. The exact dimension of $\tilde{E}(\delta_1)$ is given by Proposition~\ref{prop:dimUsualHermitian}. For $\delta_2\leq q$ the dimension of $\tilde{E}(\delta_2)$ can be bounded by applying Proposition~\ref{prop:dimNotTricky}, and in the case $q<\delta_2\leq q^2-q$ the bound follows by Proposition~\ref{prop:dimTricky}. For the final case we note by Proposition~\ref{prop:inclusionMiddle} that $\delta_2<q^3-q^2+q+2-(q^2-q)=q^3-4g+2$. Hence, the exact dimension of $\tilde{E}(\delta_2)$ is given by the first part of Proposition~\ref{prop:dimUsualHermitian} in this case.
\end{proof}

\begin{proposition}\label{prop:codimension4}
  Let $\delta_1,\delta_2\in H^\ast(Q)$ and $q^3-2g+2\leq\delta_1$. Further, let $\delta_2$ satisfy the conditions of Proposition~\ref{prop:inclusionRightCorner}, meaning that $\tilde{C}(\delta_2)^\perp\subseteq\tilde{E}(\delta_1)$. Denote their codimension by $\ell$. Then
  \begin{equation*}
    \ell\geq\sum_{s=0}^{a+b}(s+1)-\max\{a,0\}-\lfloor\delta_2+\delta_2\ln(\delta_2)\rfloor
  \end{equation*}
  where $q^3-\delta_1=aq+b(q+1)$ for $-q<a<q$ and $0\leq b<q$.
\end{proposition}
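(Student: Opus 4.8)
The plan is to follow exactly the same strategy as in the proof of Proposition~\ref{prop:codimension1}, reducing the codimension to a difference of dimensions of two improved \emph{primary} codes. First I would record that, since $C_2=\tilde{C}(\delta_2)^\perp\subseteq\tilde{E}(\delta_1)=C_1$, the codimension is
\[
\ell=\dim\tilde{E}(\delta_1)-\dim\tilde{C}(\delta_2)^\perp .
\]
Using $\dim\tilde{C}(\delta_2)^\perp=q^3-\dim\tilde{C}(\delta_2)$ together with the self-duality relation~\eqref{eqimpens}, namely $\tilde{C}(\delta_2)=\tilde{E}(\delta_2)$, this becomes
\[
\ell=\dim\tilde{E}(\delta_1)+\dim\tilde{E}(\delta_2)-q^3 .
\]
Thus it suffices to bound the two dimensions separately and cancel the two occurrences of $q^3$.

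For $\tilde{E}(\delta_1)$ the hypothesis $q^3-2g+2\leq\delta_1$ places us squarely in the second case of Proposition~\ref{prop:dimUsualHermitian} (which in turn relies on Corollary~\ref{cor:higherDim}), giving the \emph{exact} value $\dim\tilde{E}(\delta_1)=\sum_{s=0}^{a+b}(s+1)-\max\{a,0\}$ with $q^3-\delta_1=aq+b(q+1)$, $-q<a<q$, $0\leq b<q$. Substituting this and the lower bound $\dim\tilde{E}(\delta_2)\geq q^3-\lfloor\delta_2+\delta_2\ln(\delta_2)\rfloor$ into the displayed expression for $\ell$ immediately yields the claimed inequality, so the entire computation is routine once the correct dimension tool for $\tilde{E}(\delta_2)$ is identified.

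The one point that needs checking -- and the only real content beyond bookkeeping -- is that in this regime $\delta_2\leq q$ always holds, so that Proposition~\ref{prop:dimNotTricky} (rather than one of the two-term formulas of Proposition~\ref{prop:dimTricky}) is the correct tool for $\dim\tilde{E}(\delta_2)$. To see this I would invoke the inclusion constraint: since $\delta_1\geq q^3-2g+2=q^3-q^2+q+2>q^3-q^2$, the pair $(\delta_1,\delta_2)$ is governed by Proposition~\ref{prop:inclusionLeftCorner}, whose bound reads $\delta_2\leq a'+2$, where $q^3-\delta_1=a'q+b'$ is the division-algorithm representation with $a'\geq 0$ and $0\leq b'<q$. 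The hypothesis forces $q^3-\delta_1\leq q^2-q-2$, hence $a'q\leq q^2-q-2<q^2-q$ and therefore $a'\leq q-2$, giving $\delta_2\leq q$ as required. I expect no genuine obstacle here; the main care is keeping the two distinct representations of $q^3-\delta_1$ clearly apart -- the semigroup representation $aq+b(q+1)$ appearing in the dimension formula, and the division representation $a'q+b'$ used only to verify the range of $\delta_2$ -- so that the symbols $a,b$ in the final bound refer unambiguously to the former.
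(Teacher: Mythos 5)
Your proposal is correct and follows essentially the same route as the paper's own proof: exact dimension of $\tilde{E}(\delta_1)$ from the last part of Proposition~\ref{prop:dimUsualHermitian}, the observation that $q^3-\delta_1\leq 2g-2=q(q-2)+(q-2)$ combined with Proposition~\ref{prop:inclusionLeftCorner} forces $\delta_2\leq q$, and then Proposition~\ref{prop:dimNotTricky} for $\dim\tilde{E}(\delta_2)$. Your explicit care in separating the semigroup representation $aq+b(q+1)$ from the division representation $a'q+b'$ is a welcome clarification of a point the paper leaves implicit.
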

\begin{proof}
  The dimension of $\tilde{E}(\delta_1)$ is given by the last part of Proposition~\ref{prop:dimUsualHermitian}. To obtain a bound on the maximal value of $\delta_2$, note that the the minimal value of $q^3-\delta_1$ can be written as $q^2-q-2=q(q-2)+(q-2)$. Proposition~\ref{prop:inclusionLeftCorner} now implies $\delta_2\leq q$. Hence, $\dim\tilde{E}(\delta_2)\geq q^3-\lfloor\delta_2+\delta_2\ln(\delta_2)\rfloor$ by Proposition~\ref{prop:dimNotTricky}.
\end{proof}
The application of Theorem~\ref{thesec} or \ref{thm:css} translates Propositions~\ref{prop:codimension1}--\ref{prop:codimension4} into information on improved linear ramp secret sharing
schemes and improved asymmetric quantum codes, respectively. The details are left for the reader.

\section{Improved information on nested codes of small codimension}\label{sec:constr-with-relat}
We will now consider a second construction which in general gives nested code pairs of smaller codimension than the construction in Section \ref{sec4.5}. This construction bears some resemblance to the one given in \cite[Sec. IV]{8048519}, but in the setting of Hermitian codes.

From the definition of the codes,
$C_\mathcal{L}(D,\lambda_2Q)\subsetneq C_\mathcal{L}(D,\lambda_1Q)$ whenever $\lambda_2<\lambda_1$ and
both $\lambda_1$ and $\lambda_2$ belongs to $H^\ast(Q)$. Our second construction is captured by the following two propositions.
\begin{proposition}\label{prop:relDistConstruction2}
  Let $\lambda_1=iq+j(q+1)\in H^\ast(Q)$ where $i\leq j<q$, and define $\lambda_2=jq+i(q+1)-1$. Then $C_\mathcal{L}(D,\lambda_2Q)\subsetneq C_\mathcal{L}(D,\lambda_1Q)$ have codimension $\ell=j-i+1$, and their relative distances satisfy
  \begin{align}\label{eq:relDistConst2}
    d\big(C_\mathcal{L}(D,\lambda_1Q),C_\mathcal{L}(D,\lambda_2Q)\big) = q^3-\lambda_1 = d(C_\mathcal{L}(D,\lambda_1Q)),
  \end{align}
  and
  \begin{align}\label{eq:relDistDualConst2}
      d\big(C_\mathcal{L}(D,\lambda_2Q)^\perp,C_\mathcal{L}(D,\lambda_1Q)^\perp\big) =(i+1)(j+1) \geq d\big(C_\mathcal{L}(D,\lambda_2Q)^\perp\big).
  \end{align}
  The inequality in \eqref{eq:relDistDualConst2} is strict if and only if $i\neq 0$ and $j\neq q-1$.
\end{proposition}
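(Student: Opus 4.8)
The plan is to translate every quantity in the statement into the combinatorics of $H^\ast(Q)$ and then read off the distances from the sharp order bounds. First I would record the identities $\lambda_1=(i+j)q+j$ and $\lambda_2+1=jq+i(q+1)=(i+j)q+i$, so that the integers in $(\lambda_2,\lambda_1]$ are exactly $(i+j)q+r$ for $r=i,i+1,\dots,j$, which I rewrite as $(i+j-r)q+r(q+1)$. Since $0\le i+j-r\le j\le q-1<q^2$ and $0\le r\le q-1$, each of these $j-i+1$ integers lies in $H^\ast(Q)$ by \eqref{eq:hAstStructure}. Consequently $\dim C_\mathcal{L}(D,\lambda_1Q)-\dim C_\mathcal{L}(D,\lambda_2Q)$ counts precisely these $j-i+1$ non-gaps, giving $\ell=j-i+1$; and the strictness of the inclusion is immediate because $\lambda_1\in H^\ast(Q)$ forces a jump at $\lambda_1$ between the two Riemann--Roch spaces.

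For \eqref{eq:relDistConst2} I would avoid computing a minimum directly and instead use that a relative distance always dominates the ordinary distance of the larger code: $d(C_\mathcal{L}(D,\lambda_1Q),C_\mathcal{L}(D,\lambda_2Q))\ge d(C_\mathcal{L}(D,\lambda_1Q))$. Since $\lambda_1=iq+j(q+1)$ has $i\le j<q\le q^2-q$, \eqref{eqxpression} gives $d(C_\mathcal{L}(D,\lambda_1Q))=\sigma(\lambda_1)=q^3-\lambda_1$, and Proposition~\ref{prober} supplies a codeword of weight exactly $\sigma(\lambda_1)$ lying in $C_\mathcal{L}(D,\lambda_1Q)\setminus C_\mathcal{L}(D,(\lambda_1-1)Q)$. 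As $\lambda_2\le\lambda_1-1$, this codeword lies outside $C_\mathcal{L}(D,\lambda_2Q)$, so the relative distance is realised and equals $q^3-\lambda_1$, proving both equalities.

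For \eqref{eq:relDistDualConst2} I would pass to the duals via $C_\mathcal{L}(D,\lambda Q)^\perp=C_\Omega(D,\lambda Q)$ and invoke the sharpness of \eqref{eqrel2}, so that the dual relative distance equals $\min\{\mu(\gamma)\mid\lambda_2<\gamma\le\lambda_1\}$ over the same $j-i+1$ non-gaps (the possible failure $\lambda_2\notin H^\ast(Q)$ when $i=0$ is harmless, since $C_\Omega(D,\lambda_2Q)$ is then defined by the non-gaps at most $\lambda_2$, which does not change this index set). Using the reflection $\mu\big((q^2-1-I)q+(q-1-J)(q+1)\big)=\sigma(Iq+J(q+1))$ from Proposition~\ref{prober}, each element $(i+j-r)q+r(q+1)$ has reflected $q$-coefficient at least $q^2-q$, so it lands in the second branch of \eqref{eqxpression} and $\mu=(i+j-r+1)(r+1)$. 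This is a concave quadratic in $r$ taking the common value $(i+1)(j+1)$ at $r=i$ and $r=j$, so its minimum over $i\le r\le j$ is $(i+1)(j+1)$, giving the displayed equality. The inequality $\ge d(C_\mathcal{L}(D,\lambda_2Q)^\perp)$ is then automatic, since $d(C_\Omega(D,\lambda_2Q))=\min\{\mu(\gamma)\mid\gamma>\lambda_2\}$ minimises over a strictly larger index set.

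The main work, and the place I expect the real obstacle, is the strictness criterion, which amounts to deciding whether $M:=\min\{\mu(\gamma)\mid\gamma>\lambda_1\}$ is less than $(i+1)(j+1)$. For the ``if'' direction I would exhibit the witness $\gamma=(i-1)q+(j+1)(q+1)=\lambda_1+1$, legitimate precisely when $i\ne0$ and $j\ne q-1$; it lies in the corner, so $\mu(\gamma)=i(j+2)$, and $i(j+2)-(i+1)(j+1)=i-j-1<0$ forces strict inequality. For the converse I would first establish the closed form $\mu(aq+b(q+1))=(a+1)(b+1)$ when $a\le q-1$ and $\mu(aq+b(q+1))=\gamma-q^2+q+1$ when $a\ge q$, then check the two boundary cases separately: if $i=0$ one shows every $\gamma>\lambda_1$ has $\mu(\gamma)\ge j+1$, and if $j=q-1$ one shows every $\gamma>\lambda_1$ has $\mu(\gamma)\ge(i+1)q$, in each case splitting into the corner and tail regimes and minimising the relevant product or linear expression. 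Handling this bookkeeping cleanly, especially ruling out small $\mu$-values among the corner elements lying above $\lambda_1$, is the delicate step; the arithmetic on products with a fixed coordinate sum (minimised at the extreme split) is what makes the boundary cases come out as equalities.
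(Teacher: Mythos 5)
Your argument is correct and runs on the same machinery as the paper's proof---the $\sigma$/$\mu$ formulas of Proposition~\ref{prober}, the bounds \eqref{eqrel1}--\eqref{eqrel2}, and their sharpness---but it finishes the two delicate points by a somewhat different route. For the dual relative distance the paper simply evaluates $\mu$ at the left endpoint $\lambda_2+1=jq+i(q+1)$, obtaining $(i+1)(j+1)$, and invokes \eqref{eqrel2}; you instead compute $\mu$ on the whole window $(\lambda_2,\lambda_1]$, getting $(i+j-r+1)(r+1)$ for $r=i,\dots,j$, and minimise the concave quadratic at its endpoints. That is more work, but it makes explicit a step the paper leaves tacit, namely that the minimum in \eqref{eqrel2} over the window really is attained at the endpoint. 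For the strictness criterion the paper determines $d\big(C_\mathcal{L}(D,\lambda_2Q)^\perp\big)$ exactly---$\mu\big((i+j)(q+1)\big)=i+j+1$ when $i+j<q$ and $q(i+j-q+2)$ otherwise, using the sharpness of \eqref{eq:boundHermitianDual}---and compares with $(i+1)(j+1)$; you instead compare $(i+1)(j+1)$ with $M=\min\{\mu(\gamma)\mid\gamma>\lambda_1\}$, which is equivalent since $d\big(C_\mathcal{L}(D,\lambda_2Q)^\perp\big)=\min\{(i+1)(j+1),M\}$. Your witness $\lambda_1+1$ with $\mu=i(j+2)<(i+1)(j+1)$ settles the ``if'' direction cleanly. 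The ``only if'' direction you only outline, but the two inequalities you reduce it to ($\mu(\gamma)\ge j+1$ for all $\gamma>\lambda_1$ when $i=0$, and $\mu(\gamma)\ge(i+1)q$ when $j=q-1$) are correct and do follow from your closed forms $(a+1)(b+1)$ and $\gamma-q^2+q+1$ by the extreme-split argument you describe; the paper's route, which reads off the global minimiser of $\mu$ above $\lambda_2$ directly, gets there with less bookkeeping. The remaining steps (the codimension count via consecutive non-gaps, and the primary relative distance via the weight-$\sigma(\lambda_1)$ word supplied by Proposition~\ref{prober}) coincide with the paper's.
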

\begin{proof}
  The codimension of $C_\mathcal{L}(D,\lambda_1Q)$ and $C_\mathcal{L}(D,\lambda_2Q)$ is given by the number of elements $\varepsilon$ in $H^\ast(Q)$ with $\lambda_2<\varepsilon\leq\lambda_1$. By \eqref{eq:hAstStructure} $H^\ast(Q)$ contains every integer between $\lambda_2$ and $\lambda_1$, meaning that the codimension is exactly $\lambda_1-\lambda_2=j-i+1$.

  To prove the first equalities in \eqref{eq:relDistConst2} and \eqref{eq:relDistDualConst2}, we use Proposition~\ref{prober} to obtain $\sigma(\lambda_1)=q^3-\lambda_1$ and $\mu(\lambda_2)=(i+1)(j+1)$. Applying \eqref{eqrel1} and \eqref{eqrel2} then implies that the relative distances are at least $q^3-\lambda_1$ and $(i+1)(j+1)$, respectively. That these are indeed equalities follows from the observations following Proposition \ref{prober}.
  
  In \eqref{eq:relDistConst2} the last equality follows from the last part of Proposition \ref{prober}. For \eqref{eq:relDistDualConst2} the observations in \cite[Rem.\ 4]{normtrace} imply that \eqref{eq:boundHermitianDual} is in fact an equality. Thus, the minimal distance of $d\left(C_\mathcal{L}(D,\lambda_2Q)^\perp\right)$ is given by $\mu((i+j)(q+1))=i+j+1$ if $i+j<q$ and
  \begin{align*}
    \mu\big((i-(q-1-j))q+(q-1)(q+1)\big)=q(i+j-q+2)
  \end{align*}
  otherwise. In the first case equality with $(i+1)(j+1)$ occurs if and only if $i=0$, and in the second if and only if $j=q-1$.
\end{proof}
In the above construction we only consider values of $i$ less than $q$. A similar technique can be used for $q^2-q\leq i<q^2$. We state the proposition, but omit the proof since it follows by similar arguments as above.
\begin{proposition}\label{prop:relDistConstruction2Upper}
  Let $\lambda_1=(q^2-1-i)q+(q-1-j)(q+1)\in H^\ast(Q)$ where $i\leq j <q$, and define $\lambda_2=(q^2-1-j)q+(q-1-i)(q+1)-1$. Then $C_\mathcal{L}(D,\lambda_2Q)\subsetneq C_\mathcal{L}(D,\lambda_1Q)$ have codimension $\ell=j-i+1$, and their relative distances satisfy
  \begin{align}\label{eq:relDistConst2UpperCorner}
    d\big(C_\mathcal{L}(D,\lambda_1Q),C_\mathcal{L}(D,\lambda_2Q)\big) = (i+1)(j+1)\geq d(C_\mathcal{L}(D,\lambda_1Q)),
  \end{align}
  and
  \begin{align*}
    \begin{aligned}
      d\big(C_\mathcal{L}(D,\lambda_2Q)^\perp,C_\mathcal{L}(D,\lambda_1Q)^\perp\big) = q^3-iq-j(q+1)=d\big(C_\mathcal{L}(D,\lambda_2Q)^\perp\big).
    \end{aligned}
  \end{align*}
  The inequality in \eqref{eq:relDistConst2UpperCorner} is strict if $i\neq 0$ and $j\neq q-1$.
\end{proposition}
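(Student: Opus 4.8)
The plan is to reproduce the four steps in the proof of Proposition~\ref{prop:relDistConstruction2}, the essential new feature being that $\lambda_1$ now lies in the upper corner $q^2-q\le i'\le q^2-1$ of $H^\ast(Q)$, where by~\eqref{eqxpression} the value $\sigma(i'q+j'(q+1))$ equals the product $(q^2-i')(q-j')$ rather than the linear expression $q^3-i'q-j'(q+1)$. First I would settle the codimension exactly as before: by~\eqref{eq:hAstStructure} every integer strictly between $\lambda_2$ and $\lambda_1$ lies in $H^\ast(Q)$, so the number of $\varepsilon\in H^\ast(Q)$ with $\lambda_2<\varepsilon\le\lambda_1$ is $\lambda_1-\lambda_2=j-i+1$. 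These elements form a segment of a single anti-diagonal $i'+j'=\mathrm{const}$ of the box~\eqref{eq:hAstStructure}, and I would record its two endpoints together with the identity $\mu\bigl((q^2-1-i)q+(q-1-j)(q+1)\bigr)=\sigma\bigl(iq+j(q+1)\bigr)$ from Proposition~\ref{prober}.

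For the relative distances I would evaluate $\sigma$ and $\mu$ on this segment. Since $\sigma$ is now a product of two factors with constant sum, it attains its minimum $(i+1)(j+1)$ at \emph{both} endpoints, whereas $\mu$ is linear along the anti-diagonal and attains its minimum $q^3-iq-j(q+1)$ at the endpoint with the larger remainder; this is precisely the reversal of the roles that $\sigma$ and $\mu$ played in Proposition~\ref{prop:relDistConstruction2}. Feeding these minima into~\eqref{eqrel1} and~\eqref{eqrel2} yields $d\bigl(C_\mathcal{L}(D,\lambda_1Q),C_\mathcal{L}(D,\lambda_2Q)\bigr)\ge(i+1)(j+1)$ and $d\bigl(C_\mathcal{L}(D,\lambda_2Q)^\perp,C_\mathcal{L}(D,\lambda_1Q)^\perp\bigr)\ge q^3-iq-j(q+1)$, and the observations following Proposition~\ref{prober}---sharpness of the order bound combined with the duality~\eqref{eqdupr}---promote both inequalities to equalities. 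Conceptually, the cleanest justification is to apply $(\cdot)^\perp$ to the entire pair through~\eqref{eqdupr}: this carries it to a nested pair of the form treated in Proposition~\ref{prop:relDistConstruction2}, and since passing to the dual pair interchanges $d(C_1,C_2)$ with $d(C_2^\perp,C_1^\perp)$ and preserves the codimension, the two relative distances can simply be read off Proposition~\ref{prop:relDistConstruction2} in the opposite order.

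Finally I would determine the two \emph{non}-relative minimum distances. For the dual code, the second line of~\eqref{eq:boundHermitianDual} together with the sharpness in~\cite[Rem.~4]{normtrace} gives $d(C_\mathcal{L}(D,\lambda_2Q)^\perp)=\min\{\mu(\gamma)\mid\gamma>\lambda_2,\ \gamma\in H^\ast(Q)\}$, and this minimum is already realised at the lower endpoint of the segment, so it equals the relative distance $q^3-iq-j(q+1)$; this is the clean equality. For the primary code, the first line of~\eqref{eq:boundHermitianDual} and the same sharpness give $d(C_\mathcal{L}(D,\lambda_1Q))=\min\{\sigma(\gamma)\mid\gamma\le\lambda_1,\ \gamma\in H^\ast(Q)\}$; because $\sigma$ is a product here this minimum is in general strictly smaller than $(i+1)(j+1)$, and a case split according to whether $i+j<q$ or $i+j\ge q$---mirroring the split in the proof of Proposition~\ref{prop:relDistConstruction2}---evaluates it explicitly, whereupon comparison with $(i+1)(j+1)$ produces the stated strict inequality whenever $i\neq0$ and $j\neq q-1$. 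I expect the main obstacle to be exactly this monotonicity bookkeeping: the product form of $\sigma$ in the upper corner destroys the simple monotonicity that made the corresponding minima automatic in Proposition~\ref{prop:relDistConstruction2}, so one must check with care that no $\gamma$ outside $(\lambda_2,\lambda_1]$ beats the claimed order-bound minima; the duality route of the previous paragraph is the most reliable way to sidestep this for the relative distances.
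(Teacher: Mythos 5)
Your overall strategy---recompute the order-bound minima of $\sigma$ and $\mu$ over the segment $\{\gamma\in H^\ast(Q)\mid\lambda_2<\gamma\leq\lambda_1\}$, observe that in the corner $q^2-q\leq i'\leq q^2-1$ the value $\sigma(i'q+j'(q+1))=(q^2-i')(q-j')$ is a product of factors with constant sum along the anti-diagonal and hence is minimised at the two endpoints, invoke the sharpness remarks following Proposition~\ref{prober}, and split into $i+j<q$ versus $i+j\geq q$ for $d(C_\mathcal{L}(D,\lambda_1Q))$---is exactly the ``similar arguments as above'' the paper intends (it omits the proof entirely). But your very first step fails as written, and carrying it out carefully would have exposed an inconsistency in the statement itself: with the formulas as given, $\lambda_1-(\lambda_2+1)=(j-i)q+(i-j)(q+1)=i-j$, so $\lambda_1-\lambda_2=i-j+1\leq 1$, and for $i<j$ one even has $\lambda_1\leq\lambda_2$ (for $q=4$, $i=1$, $j=2$ both formulas give $61$). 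The two defining expressions must be interchanged, i.e.\ $\lambda_1=(q^2-1-j)q+(q-1-i)(q+1)$ (the larger endpoint of the anti-diagonal) and $\lambda_2=(q^2-1-i)q+(q-1-j)(q+1)-1$. Only then is the codimension $j-i+1$, is the minimum of $\mu$ over the segment attained at the \emph{smaller} endpoint $\lambda_2+1$ (not, as you write, at the endpoint with the larger remainder: $\mu$ increases along the anti-diagonal) with value $\mu(\lambda_2+1)=\sigma(iq+j(q+1))=q^3-iq-j(q+1)$, and does Lemma~\ref{lem:muOrderedPartial} (applicable since the first coordinate $q^2-1-i\geq q-1$) upgrade this to the non-relative distance $d(C_\mathcal{L}(D,\lambda_2Q)^\perp)$. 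Asserting $\lambda_1-\lambda_2=j-i+1$ ``exactly as before'' means you are computing with the wrong pair.

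Your proposed shortcut via \eqref{eqdupr} also does not work in the form stated. With $m=q^3+q^2-q-2$ one finds $C_\mathcal{L}(D,\lambda_2Q)^\perp=C_\mathcal{L}(D,(m-\lambda_2)Q)$ with $m-\lambda_2=iq+j(q+1)+1=(i-1)q+(j+1)(q+1)$, which for $i\geq1$ and $j\leq q-2$ lies in $H^\ast(Q)$ and therefore defines a code strictly larger than the code $C_\mathcal{L}(D,(iq+j(q+1))Q)$ of Proposition~\ref{prop:relDistConstruction2}; the dual pair is thus \emph{not} a pair covered by that proposition, and its invariants cannot simply be read off. What does transfer cleanly is the involution $\gamma\mapsto m-\gamma$, which maps the segment $(\lambda_2,\lambda_1]\cap H^\ast(Q)$ bijectively onto the segment used in Proposition~\ref{prop:relDistConstruction2} and interchanges $\sigma$ with $\mu$ by Proposition~\ref{prober}; arguing at the level of the order-bound minima and of the half-infinite ranges defining the non-relative distances (rather than at the level of the codes themselves) is the legitimate version of your duality argument, and it also delivers the strictness criterion $i\neq0$, $j\neq q-1$ directly from the computation of $d(C_\mathcal{L}(D,\lambda_2Q)^\perp)$ in the earlier proof.
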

By applying one of Theorems~\ref{thesec} and \ref{thm:css}, we can transform Propositions~\ref{prop:relDistConstruction2} and \ref{prop:relDistConstruction2Upper} into information on improved linear ramp secret sharing
schemes and improved asymmetric quantum codes, respectively. The details of this translation are left for the reader.

\section{Comparison with bounds and existing constructions}\label{sec:comp-with-exist}
Having presented two improved constructions of nested code pairs in Sections~\ref{sec4.5} and \ref{sec:constr-with-relat}, this section is devoted to the comparison between the corresponding asymmetric quantum codes and codes that already exist in the literature. The codes are also compared with the Gilbert-Varshamov bound for asymmetric quantum codes.
Moreover, we compare the corresponding secret sharing schemes with a recent lower bound on the threshold gap \cite{cryptoeprint:2018:099}.
When presenting code parameters we give the actual codimension rather
than using the estimates in Section~\ref{sec4.5} which rely on the bounds in Propositions \ref{prop:dimTricky} and \ref{prop:dimNotTricky}.

Since the codes obtained in Sections~\ref{sec4.5} and \ref{sec:constr-with-relat} are relatively long compared to the field size, the literature does not contain many immediately comparable codes.
Yet, one way to obtain such codes is by using Construction II of La~Guardia \cite[Thm. 7.1]{aq1}, which gives asymmetric quantum generalized Reed-Solomon codes. Adjusting the theorem to codes over $\mathbb{F}_{q^2}$ gives the following result.
\begin{theorem}\label{thm:laGuardia}
  Let $q$ be a prime power. There exist asymmetric quantum generalized Reed-Solomon codes with parameters
  \begin{align*}
    [[m_1m_2,m_1(2k-m_2+c),\geq d / \geq d-c]]_{q^2}
  \end{align*}
  where $1<k<m_2<2k+c\leq q^{2m_1}$ and $k=m_2-d+1$, and where $m_2,d>c+1$, $c\geq 1$, and $m_1\geq 1$ are integers.
\end{theorem}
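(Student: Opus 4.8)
The plan is to realise the stated quantum code via the CSS construction of Theorem~\ref{thm:css}, applied to a pair of nested generalized Reed--Solomon codes, thereby reducing the claim to La~Guardia's classical construction with the ground field taken to be $\mathbb{F}_{q^2}$ in place of $\mathbb{F}_q$. First I would work over the extension field $\mathbb{F}_{q^{2m_1}}$, where the hypothesis $m_2<2k+c\leq q^{2m_1}$ guarantees that generalized Reed--Solomon codes of length $m_2$ exist. Setting $k_1=k$ and $k_2=m_2-k-c$, let $\mathcal{C}_2\subsetneq\mathcal{C}_1$ be nested generalized Reed--Solomon codes over $\mathbb{F}_{q^{2m_1}}$ of length $m_2$ and dimensions $k_2$ and $k_1$. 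The conditions $k<m_2$ and $m_2<2k+c$ force $1\leq k_2<k_1<m_2$, so both codes are nontrivial and the inclusion is proper; since such codes are MDS and their duals are again MDS, we obtain $d(\mathcal{C}_1)=m_2-k+1=d$ and $d(\mathcal{C}_2^\perp)=k_2+1=d-c$, where $d>c+1$ is exactly what ensures $k_2\geq 1$.

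Next I would descend to the base field. Choosing a basis of $\mathbb{F}_{q^{2m_1}}$ over $\mathbb{F}_{q^2}$ together with its trace-dual (complementary) basis, the coordinate-wise expansion sends $\mathcal{C}_2\subset\mathcal{C}_1$ to $\mathbb{F}_{q^2}$-linear codes $C_2\subset C_1$ of length $m_1m_2$ and $\mathbb{F}_{q^2}$-dimensions $m_1k_2$ and $m_1k_1$. The inclusion survives because expansion is an injective $\mathbb{F}_{q^2}$-linear map, and the use of the trace-dual basis guarantees that the expansion of $\mathcal{C}_2^\perp$ coincides with $C_2^\perp$. As a nonzero symbol expands to a nonzero block, the $\mathbb{F}_{q^2}$-Hamming weight of an expanded word is at least the weight of its preimage, so $d(C_1)\geq d(\mathcal{C}_1)=d$ and $d(C_2^\perp)\geq d(\mathcal{C}_2^\perp)=d-c$.

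Applying Theorem~\ref{thm:css} to $C_2\subset C_1$ then yields an $[[m_1m_2,\ell,d_z/d_x]]_{q^2}$ code with $\ell=\dim C_1-\dim C_2=m_1(k_1-k_2)=m_1(2k-m_2+c)$, while $d_z=d(C_1,C_2)\geq d(C_1)\geq d$ and $d_x=d(C_2^\perp,C_1^\perp)\geq d(C_2^\perp)\geq d-c$, which matches the claimed parameters. The hard part will be the interaction between field expansion and duality: one must verify that, with the trace-dual basis, the dual of the expanded code is genuinely the expansion of the dual, so that the bound $d_x\geq d-c$ really holds over $\mathbb{F}_{q^2}$; the rest is bookkeeping confirming that the inequalities $m_2,d>c+1$, $c\geq 1$, and $2k+c\leq q^{2m_1}$ are precisely what is needed for $k_2\geq 1$, $k_1<m_2$, and the existence of the length-$m_2$ codes. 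Alternatively, one may invoke \cite[Thm.~7.1]{aq1} directly with $q$ replaced by $q^2$, since every constraint there is field-agnostic apart from the length bound, which becomes $2k+c\leq q^{2m_1}$.
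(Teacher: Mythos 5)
Your proposal is correct and matches the paper's approach: the paper offers no proof of its own but simply invokes La~Guardia's \cite[Thm.~7.1]{aq1} with the base field taken to be $\mathbb{F}_{q^2}$, which is exactly the alternative you note at the end. Your reconstruction of the underlying argument (nested MDS generalized Reed--Solomon codes over $\mathbb{F}_{q^{2m_1}}$, expansion via a basis and its trace-dual so that duality commutes with expansion, then the CSS construction of Theorem~\ref{thm:css}) is a faithful and correct account of what that citation hides, with all the parameter bookkeeping checking out.
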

\begin{example}\label{ex:laGuardia}
  By using different values for the parameters in Theorem \ref{thm:laGuardia}, we obtain asymmetric quantum codes of varying lengths. If the chosen parameters give a code of length less than $q^3$, we can pad each codeword with zeroes in order to obtain the correct length. Note that this does not change the relative distance of the nested codes nor of their duals.
  
  For $q=3$ Table~\ref{tab:laGuardiaQ3} lists the best code
  parameters that can be obtained in this way together with the
  comparable codes from the constructions in Sections~\ref{sec4.5} and \ref{sec:constr-with-relat}. In
  the third column, the parameter $d_z$ is maximized under the
  condition that the dimension and the distance $d_x$ are at least as
  high as in \cite{aq1}. In the fourth, the dimension is maximized,
  keeping at least the same minimal distances. As is evident, the codes
  of the present paper perform very favourably. 
  
  We further note that all presented new codes exceed the Gilbert-Varshamov bound for asymmetric quantum codes \cite[Thm. 4]{Matsumoto2017}. Additionally, we remark that nesting usual one-point Hermitian codes and using the Goppa bound does not provide asymmetric quantum codes as good as the ones in columns three and four.\goodbreak
\end{example}
The two constructions in Theorem~24 and Corollary~29 of \cite{8048519}
based on codes defined from Cartesian product point sets provide another way to obtain asymmetric quantum codes that can be compared to the ones in this paper. We summarize these constructions in the following two theorems.
\begin{theorem}\label{thm:gghr24}
  Consider integers $m\geq 2$ and $s\leq q$ where $q$ is a prime power. Given $\delta_1\in\{1,2,\ldots,s^m\}$ define $v\in\{0,1,\ldots,m-1\}$ such that $s^v\leq\delta\leq s^{v+1}$, and choose an integer $\delta_2\leq\lfloor(s-\delta_1/s^v+1)s^{m-v+1}\rfloor$. Then there exists an asymmetric quantum code with parameters
  \begin{equation*}
    [[s^m,\ell,\delta_1/\delta_2]]_q
  \end{equation*}
  where
  \begin{equation*}
    \ell\geq s^m-\sum_{t=1}^m\frac{1}{(t-1)!}\left(\delta_1\left(\ln \left(\frac{s^m}{\delta_1}\right)\right)^{t-1}+\delta_2\left(\ln\left(\frac{s^m}{\delta_2}\right)\right)^{t-1}\right).
  \end{equation*}
\end{theorem}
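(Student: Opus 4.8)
The plan is to recognise Theorem~\ref{thm:gghr24} as a restatement, over the alphabet $\mathbb{F}_q$, of the pair of results \cite[Thm.~24 and Cor.~29]{8048519}, and to assemble it from three ingredients: a nested pair of improved affine-variety codes over a Cartesian product point set, the relative-distance guarantees for that pair, and a closed-form estimate on the resulting codimension. The quantum parameters then follow by feeding the nested pair into the CSS construction of Theorem~\ref{thm:css}.

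First I would fix a subset $S\subseteq\mathbb{F}_q$ with $|S|=s$ and evaluate multivariate polynomials at the $n=s^m$ points of $S^m$, so that the footprint of a monomial $X_1^{e_1}\cdots X_m^{e_m}$ with $0\le e_i\le s-1$ has size $\prod_{i=1}^m(s-e_i)$. Exactly as in \eqref{eqrel1} and \eqref{eqrel2}, the improved construction takes $C_1$ to be spanned by those evaluation vectors whose footprint weight is at least $\delta_1$, and $C_2$ to be spanned by those whose dual weight falls below $\delta_2$; the inequality $\delta_2\le\lfloor(s-\delta_1/s^v+1)s^{m-v+1}\rfloor$ is precisely the inclusion condition forcing $C_2\subseteq C_1$, the multivariate counterpart of the inclusion analysis in Section~\ref{sec:inclusion}. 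The relative-distance form of the footprint bound then yields $d(C_1,C_2)\ge\delta_1$ and $d(C_2^\perp,C_1^\perp)\ge\delta_2$, so Theorem~\ref{thm:css} produces an $[[s^m,\ell,\delta_1/\delta_2]]_q$ code with $\ell=\dim C_1-\dim C_2$.

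The remaining work is the codimension bound. Writing $\dim C_1=s^m-N(\delta_1)$ and $\dim C_2=N(\delta_2)$, where $N(\delta)$ denotes the number of exponent vectors $(e_1,\ldots,e_m)\in\{0,\ldots,s-1\}^m$ with $\prod_i(s-e_i)<\delta$, one obtains $\ell=s^m-N(\delta_1)-N(\delta_2)$, so it suffices to bound each $N(\delta)$ above by $\delta\sum_{t=1}^m\frac{1}{(t-1)!}\bigl(\ln(s^m/\delta)\bigr)^{t-1}$. I would do this by bounding the integer count by the volume of $\{x\in[0,s]^m:\prod_i(s-x_i)<\delta\}$, generalising Lemma~\ref{lem:integerPointsIntegral} (which is exactly the case of two factors, $\sigma(iq+j(q+1))=(q^2-i)(q-j)$) from two dimensions to $m$. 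After the substitution $y_i=s-x_i$ the region becomes $\{y\in[0,s]^m:\prod_i y_i<\delta\}$, whose volume satisfies the recursion $V_m(\delta)=\int_0^s V_{m-1}(\min\{\delta/y_m,\,s^{m-1}\})\,dy_m$; integrating out one coordinate at a time is what manufactures the successive powers of $\ln(s^m/\delta)$ together with the reciprocal factorials $1/(t-1)!$.

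The main obstacle is precisely this $m$-dimensional volume evaluation: carrying the induction on $m$ so that the truncation of each factor at $s$ is handled correctly, and checking that after replacing the integer count by the volume and applying the appropriate rounding the estimate survives as a genuine lower bound on $\ell$. Once the single-variable base case feeds the recursion, summing the two contributions from $\delta_1$ and $\delta_2$ and collecting terms gives the stated closed form. Since all of this is already established in \cite{8048519}, the honest proof is a citation of \cite[Thm.~24 and Cor.~29]{8048519} combined with Theorem~\ref{thm:css}; the sketch above is how one reconstructs it from scratch.
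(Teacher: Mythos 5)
The paper states Theorem~\ref{thm:gghr24} without proof, merely as a summary of \cite[Thm.~24]{8048519} for comparison purposes, so your concluding remark --- that the honest proof is a citation of that result combined with Theorem~\ref{thm:css} --- is exactly the paper's approach. Your reconstruction sketch (improved nested codes over a Cartesian product point set, the footprint and relative-distance bounds, and the $m$-dimensional volume estimate generalising Lemma~\ref{lem:integerPointsIntegral}) is a faithful outline of the argument given in \cite{8048519} and contains no errors of substance.
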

\begin{theorem}\label{thm:gghr29}
  Consider integers $1<s\leq q$ where $q$ is a prime power, and let $m\in\{0,1,\ldots,s-1\}$. Then for any $\ell\leq m+1$ such that $\ell$ is even if and only if $m$ is odd, there exists an asymmetric quantum code with parameters
  \begin{equation*}
    [[s^2,\ell,d_z/d_x]]_q
  \end{equation*}
  where the distances are $d_z=\frac{1}{4}\big(2s-(m-\ell+1)\big)\big(2s-(m+\ell-1)\big)$ and $d_x=\frac{1}{4}(m-\ell+3)(m+\ell+1)$. The two distances may also be interchanged.
\end{theorem}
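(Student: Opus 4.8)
The plan is to realise the theorem as an instance of the CSS construction in Theorem~\ref{thm:css}, applied to a nested pair of evaluation codes on a Cartesian grid; this reconstructs \cite[Cor.~29]{8048519} in self-contained form. Fix a subset $S\subseteq\mathbb{F}_q$ with $|S|=s$ and evaluate bivariate polynomials at the $n=s^2$ points of $S\times S$. The monomials $x^ay^b$ with $0\le a,b\le s-1$ map to a basis of $\mathbb{F}_q^n$, and the footprint (Feng--Rao) bound, which is sharp on a Cartesian product, assigns to $x^ay^b$ the primary weight $(s-a)(s-b)$ and, in the dual code, the weight $(a+1)(b+1)$. First I would record these two weight functions together with the fact that on the grid the corresponding relative-distance bounds are attained, not merely valid as lower bounds.

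Next I would select the nested pair. Take $C_1$ and $C_2$ to be evaluation codes spanned by order ideals of monomials chosen so that the difference set $A_1\setminus A_2$ is exactly the centred length-$\ell$ segment of the anti-diagonal $a+b=m$, namely the monomials with $a\in\{(m-\ell+1)/2,\ldots,(m+\ell-1)/2\}$ and $b=m-a$. The codimension is then $|A_1\setminus A_2|=\ell$, as required by Theorem~\ref{thm:css}. Here the parity hypothesis enters: ``$\ell$ even if and only if $m$ odd'' is precisely the statement that $m-\ell+1$ is even, so the endpoint index $(m-\ell+1)/2$ is an integer; and $\ell\le m+1$ together with $m\le s-1$ guarantees $0\le a\le b\le s-1$, so the whole window lies inside the grid.

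I would then compute the two relative distances by minimising the weight functions over this window. Writing $b=m-a$, the primary weight $(s-a)(s-m+a)$ and the dual weight $(a+1)(m-a+1)$ are both concave quadratics in $a$ maximised at the centre $a=m/2$, hence both are minimised at the window endpoints. Substituting $a=(m-\ell+1)/2$, $b=(m+\ell-1)/2$ gives
\begin{align*}
  d(C_1,C_2) &= \tfrac14\big(2s-(m-\ell+1)\big)\big(2s-(m+\ell-1)\big), \\
  d(C_2^\perp,C_1^\perp) &= \tfrac14(m-\ell+3)(m+\ell+1),
\end{align*}
which are the claimed $d_z$ and $d_x$, so Theorem~\ref{thm:css} yields the stated parameters. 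The final clause follows by applying the same construction to the dual nested pair $C_1^\perp\subset C_2^\perp$ (equivalently, reflecting the grid by $(a,b)\mapsto(s-1-a,s-1-b)$), which interchanges the two weight functions and hence swaps $d_z$ and $d_x$.

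The main obstacle is the input I have quietly assumed in the first paragraph: that the Feng--Rao-type bound on the relative distance $d(C_1,C_2)$ and its dual is \emph{attained} on the Cartesian product, and that $C_1,C_2$ can genuinely be taken as order ideals whose set-difference is the prescribed anti-diagonal window, so that the leading-monomial bookkeeping behind both weight functions is valid simultaneously. Establishing this sharpness and realisability is the crux; once it is in hand, the remaining parity, range, and arithmetic verifications are routine. This is exactly the content imported from \cite{8048519}.
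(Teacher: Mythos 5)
Your proposal is sound, but note that the paper itself offers no proof of this statement: Theorem~\ref{thm:gghr29} is explicitly presented as a summary of \cite[Cor.~29]{8048519} (``We summarize these constructions in the following two theorems''), so there is no internal argument to compare against. What you have written is a faithful reconstruction of the argument behind that corollary, and the bookkeeping checks out: the parity condition is exactly integrality of the endpoint $(m-\ell+1)/2$; the window of $\ell$ consecutive lattice points on $a+b=m$ lies in the grid because $\ell\leq m+1\leq s$; both weight functions restricted to the anti-diagonal are concave quadratics symmetric about $a=m/2$, so their minima over the window sit at the two (equal-valued) endpoints and evaluate to the stated $d_z$ and $d_x$; and the interchange clause is immediate from the symmetry of Theorem~\ref{thm:css} under replacing $C_2\subset C_1$ by $C_1^\perp\subset C_2^\perp$. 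Two small remarks. First, you are slightly over-cautious about sharpness: for the existence claim one only needs the footprint-type \emph{lower} bounds on $d(C_1,C_2)$ and $d(C_2^\perp,C_1^\perp)$ over the difference set $A_1\setminus A_2$ (the analogues of \eqref{eqrel1} and \eqref{eqrel2} for Cartesian-product evaluation codes), not their attainment; that relative-distance bound for arbitrary nested monomial sets is the genuine imported ingredient, and you identify it correctly. Second, the condition you state as $0\le a\le b\le s-1$ should just read $0\le a,b\le s-1$, since the window straddles $a=m/2$; the containment in the grid still follows from $a,b\leq m\leq s-1$.
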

\begin{table}[hbt]
  \centering
  \scriptsize
\begin{tabular}{*{4}{c}}
  \toprule
  \multicolumn{2}{c}{\textbf{Construction of \cite[Thm.\,7.1]{aq1}}} & \multicolumn{2}{c}{\textbf{This paper}}\\
  \cmidrule(lr){1-2}\cmidrule(lr){3-4}
  $\boldsymbol{(m_1,m_2,k,c)}$ & \textbf{Code} & \textbf{$\boldsymbol{d_z}$ maximized} & \textbf{$\boldsymbol{\ell}$ maximized}\\
  \midrule
  $(2,13,2,10)$ & $[[27, \ph2, 12 / 2]]_9$ & $[[27, \ph2, 23 / 2]]_9$  & $[[27, 12, 12 / 2]]_9$ \\
  $(2,13,3,8)$  & $[[27, \ph2, 11 / 3]]_9$ & $[[27, \ph2, 18 / 4]]_9$  & $[[27, 11, 11 / 3]]_9$          \\
  $(2,13,4,6)$  & $[[27, \ph2, 10 / 4]]_9$ & $[[27, \ph2, 18 / 4]]_9$  & $[[27, 10, 10 / 4]]_9$          \\
  $(2,13,5,4)$  & $[[27, \ph2, \ph9 / 5]]_9$  & $[[27, \ph2, 16 / 6]]_9$  & $[[27, \ph9, \ph9 / 6]]_9$            \\
  $(2,13,6,2)$  & $[[27, \ph2, \ph8 / 6]]_9$  & $[[27, \ph2, 16 / 6]]_9$  & $[[27, 10, \ph8 / 6]]_9$           \\
  $(3,9,3,4)$   & $[[27, \ph3, \ph7 / 3]]_9$  & $[[27, \ph3, 19 / 3]]_9$  & $[[27, 15, \ph7 / 3]]_9$           \\
  $(3,9,4,2)$   & $[[27, \ph3, \ph6 / 4]]_9$  & $[[27, \ph3, 17 / 4]]_9$  & $[[27, 15, \ph6 / 4]]_9$           \\
  $(2,13,3,9)$  & $[[27, \ph4, 11 / 2]]_9$ & $[[27, \ph4, 20 / 2]]_9$  & $[[27, 13, 11 / 2]]_9$          \\
  $(2,13,4,7)$  & $[[27, \ph4, 10 / 3]]_9$ & $[[27, \ph4, 18 / 3]]_9$  & $[[27, 12, 10 / 3]]_9$          \\
  $(2,13,5,5)$  & $[[27, \ph4, \ph9 / 4]]_9$  & $[[27, \ph4, 16 / 4]]_9$  & $[[27, 11, \ph9 / 4]]_9$           \\
  $(2,13,6,3)$  & $[[27, \ph4, \ph8 / 5]]_9$  & $[[27, \ph4, 14 / 6]]_9$  & $[[27, 10, \ph8 / 6]]_9$           \\
  $(2,13,7,1)$  & $[[27, \ph4, \ph7 / 6]]_9$  & $[[27, \ph4, 14 / 6]]_9$  & $[[27, 11, \ph7 / 6]]_9$           \\
  $(2,13,4,8)$  & $[[27, \ph6, 10 / 2]]_9$ & $[[27, \ph6, 18 / 2]]_9$  & $[[27, 14, 10 / 2]]_9$          \\
  $(2,13,5,6)$  & $[[27, \ph6, \ph9 / 3]]_9$  & $[[27, \ph6, 16 / 3]]_9$  & $[[27, 13, \ph9 / 3]]_9$           \\
  $(2,13,6,4)$  & $[[27, \ph6, \ph8 / 4]]_9$  & $[[27, \ph6, 14 / 4]]_9$  & $[[27, 12, \ph8 / 4]]_9$           \\
  $(2,13,7,2)$  & $[[27, \ph6, \ph7 / 5]]_9$  & $[[27, \ph6, 12 / 6]]_9$  & $[[27, 11, \ph7 / 6]]_9$           \\
  $(2,13,5,7)$  & $[[27, \ph8, \ph9 / 2]]_9$  & $[[27, \ph8, 16 / 2]]_9$  & $[[27, 15, \ph9 / 2]]_9$           \\
  $(2,13,6,5)$  & $[[27, \ph8, \ph8 / 3]]_9$  & $[[27, \ph8, 14 / 3]]_9$  & $[[27, 14, \ph8 / 3]]_9$           \\
  $(2,13,7,3)$  & $[[27, \ph8, \ph7 / 4]]_9$  & $[[27, \ph8, 12 / 4]]_9$  & $[[27, 13, \ph7 / 4]]_9$           \\
  $(2,13,8,1)$  & $[[27, \ph8, \ph6 / 5]]_9$  & $[[27, \ph8, 10 / 6]]_9$  & $[[27, 13, \ph6 / 6]]_9$           \\
  $(2,13,6,6)$  & $[[27, 10, \ph8 / 2]]_9$ & $[[27, 10, 14 / 2]]_9$ & $[[27, 16, \ph8 / 2]]_9$           \\
  $(2,13,7,4)$  & $[[27, 10, \ph7 / 3]]_9$ & $[[27, 10, 12 / 3]]_9$ & $[[27, 15, \ph7 / 3]]_9$           \\
  $(2,13,8,2)$  & $[[27, 10, \ph6 / 4]]_9$ & $[[27, 10, 10 / 4]]_9$ & $[[27, 15, \ph6 / 4]]_9$           \\
  $(2,13,7,5)$  & $[[27, 12, \ph7 / 2]]_9$ & $[[27, 12, 12 / 2]]_9$ & $[[27, 17, \ph7 / 2]]_9$           \\
  $(2,13,8,3)$  & $[[27, 12, \ph6 / 3]]_9$ & $[[27, 12, 10 / 3]]_9$ & $[[27, 17, \ph6 / 3]]_9$           \\
  $(2,13,9,1)$  & $[[27, 12, \ph5 / 4]]_9$ & $[[27, 12, \ph8 / 4]]_9$  & $[[27, 15, \ph6 / 4]]_9$           \\
  $(2,13,8,4)$  & $[[27, 14, \ph6 / 2]]_9$ & $[[27, 14, 10 / 2]]_9$ & $[[27, 19, \ph6 / 2]]_9$           \\
  $(2,13,9,2)$  & $[[27, 14, \ph5 / 3]]_9$ & $[[27, 14, \ph8 / 3]]_9$  & $[[27, 14, \ph8 / 3]]_9$           \\
  $(2,13,9,3)$  & $[[27, 16, \ph5 / 2]]_9$ & $[[27, 16, \ph8 / 2]]_9$  & $[[27, 19, \ph6 / 2]]_9$           \\
  $(2,13,10,1)$ & $[[27, 16, \ph4 / 3]]_9$ & $[[27, 17, \ph6 / 3]]_9$  & $[[27, 19, \ph4 / 3]]_9$           \\
  $(2,13,10,2)$ & $[[27, 18, \ph4 / 2]]_9$ & $[[27, 19, \ph6 / 2]]_9$  & $[[27, 21, \ph4 / 2]]_9$           \\
  $(2,13,11,1)$ & $[[27, 20, \ph3 / 2]]_9$ & $[[27, 21, \ph4 / 2]]_9$  & $[[27, 23, \ph3 / 2]]_9$\\
  \bottomrule
\end{tabular}
  \caption{Asymmetric quantum codes of length $27$ over $\mathbb{F}_9$. The first column states the parameters used in Theorem~\ref{thm:laGuardia} to obtain the codes in the second. If necessary, these have been padded with zeroes to obtain length $27$. The codes in the third and fourth columns are based on the construction in Sections~\ref{sec4.5} and \ref{sec:constr-with-relat}.}
  \label{tab:laGuardiaQ3}
\end{table}
\begin{example}
  By using different parameters in Theorems~\ref{thm:gghr24} and \ref{thm:gghr29} and padding with zeroes if
  necessary, we obtain the asymmetric quantum codes presented in
  Table~\ref{tab:gghr}. This table also shows comparable codes from
  the constructions in Sections~\ref{sec4.5} and \ref{sec:constr-with-relat} which have either $d_z$
  or $\ell$ maximized as in Example~\ref{ex:laGuardia}. From the table it is evident that the codes of the present paper perform
  very favourably.

  Again, we further note that all presented new codes exceed the Gilbert-Varshamov bound \cite[Thm. 4]{Matsumoto2017}, and that these codes cannot be constructed using information from the Goppa bound applied to nested one-point Hermitian codes.
  \begin{table}[p]
    \centering
    \scriptsize
\begin{tabular}{*{5}{c}}
  \toprule
  \multicolumn{3}{c}{\textbf{Constructions of \cite{8048519}}} & \multicolumn{2}{c}{\textbf{This paper}}\\
  \cmidrule(lr){1-3}\cmidrule(lr){4-5}
  \textbf{Type} & $\boldsymbol{(s,m)}$ & \textbf{Code} & \textbf{$\boldsymbol{d_z}$ maximized} & \textbf{$\boldsymbol{\ell}$ maximized}\\
  \midrule
  Thm.~\ref{thm:gghr29}& $(5,2)$ &$[[27, \ph1, 16 / 4]]_9$ & $[[27, \ph1, 20 / 4]]_9$  & $[[27, \ph4, 16 / 4]]_9$  \\
  Thm.~\ref{thm:gghr29}& $(5,4)$ &$[[27, \ph1, \ph9 / 9]]_9$  & $[[27, \ph1, 13 / 9]]_9$  & $[[27, \ph5, \ph9 / 9]]_9$   \\
  Thm.~\ref{thm:gghr29}& $(5,1)$ &$[[27, \ph2, 20 / 2]]_9$ & $[[27, \ph2, 23 / 2]]_9$  & $[[27, \ph4, 20 / 2]]_9$  \\
  Thm.~\ref{thm:gghr29}& $(5,3)$ &$[[27, \ph2, 12 / 6]]_9$ & $[[27, \ph2, 16 / 6]]_9$  & $[[27, \ph6, 12 / 6]]_9$  \\
  Thm.~\ref{thm:gghr29}& $(5,2)$ &$[[27, \ph3, 15 / 3]]_9$ & $[[27, \ph3, 19 / 3]]_9$  & $[[27, \ph7, 15 / 3]]_9$  \\
  Thm.~\ref{thm:gghr29}& $(5,4)$ &$[[27, \ph3, \ph8 / 8]]_9$  & $[[27, \ph3, 12 / 8]]_9$  & $[[27, \ph7, \ph8 / 8]]_9$   \\
  Thm.~\ref{thm:gghr29}& $(5,3)$ &$[[27, \ph4, 10 / 4]]_9$ & $[[27, \ph4, 16 / 4]]_9$  & $[[27, 10, 10 / 4]]_9$ \\
  Thm.~\ref{thm:gghr24}& $(5,2)$ &$[[27, \ph5, \ph7 / 1]]_9$  & $[[27, \ph5, 19 / 2]]_9$  & $[[27, 17, \ph7 / 2]]_9$  \\
  Thm.~\ref{thm:gghr29}& $(5,4)$ &$[[27, \ph5, \ph5 / 5]]_9$  & $[[27, \ph5, 13 / 6]]_9$  & $[[27, 13, \ph6 / 6]]_9$  \\
  Thm.~\ref{thm:gghr24}& $(5,2)$ &$[[27, \ph7, \ph6 / 1]]_9$  & $[[27, \ph7, 17 / 2]]_9$  & $[[27, 19, \ph6 / 2]]_9$  \\
  Thm.~\ref{thm:gghr24}& $(5,2)$ &$[[27, \ph7, \ph4 / 2]]_9$  & $[[27, \ph7, 17 / 2]]_9$  & $[[27, 21, \ph4 / 2]]_9$  \\
  Thm.~\ref{thm:gghr24}& $(5,2)$ &$[[27, \ph7, \ph3 / 3]]_9$  & $[[27, \ph7, 15 / 3]]_9$  & $[[27, 21, \ph3 / 3]]_9$  \\
  Thm.~\ref{thm:gghr24}& $(5,2)$ &$[[27, \ph8, \ph5 / 1]]_9$  & $[[27, \ph8, 16 / 2]]_9$  & $[[27, 19, \ph6 / 2]]_9$  \\
  Thm.~\ref{thm:gghr24}& $(5,2)$ &$[[27, \ph9, \ph3 / 2]]_9$  & $[[27, \ph9, 15 / 2]]_9$  & $[[27, 23, \ph3 / 2]]_9$  \\
  Thm.~\ref{thm:gghr24}& $(5,2)$ &$[[27, 10, \ph4 / 1]]_9$ & $[[27, 10, 14 / 2]]_9$ & $[[27, 21, \ph4 / 2]]_9$  \\
  Thm.~\ref{thm:gghr24}& $(5,2)$ &$[[27, 11, \ph2 / 2]]_9$ & $[[27, 11, 13 / 2]]_9$ & $[[27, 25, \ph2 / 2]]_9$  \\
  Thm.~\ref{thm:gghr24}& $(5,2)$ &$[[27, 12, \ph3 / 1]]_9$ & $[[27, 12, 12 / 2]]_9$ & $[[27, 23, \ph3 / 2]]_9$  \\
  Thm.~\ref{thm:gghr24}& $(5,2)$ &$[[27, 14, \ph2 / 1]]_9$ & $[[27, 14, 10 / 2]]_9$ & $[[27, 25, \ph2 / 2]]_9$  \\
  Thm.~\ref{thm:gghr24}& $(5,2)$ &$[[27, 17, \ph1 / 1]]_9$ & $[[27, 17, \ph7 / 2]]_9$  & $[[27, 25, \ph2 / 2]]_9$  \\
  \bottomrule
\end{tabular}
    \caption{Asymmetric quantum codes of length $27$ over $\mathbb{F}_9$. The first column indicates whether Theorem~\ref{thm:gghr24} or \ref{thm:gghr29} was used in the codes given in the third column. The second states the parameters used, except for those that can be read off directly from the code. The codes in the fourth and fifth columns are based on the construction in Sections~\ref{sec4.5} and \ref{sec:constr-with-relat}.}
    \label{tab:gghr}
  \end{table}
\end{example}
\begin{example}
  A few codes of length $8$ over $\mathbb{F}_4$ are given in \cite{Ezerman2015}. The construction in Section~\ref{sec4.5} can match -- but not improve on -- the codes  $[[8,1, 4/3]]_4$, $[[8,2,5/2]]$, $[[8,2,3/3]]_4$, $[[8,3,4/2]]_4$, and $[[8,4,3/2]]_4$. Additionally, \cite{Ezerman2015} presents a code with parameters $[[8,1,6/2]]_4$, where we can construct an $[[8,1,4/3]]_4$-code instead. All of these codes exceed the quantum Gilbert-Varshamov bound \cite[Thm.~4]{Matsumoto2017}, and the Goppa bound applied to nested one-point Hermitian codes cannot provide such parameters.

Some codes over $\mathbb{F}_9$ are presented as well. These codes do, however, have a length that is at least $36$.
\end{example}
When presenting constructions of codes, it is customary to compare it
to tables of `best known' linear codes such as
\cite{mint,tysker}. Unfortunately, similar tables do not exist for asymmetric quantum codes.
As we shall recall in a moment, however, one can still measure asymmetric quantum codes against the
usual bounds on linear codes.

Before doing so, we observe that the tables in \cite{tysker} only contains alphabets up to $\mathbb{F}_9$, whereas \cite{mint} has $\mathbb{F}_{256}$ as its largest alphabet.
As indicated by the following example, however, the latter tables are generally not as optimized as the ones by \cite{tysker}.
\begin{example}\label{ex:tableImprove}
  In some cases the improved codes $\tilde{E}(\delta)$ exceed the codes given by \cite{mint}. For instance, when considering codes over $\mathbb{F}_{16}$, the codes $\tilde{E}(12)$, $\tilde{E}(9)$, and $\tilde{E}(8)$ with parameters $[64,48,12]_{16}$, $[64,51,9]_{16}$, and $[64,53,8]_{16}$, respectively, all provide a minimal distance that is one higher than the corresponding code in the table.

  Over $\mathbb{F}_{25}$ the same is true for the codes $\tilde{E}(20)$, $\tilde{E}(16)$, and $\tilde{E}(12)$, which have parameters $[125,97,20]_{25}$, $[125,101,16]_{25}$, and $[125,106,12]_{25}$. Additionally, $\tilde{E}(15)$ has parameters $[125,103,15]_{25}$, exceeding the table distance by $2$.
\end{example}
Recall from Section~\ref{sec:constr-with-relat} that our second construction of nested code
pairs (which are code pairs of small codimension) gives impure
asymmetric quantum codes. This is already an advantage as the
error-correcting algorithms can take advantage of the impurity.
Another advantage of considering relative distances rather than only
minimal distances emerges when analysing the error-correcting ability
of asymmetric quantum codes. In order to illustrate this advantage, we
can compare nested codes from the construction in Section
\ref{sec:constr-with-relat} with pairs of best known linear codes from
the tables in \cite{tysker,mint}.
Note that the pairs of best known linear codes from such tables
generally do not result in nested code pairs; that is, they are not guaranteed to satisfy the requirement that the dual of one code is contained in the other. The comparison with tables of best known linear codes is done in the following example. Whenever the tables of \cite{mint} are considered, we will use the minimum distance of an improved algebraic geometric Goppa code from the Hermitian curve if this exceeds the table value as in Example~\ref{ex:tableImprove}.
\begin{example}
  Having fixed a code pair $C_2\subset C_1\subseteq\mathbb{F}_q^n$ of codimension $\ell$ and $d(C_1)=\delta_1$, we consider the greatest value $g(\ell,\delta)$ such that the tables of best known linear codes ensure the existence of $C,C'\subseteq\mathbb{F}_q^n$ with $\dim C-\dim C'=\ell$, $d(C)=\delta_1$, and $d(C'^\perp)\geq g(\ell,\delta_1)$. This is the same method as used in \cite{8048519}, and bears resemblance to the idea in \cite[Thm. 2]{Ezerman2015}. Using this procedure it is in no way guaranteed that $C'\subset C$. However, as shown in Table \ref{tab:ComparisonCodetables} the construction in Section \ref{sec:constr-with-relat} is in many cases on par with the best known codes, while simultaneously guaranteeing the inclusion $C_2\subset C_1$. In some cases the use of relative distances will even exceed the values obtained from the best known codes. As in the previous examples, the codes in Table~\ref{tab:ComparisonCodetables} all exceed the Gilbert-Varshamov bound for asymmetric quantum codes \cite[Thm.~4]{Matsumoto2017}.
\end{example}
\begin{table}[phb]
  \centering
  \newlength{\skipMidrule}
\setlength{\skipMidrule}{\aboverulesep+\lightrulewidth+\belowrulesep}
\footnotesize
\begin{minipage}{0.45\linewidth}
  \begin{tabular}{*{3}{c}}
    \toprule
    $\boldsymbol{(i,j)}$ & \textbf{Parameters} & $\boldsymbol{g(\ell,\delta_1)}$\\
    \midrule
    $(2,2)$ & $[[27, 1, 13 / 9]]_9$                 & $9$       \\
    $(1,1)$ & $[[27, 1, 20 / 4]]_9$                 & $4$                \\
    $(1,2)$ & $[[27, 2, 16 / 6]]_9$                 & $6$                \\
    $(0,1)$ & $[[27, 2, 23 / 2]]_9$                 & $2$                \\
    $(0,2)$ & $[[27, 3, 19 / 3]]_9$                 & $3$                \\
    \midrule
    $(3,3)$ & $[[64, 1, 37 / 16]]_{16}$             & $16$               \\
    $(2,2)$ & $\emphMath{[[64, 1, 46 / 9]]_{16}}$   & $8$                \\
    $(1,1)$ & $[[64, 1, 55 / \ph4]]_{16}$              & $4$                \\
    $(2,3)$ & $[[64, 2, 41 / 12]]_{16}$             & $12^\ast$           \\
    $(1,2)$ & $[[64, 2, 50 / \ph6]]_{16}$              & $6$                \\
    $(0,1)$ & $[[64, 2, 59 / \ph2]]_{16}$              & $2$                \\
    $(1,3)$ & $[[64, 3, 45 / \ph8]]_{16}$              & $8^\ast$            \\
    $(0,2)$ & $[[64, 3, 54 / \ph3]]_{16}$              & $3$                \\
    $(0,3)$ & $[[64, 4, 49 / \ph4]]_{16}$              & $5$                \\
    \bottomrule
  \end{tabular}
\end{minipage}
\begin{minipage}{0.45\linewidth}
  \begin{tabular}{*{3}{c}}
    \toprule
    $\boldsymbol{(i,j)}$ & \textbf{Parameters} & $\boldsymbol{g(\ell,\delta_1)}$\\
    \midrule
    $(4,4)$ & $[[125, 1, \ph81 / 25]]_{25}$            & $25$               \\ 
    $(3,3)$ & $\emphMath{[[125, 1, 92 / 16]]_{25}}$ & $14$               \\ 
    $(2,2)$ & $[[125, 1, 103 / \ph9]]_{25}$            & --                 \\ 
    $(1,1)$ & $[[125, 1, 114 / \ph4]]_{25}$            & --                 \\ 
    $(3,4)$ & $\emphMath{[[125, 2, 86 / 20]]_{25}}$ & $19$               \\ 
    $(2,3)$ & $\emphMath{[[125, 2, 97 / 12]]_{25}}$ & $10$               \\ 
    $(1,2)$ & $[[125, 2, 108 / \ph6]]_{25}$            & --                 \\ 
    $(0,1)$ & $[[125, 2, 119 / \ph2]]_{25}$            & --                 \\ 
    $(2,4)$ & $\emphMath{[[125, 3, 91 / 15]]_{25}}$ & $13$               \\ 
    $(1,3)$ & $[[125, 3, 102 / \ph8]]_{25}$            & --                 \\ 
    $(0,2)$ & $[[125, 3, 113 / \ph3]]_{25}$            & --                 \\ 
    $(1,4)$ & $[[125, 4, \ph96 / 10]]_{25}$            & $10$               \\ 
    $(0,3)$ & $[[125, 4, 107 / \ph4]]_{25}$            & --                 \\ 
    $(0,4)$ & $[[125, 5, 101 / \ph5]]_{25}$            & $6$                \\[\skipMidrule]
    \bottomrule
  \end{tabular}
\end{minipage}
  \caption{Comparing the asymmetric quantum codes from Section \ref{sec:constr-with-relat} with the best known codes. For $q=3$ \cite{tysker} is used, and for the remaining values of $q$, \cite{mint} is used. The codes marked in bold exceed $g(\ell,\delta_1)$, and the values of $g(\ell,\delta_1)$ marked with an asterisk stem from the improvements in Example \ref{ex:tableImprove}. A dash indicates that the tables do not contain enough information to determine $g(\ell,\delta_1)$.}
  \label{tab:ComparisonCodetables}
\end{table}
Turning to secret sharing schemes, \cite{cryptoeprint:2018:099}
presents a lower bound on the threshold gap $r-t$. That is, the authors bound the
smallest possible difference between the reconstruction number $r$ and
the privacy number $t$ for $q$-ary linear ramp secret sharing schemes
with $n$ shares and secrets from from
${\mathbb{F}}_q^\ell$. For linear ramp secret sharing schemes over ${\mathbb{F}}_{q^2}$, they show that
\begin{equation}\label{eq:cgrBound}
  r\geq t+\frac{(q^{2m}-1)(n+2)+(q^{2m+2}-q^{2m})(\ell-2m)}{q^{2m+2}-1}
\end{equation}
for every $m\in\{0,1,\ldots,\ell-1\}$. Of course, one should choose the $m$ that gives the best bound. Comparing the secret sharing schemes obtained in this paper with the bound \eqref{eq:cgrBound} helps to quantify how optimal the construction is. This is done in the following example, which also illustrates the advantage of using the improved codes from Section~\ref{sec4.5} and the improved information from Section~\ref{sec:constr-with-relat} rather than relying solely on the Goppa bound applied to nested one-point Hermitian codes.
\begin{figure}[hbt]
  \centering
  \begin{minipage}{0.46\linewidth}
    \centering
    \includegraphics[width=\textwidth]{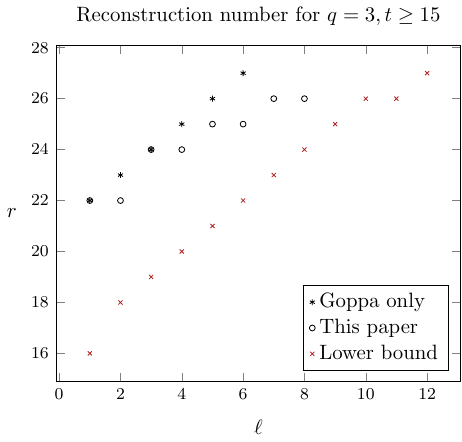}
  \end{minipage}
  \hfill%
  \begin{minipage}{0.46\linewidth}
    \centering
    \includegraphics[width=\textwidth]{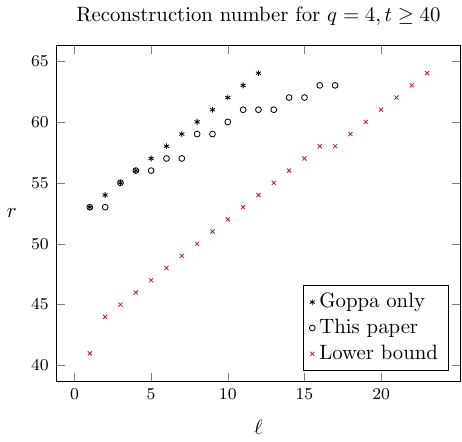}
  \end{minipage}
  \vglue 1em
  \begin{minipage}{0.46\linewidth}
    \centering
    \includegraphics[width=\textwidth]{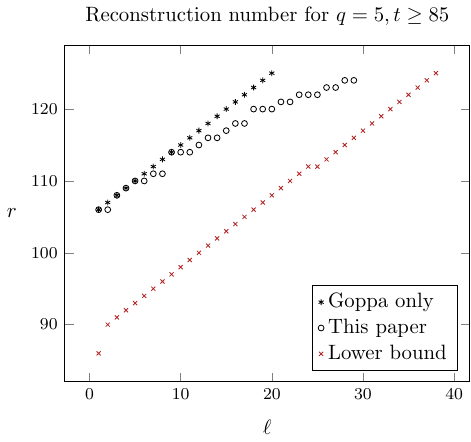}
  \end{minipage}
  \hfill%
  \begin{minipage}{0.46\linewidth}
    \centering
    \includegraphics[width=\textwidth]{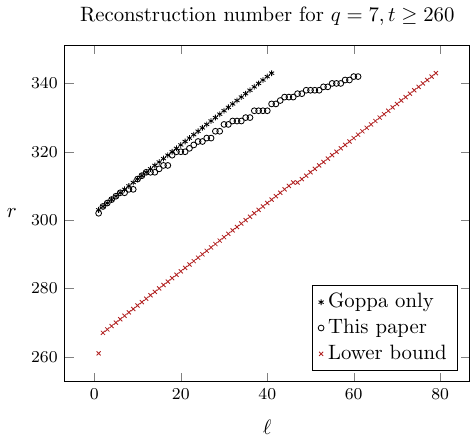}
  \end{minipage}
  \caption{The minimal achievable reconstruction number~$r$ given a desired privacy number~$t$. The plots show the constructions from Sections~\ref{sec4.5} and \ref{sec:constr-with-relat}, a construction using the Goppa bound only, and the lower bound from \eqref{eq:cgrBound}.}
  \label{fig:thresholdGap}
\end{figure}
\begin{example}
  In each of the plots in Figure \ref{fig:thresholdGap}, a desired privacy number~$t$ has been fixed.
  For each codimension the plots then show the minimal reconstruction number~$r$ achievable with the constructions from Sections~\ref{sec4.5} and \ref{sec:constr-with-relat} when privacy number at least $t$ is required. The plots also show the lower bound in \eqref{eq:cgrBound}.
  
  Recall that the codes under consideration have length $q^3$, meaning that the four corresponding secret sharing schemes support $27$, $64$, $125$, and $343$ participants, respectively.
  As the plots demonstrate, the constructions of this paper provide secret sharing schemes with parameters that could not be obtained by using nested Hermitian one-point codes and the Goppa bound alone.

  We remark that the four given examples use a relatively large privacy parameter~$t$. Yet, it is also possible to obtain improved reconstruction numbers for small values of $t$.
\end{example}

\section{Concluding remarks}\label{sec:concl}
In this paper we presented two improved constructions of nested code
pairs from the Hermitian curve, and gave a detailed analysis of their performance when applied to the concepts of secret sharing and asymmetric quantum codes. Regarding information leakage in secret sharing we studied the
reconstruction number $r$ and the privacy number $t$, which give
information on full recovery and full privacy, respectively. We note
that it is
possible to obtain information about partial information leakage by
studying relative generalized Hamming weights rather than just
relative minimum
distances~\cite{kurihara,geil2014relative}. For asymmetric quantum
codes we applied the CSS construction.
Applying the method of Steane's enlargements~\cite{MR1725135} is a future research agenda.

\section*{Acknowledgements}
The authors would like to thank Ignacio Cascudo for helpful discussions.

\appendix
\renewcommand{\thesection}{Appendix~\Alph{section}}
\section{\texorpdfstring{Additional results on $\sigma$ and $\mu$}{Additional results on sigma and mu}}\label{app:addit-results-sigma}
In this section we state a number of lemmas that are needed in Sections~\ref{sec:dimImprovedCodes} and \ref{sec:inclusion}. The lemmas all follow as corollaries to Proposition~\ref{prober}.
To aid the reader in understanding them more easily, we first give an example for reference.
\begin{example}
  In Table~\ref{tab1} we list $H^\ast(Q)$, $\sigma(H^\ast(Q))$, and
  $\mu(H^\ast(Q))$ for the Hermitian function field over
  ${\mathbb{F}}_{16}$, i.e.\ for $q=4$. Entries are ordered according to
  $(i,j)$ where $\lambda=iq+j(q+1)$. 
  \begin{table}
    \centering
    {\footnotesize
        \begin{tabular}{*{16}{c}}
          15&19&23&27&31&35&39&43&47&51&55&59&63&67&71&75\\
          10&14&18&22&26&30&34&38&42&46&50&54&58&62&66&70\\
          5&9&13&17&21&25&29&33&37&41&45&49&53&57&61&65\\
          0&4&8&12&16&20&24&28&32&36&40&44&48&52&56&60\\
          \ \\
          49 & 45 & 41 & 37 & 33 & 29 & 25 & 21 & 17 & 13 & 9  & 5  & 4  & 3  & 2 & 1\\
          54 & 50 & 46 & 42 & 38 & 34 & 30 & 26 & 22 & 18 & 14 & 10 & 8  & 6  & 4 & 2\\
          59 & 55 & 51 & 47 & 43 & 39 & 35 & 31 & 27 & 23 & 19 & 15 & 12 & 9  & 6 & 3\\ 
          64 & 60 & 56 & 52 & 48 & 44 & 40 & 36 & 32 & 28 & 24 & 20 & 16 & 12 & 8 & 4\\
          \ & \\
          4 & 8 & 12 & 16 & 20 & 24 & 28 & 32 & 36 & 40 & 44 & 48 & 52 & 56 & 60 & 64\\
          3 & 6 & 9  & 12 & 15 & 19 & 23 & 27 & 31 & 35 & 39 & 43 & 47 & 51 & 55 & 59\\
          2 & 4 & 6  & 8  & 10 & 14 & 18 & 22 & 26 & 30 & 34 & 38 & 42 & 46 & 50 & 54\\
          1 & 2 & 3  & 4  & 5  & 9  & 13 & 17 & 21 & 25 & 29 & 33 & 37 & 41 & 45 & 49
        \end{tabular}}
    \caption{Upper table: $H^\ast(Q)$. Middle table: $\sigma (H^\ast(Q))$.
        Lower table: $\mu(H^\ast(Q))$}
    \label{tab1}
  \end{table}
\end{example}
The first lemma explains when~(\ref{eqprord}) equals the Goppa bound
and when it is sharper.
\begin{lemma}\label{corstronger}
  For all $\lambda =iq+j(q+1) \in H^\ast(Q)$ it holds that $\sigma(\lambda) \geq
  n-\lambda$ where $n=q^3$. The inequality is strict if and
  only if $q^2-q \leq i < q^2$ and  $1\leq j <q\}$.
\end{lemma}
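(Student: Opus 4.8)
The plan is to read everything off the explicit formula for $\sigma$ recorded in Proposition~\ref{prober}, splitting according to the two cases $0 \le i < q^2-q$ and $q^2-q \le i \le q^2-1$ that already appear there. Writing $\lambda = iq + j(q+1)$, the target quantity is $n - \lambda = q^3 - iq - j(q+1)$, so the whole statement reduces to comparing each branch of the formula for $\sigma(\lambda)$ against this single expression.

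First I would dispatch the range $0 \le i < q^2-q$. Here the first branch of~(\ref{eqxpression}) gives $\sigma(\lambda) = q^3 - iq - j(q+1) = n - \lambda$ verbatim, so equality holds throughout this region and there is nothing further to check for the inequality. This is in fact the same lower bound $\sigma(\lambda) \ge q^3 - \lambda$ that was obtained by the semigroup counting argument inside the proof of Proposition~\ref{prober}, so one could alternatively invoke that argument to secure $\ge$ for \emph{all} $\lambda$ simultaneously and then only worry about locating the equality cases.

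The only real content lies in the range $q^2-q \le i \le q^2-1$, where the second branch gives $\sigma(\lambda) = (q^2-i)(q-j)$. Here I would simply expand the difference
\[
\sigma(\lambda) - (n-\lambda) = (q^2-i)(q-j) - \big(q^3 - iq - j(q+1)\big) = j\,(i + q + 1 - q^2).
\]
Since $i \ge q^2-q$ on this range, the factor $i + q + 1 - q^2 \ge 1$ is strictly positive, so the sign of the difference is governed entirely by $j$: the difference is $\ge 0$, with equality exactly when $j = 0$.

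Combining the two ranges, $\sigma(\lambda) \ge n-\lambda$ holds everywhere, and the inequality is strict precisely when we sit in the second range with $j \ne 0$, that is, when $q^2-q \le i < q^2$ and $1 \le j < q$, which is the claimed condition. I anticipate no genuine obstacle; the single point to handle with care is confirming that $i + q + 1 - q^2 > 0$ across the \emph{whole} second range (it attains its minimum value $1$ at $i = q^2-q$ and grows to $q$ at $i = q^2-1$), so that the equality case is driven solely by $j$ and cannot slip in at some interior value of $i$.
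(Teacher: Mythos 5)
Your proof is correct and follows the route the paper intends: Lemma~\ref{corstronger} is stated in the appendix as a direct corollary of the explicit formula~(\ref{eqxpression}) in Proposition~\ref{prober}, and your case split with the computation $\sigma(\lambda)-(n-\lambda)=j\,(i+q+1-q^2)\geq j$ on the range $q^2-q\leq i\leq q^2-1$ is exactly the verification the authors leave to the reader. No discrepancies.
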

The next five lemmas give information on the relation between the values
$\sigma(iq+j(q+1))$ and $\sigma(i^\prime q+j^\prime (q+1))$ for
different constellations of $i, j, i^\prime, j^\prime$. Using the
translation from $\sigma$ to $\mu$ as given in
Proposition~\ref{prober}, this simultaneously implies relations on $\mu$. 
\begin{lemma}\label{lem1}
  For $0 < i \leq q^2-q-1$ and $0 \leq j < q-1$ it holds that
  $\sigma(iq+j(q+1))=\sigma ((i-1)q+(j+1)(q+1))+1$. Furthermore, for $0 \leq i
  \leq q^2-q-1$ it holds that $\sigma(iq+(q-1)(q+1))=\sigma((i+q)q)+1$. 
\end{lemma}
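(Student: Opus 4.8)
The plan is to derive both identities directly from the closed-form expression for $\sigma$ established in Proposition~\ref{prober}, selecting in each case the appropriate branch of~\eqref{eqxpression} for each argument and then subtracting.

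For the first identity I would observe that, since $0<i\leq q^2-q-1$, both $i$ and $i-1$ lie strictly below $q^2-q$, while the hypothesis $j<q-1$ ensures that $j+1$ is still an admissible second coordinate; hence both $iq+j(q+1)$ and $(i-1)q+(j+1)(q+1)$ fall under the first branch $q^3-iq-j(q+1)$. Substituting and subtracting, the $q^3$ terms cancel: decrementing the first coordinate raises the value by $q$, whereas incrementing the second coordinate lowers it by $q+1$, for a net change of $-1$. This is exactly $\sigma(iq+j(q+1))=\sigma((i-1)q+(j+1)(q+1))+1$.

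For the second identity I would again place the left-hand argument $iq+(q-1)(q+1)$ in the first branch (as $i\leq q^2-q-1<q^2-q$), obtaining $q^3-iq-q^2+1$. The subtlety lies in $\sigma((i+q)q)$: depending on whether $i<q^2-2q$ or $q^2-2q\leq i\leq q^2-q-1$, the index $i+q$ falls in the first or the second branch of~\eqref{eqxpression}, so this boundary split is the step I expect to need the most care. The observation that resolves it is that the relevant second coordinate is $j=0$, for which the two branches coincide, since $(q^2-(i+q))(q-0)=q^3-(i+q)q=q^3-iq-q^2$ agrees with the first-branch value. Thus $\sigma((i+q)q)=q^3-iq-q^2$ regardless of the branch, and subtracting from the left-hand value once more yields $+1$.

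Finally, I would remark that this agreement of the two branches at $j=0$ is precisely what makes~\eqref{eqxpression} consistent across the transition $i=q^2-q$, so no separate treatment of the boundary point is required. The whole argument is then a short computation once the correct branches are identified, the only genuine point being the uniform evaluation of the right-hand argument in the second identity.
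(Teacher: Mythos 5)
Your proof is correct and follows exactly the route the paper intends: the appendix states that these lemmas "all follow as corollaries to Proposition~\ref{prober}", and your case analysis of the branches of \eqref{eqxpression} — including the key observation that the two branches agree when the second coordinate is $0$, so that $\sigma((i+q)q)=q^3-iq-q^2$ regardless of whether $i+q$ crosses the threshold $q^2-q$ — is precisely the computation being left to the reader. No gaps.
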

\begin{lemma}\label{lem:sequence}
  The sequence
  \begin{align*}
    \big(\sigma(0\cdot &q),  \sigma(q), \ldots , \sigma((q^2-1)q),
                         \sigma((q^2-1)q+(q+1)),\\
                       &\sigma((q^2-1)q+2(q+1)), \ldots, \sigma((q^2-1)q+(q-1)(q+1))\big)
  \end{align*}
  is strictly decreasing.
\end{lemma}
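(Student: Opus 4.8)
The plan is to evaluate $\sigma$ directly on the listed elements by means of the closed formula \eqref{eqxpression} in Proposition~\ref{prober}, splitting the sequence into its two natural segments and then checking the single junction between them.

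First I would handle the initial segment, the elements $iq$ (so $j=0$) for $0\le i\le q^2-1$. The useful observation is that the two branches of \eqref{eqxpression} agree when $j=0$: for $q^2-q\le i\le q^2-1$ one has $(q^2-i)(q-0)=q^3-iq$, which is precisely the value produced by the first branch. Hence $\sigma(iq)=q^3-iq$ holds uniformly across the whole segment, so consecutive terms differ by exactly $q$ and the first $q^2$ terms form the strictly decreasing progression $q^3,\,q^3-q,\,\ldots,\,2q,\,q$.

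Next I would treat the terminal segment, the elements $(q^2-1)q+j(q+1)$ for $1\le j\le q-1$. Here $i=q^2-1$ lies in the range governed by the second branch of \eqref{eqxpression}, which yields $\sigma\big((q^2-1)q+j(q+1)\big)=(q^2-(q^2-1))(q-j)=q-j$. These terms are therefore $q-1,\,q-2,\,\ldots,\,1$, again strictly decreasing.

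It then remains to verify the junction: the last term of the first segment is $\sigma((q^2-1)q)=q$ and the first term of the second is $\sigma((q^2-1)q+(q+1))=q-1$, so $q>q-1$ as needed. Concatenating, the full sequence is $q^3,\,q^3-q,\,\ldots,\,2q,\,q,\,q-1,\,\ldots,\,2,\,1$, which is strictly decreasing. There is no genuine obstacle here; the only points requiring attention beyond routine substitution are confirming that the two branches of the piecewise formula coincide at $j=0$ (so the first segment passes through the regime change at $i=q^2-q$ without a break in the arithmetic progression) and ensuring the boundary element $(q^2-1)q$ is counted only once. Both are immediate from \eqref{eqxpression}.
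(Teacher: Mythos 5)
Your proof is correct and matches the paper's intent: the paper offers no separate argument for this lemma, stating only that the appendix lemmas ``all follow as corollaries to Proposition~\ref{prober}'', and your direct evaluation of both segments via \eqref{eqxpression} (including the check that the two branches agree at $j=0$ and the verification of the junction at $(q^2-1)q$) is exactly that routine derivation carried out in full.
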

\begin{lemma}\label{lem:symmetricCorner}
  We have
  $\sigma((q^2-q+s)q+t(q+1))=\sigma((q^2-q+t)q+s(q+1))$ for $0 \leq s, t<q-1$.
\end{lemma}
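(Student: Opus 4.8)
The plan is to reduce the claim directly to the closed form for $\sigma$ recorded in~\eqref{eqxpression} of Proposition~\ref{prober}. First I would verify that both arguments genuinely lie in $H^\ast(Q)$ and that their leading coefficients fall into the corner regime $q^2-q\le i\le q^2-1$ in which the product formula is the relevant branch. For the first argument we have $i=q^2-q+s$ and $j=t$; the hypotheses $0\le s,t<q-1$ give $q^2-q\le i\le q^2-2<q^2$ and $0\le j\le q-2<q$, so by~\eqref{eq:hAstStructure} the element is indeed a member of $H^\ast(Q)$ and the second line of~\eqref{eqxpression} applies. The second argument $(q^2-q+t)q+s(q+1)$ is handled identically, with the roles of $s$ and $t$ interchanged.

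Next I would simply substitute into the formula. The corner expression $(q^2-i)(q-j)$ evaluated at $i=q^2-q+s$, $j=t$ yields $(q-s)(q-t)$, whereas evaluating at $i=q^2-q+t$, $j=s$ yields $(q-t)(q-s)$. Since multiplication in $\mathbb{Z}$ is commutative, these two products coincide, which is precisely the asserted equality.

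There is no real obstacle here: the entire content of the lemma is that the product formula $(q^2-i)(q-j)$ is symmetric under the swap $\big(i-(q^2-q),\,j\big)\leftrightarrow\big(j,\,i-(q^2-q)\big)$, and this symmetry becomes manifest once the two factors $q-s$ and $q-t$ are isolated. The only point demanding a line of care is confirming the index bounds so that~\eqref{eqxpression} legitimately applies on both sides, and the assumption $s,t<q-1$ secures this with room to spare.
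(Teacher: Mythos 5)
Your proof is correct and matches the paper's intent: the appendix lemmas are stated as direct corollaries of Proposition~\ref{prober}, and your argument is exactly the intended one, namely checking that both elements fall in the corner regime $q^2-q\le i\le q^2-1$ of~\eqref{eqxpression} and observing that the product $(q-s)(q-t)$ is symmetric in $s$ and $t$. The verification of the index bounds is the only point requiring care, and you handle it correctly.
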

\begin{lemma}\label{lem:improvedCorner}
  Given $q^2-q \leq i \leq q^2-1$ then for non-negative $s$ such that
  $q^2-q\leq i-s$ we have $\sigma((i-s)q+s(q+1)) \geq \sigma
  (iq)$. Similarly, given $0 \leq j \leq q^2-1$ then for non-negative
  $s$ such that $j+s \leq q-1$ we have $\sigma((q^2-1-s)q+(j+s)(q+1)) \geq \sigma((q^2-1)q+j(q+1))$.
\end{lemma}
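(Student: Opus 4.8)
The plan is to apply directly the closed-form expression for $\sigma$ in the corner region supplied by Proposition~\ref{prober}, namely $\sigma(iq+j(q+1)) = (q^2-i)(q-j)$ whenever $q^2-q \leq i \leq q^2-1$, and then to reduce each of the two inequalities to an elementary single-variable factorisation whose sign is controlled exactly by the stated hypothesis.

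For the first claim I would first confirm that both arguments lie in the corner region, so that the product formula applies to each. The element $iq$ qualifies since $q^2-q \leq i \leq q^2-1$, and $(i-s)q+s(q+1)$ qualifies since $i-s \geq q^2-q$ by hypothesis, while the estimate $s \leq i-(q^2-q) \leq q-1$ simultaneously guarantees that it is a legitimate element of $H^\ast(Q)$. Writing $m = q^2-i$ (so that $1 \leq m \leq q$), the formula converts the desired inequality $\sigma((i-s)q+s(q+1)) \geq \sigma(iq)$ into $(m+s)(q-s) \geq mq$, which expands to $s(q-m-s) \geq 0$. Since $s \geq 0$, this holds precisely when $s \leq q-m = i-(q^2-q)$, and this is exactly the hypothesis $q^2-q \leq i-s$ rewritten.

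For the second claim I would proceed identically. Here the relevant index is $i' = q^2-1-s$, and $i' \geq q^2-q$ follows because $s \leq q-1$ (forced by $j+s \leq q-1$), so the product formula again applies to both sides, yielding $\sigma((q^2-1-s)q+(j+s)(q+1)) = (1+s)(q-j-s)$ and $\sigma((q^2-1)q+j(q+1)) = q-j$. The claim thus reduces to $(1+s)(q-j-s) \geq q-j$, which factors as $s(q-j-1-s) \geq 0$; for $s \geq 0$ this is equivalent to $s \leq q-j-1$, i.e.\ to the hypothesis $j+s \leq q-1$.

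Because both parts collapse to the same kind of quadratic inequality in the single parameter $s$, I do not anticipate a genuine obstacle. The only points that require care are verifying that each argument falls in the corner branch $q^2-q \leq i \leq q^2-1$ of the $\sigma$ formula (and is a genuine element of $H^\ast(Q)$), and checking that the sign condition emerging from the factorisation coincides \emph{exactly} with the hypothesis rather than being merely implied by it. This exact correspondence is what makes the bound sharp and is the step I would write out most carefully.
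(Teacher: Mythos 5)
Your proposal is correct: both inequalities reduce, via the corner-region formula $\sigma(iq+j(q+1))=(q^2-i)(q-j)$ of Proposition~\ref{prober}, to $s(q-m-s)\geq 0$ and $s(q-j-1-s)\geq 0$ respectively, and your verification that each argument genuinely lies in the branch $q^2-q\leq i\leq q^2-1$ (and in $H^\ast(Q)$) is exactly the care the statement requires. The paper leaves this lemma as an unproved corollary of Proposition~\ref{prober}, and your computation is precisely the intended derivation, so there is nothing further to compare.
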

\begin{lemma}\label{lem:smallSigma}
  If $\sigma(iq+j(q+1))\leq q$ then $q^2-q \leq i < q^2$.
\end{lemma}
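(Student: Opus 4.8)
The plan is to argue by contraposition. The upper bound $i < q^2$ is immediate from the defining description of $H^\ast(Q)$ in \eqref{eq:hAstStructure}, so the genuine content is the lower bound $q^2-q \le i$. I will therefore assume $0 \le i < q^2-q$ and show that this already forces $\sigma(iq+j(q+1)) \ge q+1$, which contradicts the hypothesis $\sigma(iq+j(q+1)) \le q$.

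For indices in this range the first branch of \eqref{eqxpression} in Proposition~\ref{prober} applies, giving $\sigma(iq+j(q+1)) = q^3 - iq - j(q+1)$. This expression is strictly decreasing in both $i$ and $j$ (the coefficients $-q$ and $-(q+1)$ are negative), so its minimum over the admissible box $0 \le i \le q^2-q-1$, $0 \le j \le q-1$ is attained at the largest admissible values $i = q^2-q-1$ and $j = q-1$. A direct substitution yields $iq+j(q+1) = (q^2-q-1)q + (q-1)(q+1) = q^3 - q - 1$, whence $\sigma(iq+j(q+1)) \ge q^3 - (q^3-q-1) = q+1$.

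This is essentially the entire argument: since $q+1 > q$ contradicts the standing assumption, we cannot have $i < q^2-q$, and the lemma follows. I do not expect any real obstacle here, as the reasoning reduces to a single corner evaluation of a linear function. The one point requiring care is to confirm that the extremal corner $(i,j) = (q^2-q-1,\, q-1)$ indeed lies in the first branch of the piecewise formula — which it does, since $i = q^2-q-1 < q^2-q$ — so that the simple linear expression for $\sigma$ remains valid throughout the maximization and no interaction with the second branch $(q^2-i)(q-j)$ needs to be considered.
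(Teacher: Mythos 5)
Your proof is correct and matches the paper's intended derivation: the appendix lemmas are all stated as corollaries of Proposition~\ref{prober}, and your contrapositive argument — reading off $\sigma(iq+j(q+1))=q^3-iq-j(q+1)$ from the first branch of \eqref{eqxpression} for $i<q^2-q$ and minimizing at the corner $(q^2-q-1,\,q-1)$ to get $\sigma\geq q+1$ — is exactly that. The corner evaluation $q^3-(q^3-q-1)=q+1$ checks out, and the bound $i<q^2$ is indeed immediate from \eqref{eq:hAstStructure}.
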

Finally, we present a lemma on the relation between $\sigma(\lambda)$ and
$\mu(\lambda)$ for $\lambda$ belonging to a certain window. 
\begin{lemma}\label{lem6}
  For $\lambda=iq+j(q+1) \in H^\ast(Q)$ with $q \leq i <q^2-q$ and $j$ arbitrary; or $q^2-q
  < i \leq q^2-1$ and $j=0$; or $0 \leq i < q$ and $j=q-1$, we
  have $\mu(\lambda)+\sigma(\lambda)=q^3-(q^2-q-1)$.
\end{lemma}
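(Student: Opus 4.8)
The plan is to reduce the whole statement to the single reflection identity of Proposition~\ref{prober} together with the equality clause of Lemma~\ref{corstronger}. Writing $\lambda=iq+j(q+1)$, I first introduce its reflected partner $\lambda^\dagger=(q^2-1-i)q+(q-1-j)(q+1)$, which again lies in $H^\ast(Q)$ and satisfies $\mu(\lambda)=\sigma(\lambda^\dagger)$ by Proposition~\ref{prober}. A one-line tally of the coefficients gives the case-independent identity $\lambda+\lambda^\dagger=(q^2-1)q+(q-1)(q+1)=q^3+q^2-q-1$. Since Lemma~\ref{corstronger} yields $\sigma(\lambda)\geq q^3-\lambda$ and, applied to $\lambda^\dagger$, $\mu(\lambda)=\sigma(\lambda^\dagger)\geq q^3-\lambda^\dagger$, adding the two bounds shows that for \emph{every} $\lambda\in H^\ast(Q)$ one has $\mu(\lambda)+\sigma(\lambda)\geq 2q^3-(\lambda+\lambda^\dagger)=q^3-(q^2-q-1)$. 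Thus the asserted equality is equivalent to both order bounds being simultaneously tight, and the whole task becomes identifying when this happens.

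Next I would read off the tightness conditions. By Lemma~\ref{corstronger}, $\sigma(\lambda)=q^3-\lambda$ fails only when $q^2-q\leq i$ and $1\leq j$; equivalently, $\sigma(\lambda)=q^3-\lambda$ holds precisely when $i<q^2-q$ or $j=0$. Applying the same lemma to $\lambda^\dagger$, whose parameters are $\hat\imath=q^2-1-i$ and $\hat\jmath=q-1-j$, translates its tightness condition $\hat\imath<q^2-q$ or $\hat\jmath=0$ into $i\geq q$ or $j=q-1$; this is exactly the $\mu$-reflection of Lemma~\ref{corstronger} already used in the proof of Lemma~\ref{lem:muOrderedPartial}.

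The remaining step is to verify that each of the three windows in the statement forces both tightness conditions at once. For $q\leq i<q^2-q$ with $j$ arbitrary, $i\geq q$ supplies the $\mu$-condition and $i<q^2-q$ supplies the $\sigma$-condition. For $q^2-q<i\leq q^2-1$ with $j=0$, the estimate $i>q^2-q\geq q$ gives the $\mu$-condition while $j=0$ gives the $\sigma$-condition. For $0\leq i<q$ with $j=q-1$, the value $j=q-1$ gives the $\mu$-condition and $i<q\leq q^2-q$ gives the $\sigma$-condition. In every case both Goppa equalities hold, so the inequality above is an equality and $\mu(\lambda)+\sigma(\lambda)=q^3-(q^2-q-1)$, as claimed.

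I do not expect a genuine obstacle here: once the reflection identity and the equality clause of Lemma~\ref{corstronger} are in place, the argument is pure bookkeeping over three ranges. The only point requiring care is confirming that the three windows sit exactly inside the region where \emph{both} the primary and the reflected dual order bound collapse to the Goppa bound, and that the boundary inequalities hold uniformly in $q$ (for instance $q\leq q^2-q$, which reduces to $q\geq 2$, and the degenerate case $q=2$ where the first window is empty). A direct substitution into the piecewise formula~\eqref{eqxpression} gives an independent cross-check of the three computations if one prefers to avoid the reflection argument.
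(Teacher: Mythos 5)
Your argument is correct and follows the route the paper intends: the appendix gives no explicit proof, stating only that these lemmas follow as corollaries of Proposition~\ref{prober}, and your derivation does exactly that via the reflection $\mu(\lambda)=\sigma\bigl((q^2-1-i)q+(q-1-j)(q+1)\bigr)$, the identity $\lambda+\lambda^\dagger=q^3+q^2-q-1$, and the equality clause of Lemma~\ref{corstronger} (whose reflected form is the one already used in the proof of Lemma~\ref{lem:muOrderedPartial}). The case checks on the three windows are all sound, and a spot check against Table~\ref{tab1} for $q=4$ confirms the constant $q^3-(q^2-q-1)$.
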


\section*{References}\renewcommand{\chapter}[2]{\footnotesize}
\bibliography{bibfile}
\bibliographystyle{plainurl}

\end{document}